\newcommand*\circled[1]{\tikz[baseline=(char.base)]{
             \node[shape=circle,draw,inner sep=2pt] (char) {{#1}};}}
\theoremstyle{plain}
\newtheorem*{theorem*}{Theorem}
\newtheorem{prb}{Theorem}
\newtheorem{problem}[prb]{Problem}
\newtheorem{question}[prb]{Question}
\newtheorem{conjecture}[prb]{Conjecture}
\newtheorem{thm}{Theorem}[section]
\newtheorem{theorem}[thm]{Theorem}
\newtheorem{lemma}[thm]{Lemma}
\newtheorem{proposition}[thm]{Proposition}
\newtheorem*{proposition*}{Proposition}
\newtheorem{definition}[thm]{Definition}
\newtheorem*{claim*}{Claim}
\theoremstyle{remark}
\theoremstyle{remark}
\DeclareMathOperator{\polylog}{polylog}
\DeclareMathOperator{\avg}{avg}
\DeclareMathOperator{\margin}{m}    
\DeclareMathOperator{\amargin}{\margin^{\avg}}
\DeclareMathOperator{\VC}{VC}    
\DeclareMathOperator{\sign}{sgn}
\DeclareMathOperator{\blockyrank}{{\mathrm{rk}_\mathpzc{Blocky}}}
\newcommand{\zo}{\{0,1\}}
\newcommand{\Pcc}{\mathsf{P}}
\newcommand{\PPcc}{\mathsf{PP}}
\newcommand{\SINK}{\mathtt{SINK}}
\newcommand{\rect}{\mathrm{rect}}
\newcommand{\wrc}{\mathrm{wrect}}
\newcommand{\Rcc}{\mathsf{R}}  
\newcommand{\Ncc}{\mathsf{N}}
\newcommand{\EQ}{\mathtt{EQ}}
\newcommand{\IP}{\mathtt{IP}}                       
\newcommand{\GT}{\mathtt{GT}}
\newcommand{\BPP}{\mathsf{BPP}}
\newcommand{\ZPP}{\mathsf{ZPP}}
\newcommand{\RP}{{\mathsf{RP}}}
\newcommand{\UPP}{\mathsf{UPP}}
\newcommand{\NP}{{\mathsf{NP}}}
\newcommand{\coNP}{\mathsf{coNP}}
\newcommand{\coRP}{\mathsf{coRP}}
\renewcommand{\norm}[1]{\|#1\|}                     
\newcommand{\inp}[2]{\left\langle#1,#2\right\rangle}            
\def\rank{{\mathrm{rk}}}                               
\def\m{\mathbf{m}}
\newcommand{\cc}{{\mathsf{cc}}}
\newcommand{\C}{\mathbb{C}}
\newcommand{\F}{\mathbb{F}}
\newcommand{\N}{\mathbb{N}}
\newcommand{\Q}{\mathbb{Q}}
\newcommand{\R}{\mathbb{R}}
\newcommand{\Z}{\mathbb{Z}}
\newcommand{\cX}{\mathcal X}
\newcommand{\cY}{\mathcal Y}
\DeclareMathAlphabet{\mathpzc}{OT1}{pzc}{m}{it}
\DeclareMathOperator{\UU}{\mathsf{U}}
\newcommand{\defeq}{\coloneqq}
\def\DD{\mathsf{D}}     
\def\Ncc{\mathsf{N}}
\def\wrect{\mathrm{wrect}}
\def\Blocky{\mathpzc{Blocky}}
\def\intr{\mathtt{INT}} 
\def\I{\mathtt{I}}
\def\J{\mathtt{J}}
\def\Q{\mathtt{Q}}
\def\1{\mathbf{1}} 
\def\0{\mathbf{0}}
\let\latexchi\chi
\renewcommand\chi{\@ifnextchar_\sub@chi\latexchi}
\newcommand{\sub@chi}[2]{
  \@ifnextchar^{\subsup@chi{#2}}{\latexchi^{}_{#2}}%
}
\newcommand{\subsup@chi}[3]{
  \latexchi_{#1}^{#3}%
}
\def\Rect{\mathsf{Rect}}      
\def\Blocky{\mathpzc{Blocky}}
\DeclareMathOperator{\IIP}{\mathtt{IIP}}          
\newcommand{\PL}{\hbox{\rm\small PL}}
\newtcolorbox{updatebox}[1][Update:]{
  breakable,
  colback=yellow!10,
  colbacktitle=yellow!10,
  colframe=black!60,
  coltext=black,
  boxrule=0.5pt,
  arc=0pt,
  left=6pt,
  right=6pt,
  top=5pt,
  bottom=5pt,
  before skip=8pt,
  after skip=8pt,
  fontupper=\normalfont,
  title={#1},
  coltitle=black!80,
  fonttitle=\footnotesize\bfseries,
  titlerule=0pt,
  toptitle=1.5pt,
  bottomtitle=1.5pt,
  lefttitle=6pt,
  righttitle=6pt,
}
\date{}
\begin{document}

\begin{center}
{\large SIGACT News Complexity Theory Column, March 2024\\ \medskip \bf \Large
 Structure in Communication Complexity and Constant-Cost Complexity Classes}\\
%
\bigskip 

{\Large  
Hamed Hatami}\footnote{McGill University, Montreal, QC, Canada \@. {\tt hatami@cs.mcgill.ca}. 
Supported by an NSERC grant.}\ \ \ \ 
{\Large  
Pooya Hatami}\footnote{The Ohio State University, Columbus, OH, USA\@. {\tt pooyahat@gmail.com}. Supported by NSF grant
CCF-1947546.
}
\bigskip 

\includegraphics[height=37mm]{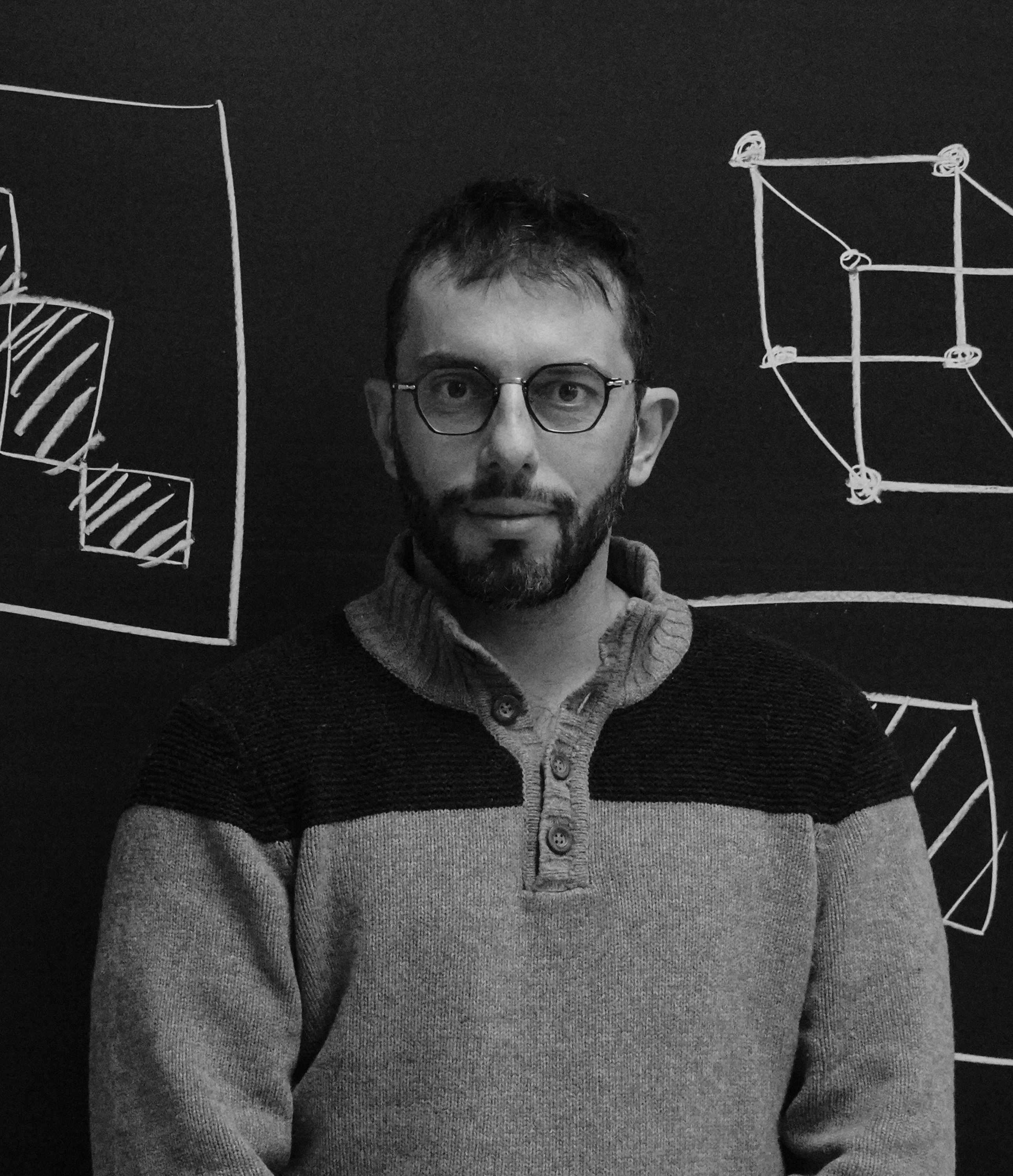}\hspace{.1in}\includegraphics[height=37mm]{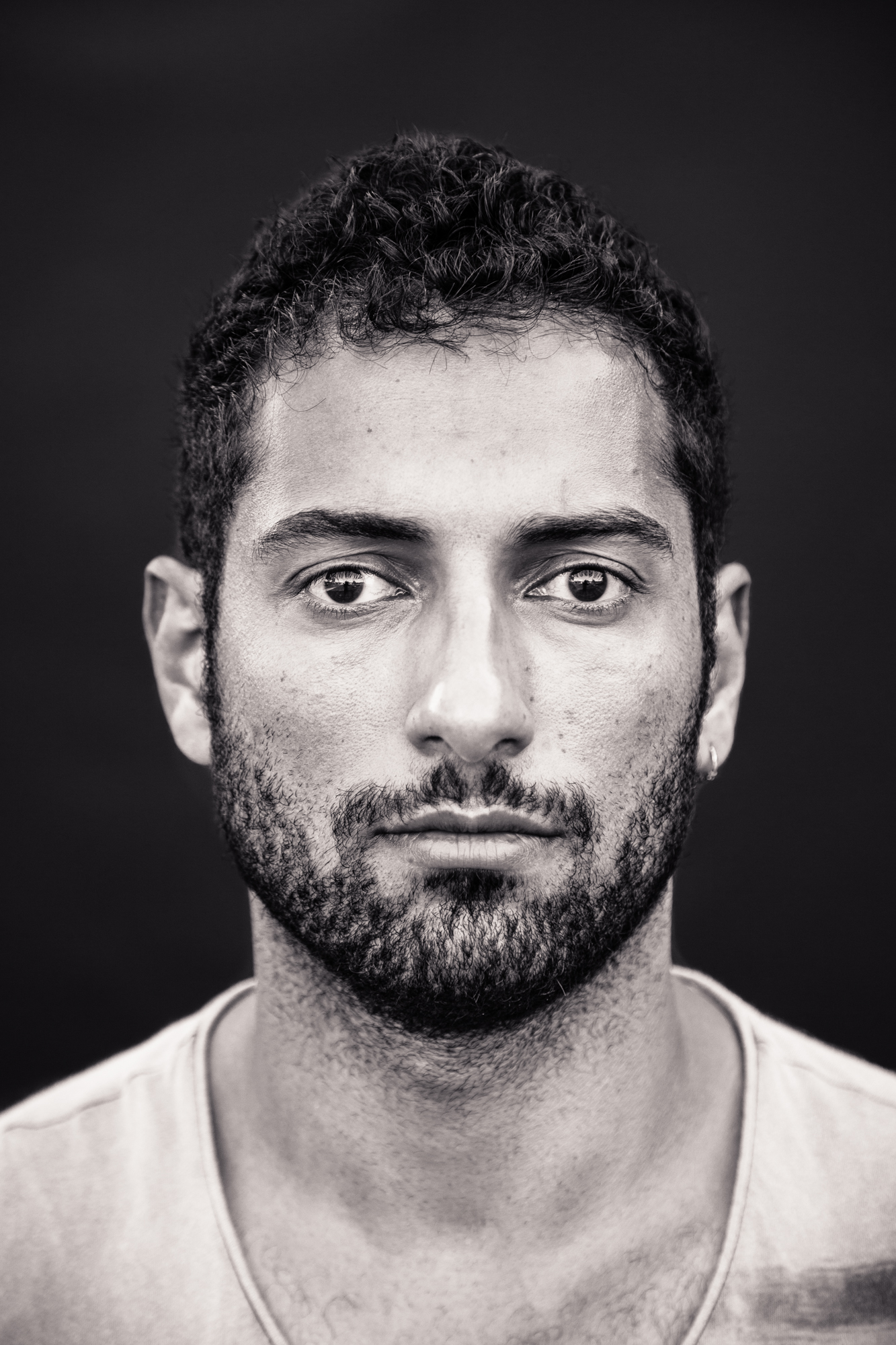}
\end{center}

\begin{abstract}
Several theorems and conjectures in communication complexity state or speculate that the complexity of a matrix in a given communication model is controlled by a related analytic or algebraic matrix parameter, e.g., rank, sign-rank, discrepancy, etc. The forward direction is typically easy as the structural implications of small complexity often imply a bound on some matrix parameter. The challenge lies in establishing the reverse direction, which requires understanding the structure of Boolean matrices for which a given matrix parameter is small or large. We will discuss several research directions that align with this overarching theme.

\vskip5pt 
\begin{updatebox} 
This survey was updated on July 29, 2026 to describe subsequent progress on its open problems. Material added in this update is enclosed in boxes.
\end{updatebox}
\end{abstract}

\section{Introduction}
In 1979, Yao~\cite{yao1979some} introduced an abstract model for analyzing communication. It quickly became apparent that the applications of this elegant paradigm go far beyond the concept of communication. Many results in communication complexity have equivalent formulations in other fields that are equally natural, and the techniques developed within this framework have proven to be powerful tools applicable across various domains.  Today, communication complexity is a vibrant research area with many connections across theoretical computer science and mathematics: in learning theory, circuit design, pseudorandomness, data streaming, data structures, computational complexity, computer networks, time-space trade-offs, discrepancy theory, and property testing. 

In this article, we focus on the most standard framework where a communication problem is simply a \emph{Boolean} matrix. Formally, there are two parties, often called Alice and Bob, and a communication problem is defined by a matrix  $F \in \{0,1\}^{\cX \times \cY}$.  Alice receives a row index $x \in \cX$, and Bob receives a column index $y \in \cY$. Together, they should compute the entry $F(x, y)$ by exchanging bits of information according to a previously agreed-on protocol tailored to $F$. There is no restriction on their computational power; the only measure we care to minimize is the number of exchanged bits.

Many questions in communication complexity concern the basic structural properties of Boolean matrices and are relevant to any field that requires an in-depth analysis of them. Mathematicians, of course, have studied matrices for centuries through the lenses of linear algebra, geometry, and analysis. They have produced an extensive collection of tools and theories that apply to any field dealing with these mathematical objects. However, the assumption of Booleanity introduces a novel angle and unveils new problems and challenges.

Consider the example of rank. Elementary linear algebra provides several satisfactory structural descriptions of small-rank matrices. For example, one could construct a small-rank matrix by summing a few rank-one matrices. In contrast, the structure of small-rank \emph{Boolean matrices} is the subject of the most well-known conjecture in communication complexity, \emph{the log-rank conjecture}.

There are many theorems and conjectures in communication complexity that fall into a similar paradigm as the log-rank conjecture: They state or speculate that the complexity of a matrix in a given communication model is essentially determined by a related analytic or algebraic matrix parameter, e.g., rank, sign-rank, discrepancy, trace norm, approximate trace norm, $\gamma_2$-factorization norm, approximate factorization norm.  The forward direction is typically easy as the structural implications of small complexity often imply a  bound on some matrix parameter. The challenge lies in establishing the reverse direction, which requires understanding \emph{the structure of Boolean matrices} for which a given matrix parameter is small or large.

In this article, we discuss some new research directions and open problems, as well as some classical ones, that align with this overarching theme. 

\paragraph{Notation:} All logarithms are in base $2$. Sometimes, we use $a \lesssim b$ to denote $a=O(b)$. Let $\I_m$  denote the $m \times m$ identity matrix and $\J_{\cX \times \cY}$ denote the $\cX \times \cY$ all-$1$ matrix. For a positive integer $k$, we denote $[k]= \{1,\ldots,k\}$. We often identify a Boolean matrix $F_{\cX \times \cY}$ with the corresponding function $F:\cX \times \cY \to \{0,1\}$ defined as $F:(x,y) \mapsto F(x,y)$. We define the \emph{complement} of a communication problem $F=F_{\cX \times \cY}$ as $\neg F = \J_{\cX \times \cY}-F$. For two $\cX\times \cY$ matrices $A$ and $B$, we denote their entry-wise product (i.e. \emph{Schur product}) by $A\circ B$.

\paragraph{Communication complexity classes:}

We measure the communication complexity of a matrix $F_{\cX \times \cY}$ in relation to the number of input bits $n(F) \defeq \lceil \log \max(|\cX|,|\cY|) \rceil$. We often consider $2^n \times 2^n$ matrices $F_n:\{0,1\}^n \times \{0,1\}^n \to \{0,1\}$, where the inputs of Alice and Bob are $n$-bit strings. Moreover, similar to computational complexity,  the goal is to understand the asymptotic complexity, and therefore, a communication problem typically refers to an infinite family of Boolean matrices rather than a single matrix.  

For example, \textsc{Equality} is the family of matrices $\EQ_n: \{0,1\}^n \times \{0,1\}^n \to \{0,1\}$ with $\EQ(x,y)=1$ iff $x=y$.  Equivalently, $\EQ_n$ is the $2^n \times 2^n$ identity matrix, where the rows and columns are labelled with $n$-bit strings.

In the theory of Turing machines, a polynomial complexity is considered efficient, but this is not a suitable criterion for communication since even in the deterministic model, communication complexity is at most $n+1$. In an influential paper, Babai, Frankl, and Simon \cite{bfs86} proposed polylogarithmic complexity as the criterion for efficiency and used it to define the communication classes $\Pcc^\cc,\NP^\cc,\RP^\cc,\BPP^\cc,\PPcc^\cc,\UPP^\cc$, analogous to the classical computational complexity classes. The definition of communication classes by~\cite{bfs86} provides a formal paradigm to compare the power of different communication models. 

Since in this article, we only study communication classes, to make the notation less cumbersome, from this point on, we will drop the superscript $\cc$ and denote these communication classes simply as $\Pcc,\NP,\RP,\BPP,\PPcc,\UPP$.

\paragraph{Constant-cost communication classes:} For many communication problems, it may not be completely natural to allow the communication complexity to depend on the matrix size. For example, consider the geometric problem where Alice receives a point $x \in \R^2$, and Bob receives a closed half-space $H \subseteq\R^2$, and they wish to know whether $x \in H$. Alternatively, consider the problem where Alice and Bob receive points $x,y \in \R^2$, respectively, and they wish to know whether the distance between these points is at most $1$. In these examples, the number of possible inputs is infinite, and even if we artificially restrict the number of possible inputs to a finite set,  the most natural way to represent the inputs is by vectors in $\R^2$.  Therefore, it is more natural to consider communication protocols that solve the problem (in a certain communication model or under some promised guarantees about the inputs) using a bounded number of communicated bits, independent of the number of possible inputs.

Furthermore, many parameters studied in communication complexity are also relevant to other areas of theoretical computer science, where the Boolean matrices are either infinite or their size is of lesser significance than other features. For example, in learning theory, notions such as VC dimension, Littlestone dimension, margin complexity, and Threshold dimension of binary concept classes are of primary interest rather than the class size. Viewing binary concept classes as Boolean matrices, these parameters are directly related to notions such as stability, discrepancy, and sign-rank that play a central role in communication complexity. 

The definition of communication classes by~\cite{bfs86}, no doubt, is natural and has successfully led to many fruitful lines of research seeking to prove separations between different communication classes. However, if one wishes to view communication complexity in these broader contexts, it becomes essential also to analyze communication problems that have uniformly bounded complexity. Indeed, there is already a rapidly growing body of work dedicated to studying constant-cost communication~\cite{MR2359826,linialshraibman,MR3218542,MR4129280,MR4048136,HWZ22,hatami2022lower,esperet2022sketching,HHH23,CHHS23,ahmed2023communication,harms2023randomized}. 

These considerations motivate the definition of the \emph{constant-cost} analogues of the communication classes of~\cite{bfs86}. Here, the criterion of effectiveness is an $O(1)$ complexity, independent of the input size $n$. We will denote these classes by adding a $0$ subscript to the corresponding polylogarithmic class, where $0$ refers to the power $d$ in the polylogarithmic complexity $O(\log^d(n))$. For example, a family of Boolean matrices $F_n$ is in $\Pcc_0$ if their deterministic communication complexity is uniformly bounded by a constant $C=O(1)$. 

We will formally define and discuss the classes $\Pcc_0,\NP_0,\RP_0,\BPP_0,\PPcc_0,\UPP_0$ later, but for now, let us mention that, unlike in the polylogarithmic communication classes where $\Pcc \subsetneq \RP  \subsetneq \NP$ and $\BPP \subsetneq \PPcc$, here we have 
$\Pcc_0=\NP_0 \subsetneq \RP_0$ and $\BPP_0=\PPcc_0$. We also believe that $\BPP_0 \not \subseteq \UPP_0$, while Newman's lemma~\cite{newman} implies that  $\BPP  \subseteq \UPP$.

\paragraph{Monochromatic rectangles:} 
A \emph{monochromatic rectangle} in a Boolean matrix  $F_{\cX\times \cY}$ refers to a submatrix $F_{S \times T}$, where all entries share the same value. Monochromatic rectangles are the building blocks of every communication protocol, and many results and problems in communication complexity focus on describing the structure of Boolean matrices in terms of these rectangles or showing the existence of large monochromatic rectangles in them. 

Define the \emph{rectangle ratio}  of a Boolean matrix $F_{\cX \times \cY}$ as \[ \rect(F) \defeq \max_{R} \frac{|R|}{|\cX \times \cY|},\] where the maximum is over monochromatic rectangles in $F$.

Concerning communication complexity, the parameter $\rect(\cdot)$ has a few drawbacks: While adding new rows and columns to a matrix can only increase the communication complexity, it can deteriorate $\rect(F)$ by creating large monochromatic rectangles. Moreover, repeating a row or column of $F$ does not change its communication complexity, but it can affect $\rect(F)$. Therefore, it is more natural to consider the following version of $\rect$ that considers weighted monochromatic rectangles, first studied by Impagliazzo and Williams~\cite{impagliazzo2010communication}.

\begin{definition}[Weighted Rectangle Ratio]
\label{def:rectangleRatio}
For every Boolean matrix $F$, define 
\[\wrect(F) \defeq \inf_{\mu} \max_{R}\mu(R),  \]
where the infimum is over all \emph{product} probability measures $\mu=\mu_\cX \times \mu_\cY$ on $\cX \times \cY$, and the maximum is taken over all monochromatic rectangles in $F$.
\end{definition}

Both $\rect(\cdot)$ and $\wrect(\cdot)$ are natural parameters that measure the existence of structure in Boolean matrices. As we will discuss throughout the paper, they play a crucial role in several applications in communication complexity, and some basic and consequential questions regarding these quantities remain open.

\section{Deterministic communication complexity and rank}
\label{sec:deterministic}

A \emph{deterministic communication protocol} is defined by a binary tree, where every internal node $v$ specifies which player speaks at that node and what bit they must send. For example, if an internal node $v$ is associated with Alice, it is labelled with a function $a_v: \cX \to \{0, 1\}$, which prescribes the bit sent by Alice at this node. After this bit is sent, the players move to the corresponding child of $v$: they move to the left child if the bit is $0$ and to the right child if the bit is $1$.  They continue this process until they reach a leaf. Every leaf is labelled with $0$ or $1$, which corresponds to the output of the protocol.  The \emph{cost} of the protocol is the height of the tree, which is equal to the maximum number of bits exchanged on any input.  The \emph{deterministic communication complexity} of $F$, denoted by $\DD(F)$, is the smallest cost of a protocol that computes $F$ correctly on all inputs. Let $\Pcc$ and $\Pcc_0$ be, respectively, the class of problems with polylogarithmic and $O(1)$ deterministic communication complexity.

The fact that the bit communicated at every node depends only on one of the inputs implies the most useful fact in communication complexity: for every node $v$, the set of all inputs that lead the protocol to $v$ is a \emph{combinatorial rectangle}, which is a product set $R_v = S_v \times T_v$ with $S_v \subseteq \cX$ and $T_v \subseteq \cY$.

For every $b \in \{0,1\}$, a combinatorial rectangle $R$ is called a $b$-\emph{monochromatic rectangle} if $F(x,y)=b$ for every entry $(x,y) \in R$.  Note that for every $b$-labelled leaf $\ell$, the set $R_\ell$ must be a $b$-monochromatic rectangle. Therefore, the leaves of a deterministic protocol partition the matrix into monochromatic rectangles. 

A Boolean matrix $F$ with small $\DD(F)$ is highly structured, as it consists of at most $2^{\DD(F)}$ monochromatic rectangles.  This partition of $F$ into at most $2^{\DD(F)}$ disjoint monochromatic rectangles implies
\begin{equation}
\DD(F) \ge \log(1/\rect(F)).
\end{equation}
Moreover, since the real rank of a monochromatic rectangle is at most $1$, we have $\rank(F) \le  2^{\DD(F)}$, where $\rank(F)$ denotes the real rank of $F$. Combined with the easy fact $\DD(F) \le  \rank(F)+1$, we have
\begin{equation}
\label{eq:rank_lower_bound}
\log \rank(F) \le \DD(F) \le \rank(F)+1.
\end{equation}
Therefore, $\Pcc_0$ coincides with the class of problems with bounded rank. 
However, the exponential gap between the lower bound and the upper bound is too wide to provide a characterization of $\Pcc$ in terms of rank. The \emph{log-rank conjecture} speculates that the lower bound is essentially sharp and $\DD(F)$ and $\log\rank(F)$ are polynomially equivalent. 

\begin{conjecture}[The log-rank conjecture~\cite{lovasz1988lattices}]
\label{conj:LRC}
There exists a universal constant $C > 0$ such that for
every Boolean matrix $F$,
\[\DD(F) \le C \left(\log \rank(F)\right)^C.\]
\end{conjecture}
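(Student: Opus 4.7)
The plan is to attack Conjecture~\ref{conj:LRC} via the \emph{rectangle method}: reduce the conjecture to a purely structural statement about low-rank Boolean matrices, then establish that statement. Concretely, the target is the following \emph{Rectangle Lemma}: for every Boolean matrix $F$ with $\rank(F) \le r$ one has $\log(1/\rect(F)) \le \polylog(r)$, i.e.\ $F$ contains a monochromatic combinatorial rectangle of density at least $2^{-\polylog(r)}$.

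Given the Rectangle Lemma, Conjecture~\ref{conj:LRC} follows by the classical reduction of Nisan and Wigderson. Iteratively applying the lemma to $F$ and to the low-rank residues left after peeling off a large monochromatic rectangle yields a cover of $F$ by $2^{\polylog(r)}$ monochromatic rectangles; the Nisan-Wigderson argument then converts any such cover into a deterministic protocol of cost $O(\log^2(\text{cover size})) = \polylog(r)$ via a balanced binary search on which rectangle contains the input $(x,y)$. Hence a Rectangle Lemma with a polylogarithmic density exponent is sufficient, and the rest of the plan is devoted to proving it.

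Essentially the entire difficulty is concentrated in the Rectangle Lemma, and that is where I expect the main obstacle. The hypothesis $\rank(F) \le r$ forces some structure on $F$, but converting that into a \emph{large} monochromatic rectangle has resisted decades of effort. Spectral and discrepancy-based attacks (Nisan-Wigderson, Gavinsky-Lovett, Lovett's $O(\sqrt{r}\log r)$ bound on $\DD(F)$, and subsequent refinements) currently give only $\log(1/\rect(F)) = O(\sqrt{r})$, exponentially weaker than the polylog target; this $\sqrt{r}$ barrier appears intrinsic to any method that sees $F$ only through its spectrum or its discrepancy and does not use Booleanity in any nontrivial way. Real progress seems to demand exploiting the Boolean hypothesis through deeper structural tools. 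Candidates I would pursue include an additive-combinatorial route through the recently resolved polynomial Freiman-Ruzsa conjecture (which via Ben-Sasson, Lovett, and Ron-Zewi already settles the $\F_2$-rank analogue for XOR lifts, and one would attempt to bootstrap this to real rank via an approximation argument), an analytic attack through the approximate $\gamma_2$ factorization norm flagged in the introduction as controlling many communication parameters and known to lie between $\log\rank$ and $\DD$, or an entropy/compression argument that extracts product structure directly from the nonnegative low-rank matrix $F$ by analyzing the joint distribution of a random row and column.
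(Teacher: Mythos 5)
The statement you are addressing is the log-rank conjecture, which is an open problem; the paper states it as a conjecture and offers no proof, so the only question is whether your proposal actually closes it — and it does not. Your ``Rectangle Lemma'' (every Boolean matrix of rank $r$ satisfies $\log(1/\rect(F)) \le \polylog(r)$) is not an auxiliary step: by the Nisan--Wigderson argument you yourself invoke, it is \emph{equivalent} to the conjecture (this is exactly \cref{conj:LRC-eq} in the paper). So the first part of your plan merely restates the problem in a different but known-equivalent form, and the entire burden falls on the lemma, which you explicitly leave unproven. A reduction of an open conjecture to an equivalent open statement is not a proof.

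The candidate routes you list for the lemma are also not carried out, and each faces known obstructions rather than being a plausible near-term closing move. The best bounds obtainable to date are $\log(1/\rect(F)) = O(\sqrt{r})$, reflected in the $\DD(F) \le O(\sqrt{\rank(F)})$ bound of Lovett and of Sudakov--Tomon cited in the paper, exponentially short of the polylogarithmic target. The additive-combinatorial route via polynomial Freiman--Ruzsa concerns $\F_2$-rank of \textsc{xor}-lifts, and no argument is known (or sketched by you) that transfers such structure to real rank of general Boolean matrices; the approximate $\gamma_2$/factorization-norm route and entropy-based compression arguments are likewise research directions, not arguments with identified intermediate claims one could verify. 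In short, there is a genuine gap: the key lemma on which everything rests is exactly the open problem, and none of the proposed attacks is developed to the point of yielding even an improvement over the known $O(\sqrt{r})$ bound, let alone the conjectured $\polylog(r)$.
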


The log-rank conjecture, if true, provides a structural description for Boolean matrices of small rank. Namely, they must have the same tree-like partition into monochromatic rectangles as matrices with small deterministic communication complexity. Nisan and Wigderson~\cite{nisan1995rank} showed that to prove the log-rank conjecture, it suffices to show the existence of one large monochromatic rectangle. 

\begin{conjecture}[The log-rank conjecture - equivalent formulation]
\label{conj:LRC-eq}
There exists a universal constant $C > 0$ such that every Boolean matrix $F$ satisfies 
\[ \rect^{-1}(F) \leq  2^{C \log^C \rank(F)}.\]
\end{conjecture}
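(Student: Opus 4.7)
My plan is to establish the equivalence between Conjecture \ref{conj:LRC} and Conjecture \ref{conj:LRC-eq} via the classical Nisan--Wigderson reduction. The direction Conjecture \ref{conj:LRC} $\Rightarrow$ Conjecture \ref{conj:LRC-eq} is the easy one: a deterministic protocol of cost $d = \DD(F)$ partitions $\cX \times \cY$ into at most $2^d$ monochromatic rectangles, so by pigeonhole one of them has fractional size at least $2^{-d}$. Plugging in $\DD(F) \le C(\log \rank F)^C$ furnishes a monochromatic rectangle of density at least $2^{-C(\log \rank F)^C}$, which at the scale $\log \rect^{-1}(F) \le (\log \rank F)^{O(1)}$ is the content of Conjecture \ref{conj:LRC-eq}.

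The substantive direction, Conjecture \ref{conj:LRC-eq} $\Rightarrow$ Conjecture \ref{conj:LRC}, is the Nisan--Wigderson iteration, which I would prove by induction on the rank. Given $F$ of rank $r$, invoke the rectangle hypothesis to locate a monochromatic rectangle $R = A \times B$ of large relative size; after replacing $F$ by $\neg F$ if necessary, assume $R$ is $1$-monochromatic. The protocol has Alice announce whether $x \in A$ and, when $x \in A$, has Bob announce whether $y \in B$. In the ``yes/yes'' branch the output is $1$; otherwise the protocol recurses on either $F|_{\bar A \times \cY}$ or $F|_{A \times \bar B}$. With the right choice of $R$, an NW-style rank-decrement lemma guarantees that at least one of these complementary submatrices has rank strictly less than $r$, which, combined with size reductions in the non-decrementing branches, closes the induction and yields $\DD(F) \le (\log r)^{O(1)}$.

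The main obstacle is the recursion bookkeeping: a naive iteration that only decrements the rank by $1$ per step yields $\DD(F) \lesssim r \cdot (\log r)^{O(1)}$ rather than the required $(\log r)^{O(1)}$. To close this gap, one must carefully interleave rank-decrementing steps with geometric size reductions in the other branches, applying Conjecture \ref{conj:LRC-eq} recursively to all submatrices (whose rank remains at most $r$). The deeper obstruction, of course, is that Conjecture \ref{conj:LRC-eq} is itself the open log-rank conjecture in disguise: the best known unconditional rectangle-density lower bound for rank-$r$ Boolean matrices is $\exp(-\tilde O(\sqrt{r}))$ due to Lovett and subsequent work, still exponentially far from the $1/(\log r)^{O(1)}$ density demanded here.
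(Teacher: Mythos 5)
The statement you were asked to prove is not a theorem of the paper but an open conjecture: it is the Nisan--Wigderson reformulation of the log-rank conjecture, and the paper gives no proof of it --- it only records, citing \cite{nisan1995rank}, that to prove \cref{conj:LRC} it suffices to exhibit one large monochromatic rectangle. So there is no proof in the paper to compare against, and you are right not to claim one; what you actually offer is a sketch of the equivalence between \cref{conj:LRC} and \cref{conj:LRC-eq}, which is indeed the content behind the word ``equivalent'' in the paper's heading, and your closing caveat (best unconditional density bound $2^{-\widetilde{O}(\sqrt{r})}$ via Lovett and Sudakov--Tomon) correctly explains why the statement itself is out of reach.

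Two remarks on your sketch. First, you implicitly read the conjecture at the logarithmic scale, $\log \rect^{-1}(F) \le (\log \rank F)^{O(1)}$, and that is the right (Nisan--Wigderson) formulation: your ``easy direction'' only yields a rectangle of density $2^{-C(\log r)^C}$, which does not give the inequality $\rect^{-1}(F) \le C\log^C \rank(F)$ as literally printed --- and the printed inequality is in fact false as stated (for $\IP_n$ one has $\rect^{-1}(\IP_n) \ge 2^{\Omega(n)}$ while $\log \rank(\IP_n) = \Theta(n)$), so the display should be read with a $\log$ on the left-hand side. Second, your hard direction is vaguer than the actual argument: the engine of Nisan--Wigderson is not a generic ``rank-decrement lemma'' patched by interleaving, but the specific fact that if $A \times B$ is a $0$-monochromatic rectangle then $\rank(F_{A\times \cY}) + \rank(F_{\cX\times B}) \le \rank(F)+1$, so one of the two restrictions has rank at most about $\rank(F)/2$; the protocol recurses on that restriction when the announced bit lands there and on the complementary (smaller) submatrix otherwise, and the cost is controlled by bounding the number of leaves of the recursion tree and rebalancing, giving $\DD(F) \le O\bigl(\log^2 r + \sum_{i\ge 0} a(r/2^i)\bigr)$ whenever every rank-$r'$ Boolean matrix has a monochromatic rectangle of density $2^{-a(r')}$. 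With $a(r') = \mathrm{polylog}(r')$ this closes the reduction; a decrement of the rank by one per step, however carefully interleaved, does not.
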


To date, the strongest known bound is still exponentially far from that in \cref{conj:LRC}. Concretely, Lovett \cite{lovett2016communication} showed that $\DD(f) \le O(\sqrt{\rank(f)} \log \rank(f))$, which was recently improved by Sudakov and Tomon~\cite{sudakov2023matrix} to $O(\sqrt{\rank(f)})$. On the lower bound side, a construction of \cite{goos2018deterministic} shows that the constant $C$ cannot be strictly less than $2$. We refer to surveys~\cite{MR3222332,lee2023around} for a detailed discussion of the log-rank conjecture.

\section{Equality oracles and $\gamma_2$-Factorization norm}

The results discussed in this section mostly stem from joint work by Lianna Hambardzumyan and the authors~\cite{HHH23}. Regarding communication complexity, we will focus on the deterministic model with access to an {\sc Equality} oracle. We will show that this model has compelling ties to harmonic analysis and operator theory. In particular, we will discuss a conjecture in communication complexity with profound links to the characterization of the idempotents of the algebra of Schur multipliers and Cohen’s celebrated idempotent theorem, a well-known and notable theorem in harmonic analysis.
 
\subsection{An analytic version of the rank lower-bound}\label{sec:analyticrank}

The rank lower bound on $\DD(F)$ follows from the observation that every Boolean matrix $F$ can be written as  $F = \sum_{i=1}^{2^{\DD(F)}} B_i$, where each $B_i$ is a rank-one \emph{Boolean} matrix. If instead of counting the number of rank-one \emph{Boolean} matrices in this sum, we focus on the sum of the coefficients, we can establish an alternative lower bound on $\DD(F)$—a lower bound with an analytical flavour.

Define the $\mu$-norm of a \emph{real} matrix $A$  as 
\[ \norm{A}_\mu= \inf_{\lambda_i, B_i} \left\{ \sum_i |\lambda_i| \ : \ A=\sum_i \lambda_i B_i \right\},\]
where each $B_i$ is a rank-one \emph{Boolean} matrix and each $\lambda_i$ is a real number. Note that $\norm{\cdot}_\mu$ satisfies all the axioms of a norm, and we have
\begin{equation}
\label{eq:mu}
    \log \norm{F}_\mu \le   \DD(F).
\end{equation}

The $\mu$-norm is not well-known in functional analysis, and its definition is tailored to its purpose as a lower bound in communication complexity. Fortunately, the Grothendieck inequality shows that the $\mu$-norm is equivalent to the well-studied $\gamma_2$-factorization norm. 

\begin{definition}
\label{def:gamma_2}
The $\gamma_2$-norm of a real matrix $A_{\cX \times \cY}$, denoted by $\norm{A}_{\gamma_2}$, is the infimum of $c$ such that there exists a positive integer $d$ and vectors $u_x,v_y \in \R^d$ for $x \in \cX$ and $y \in \cY$  with $\inp{u_x}{v_y}=A(x,y)$ and $\norm{u_x}_2 \cdot \norm{v_y}_2 \le c$ for all $x,y$.
\end{definition}

A key property of the $\gamma_2$-norm is that for every two  $\cX \times \cY$ real matrices $A_1$ and $A_2$, we have 
\begin{equation}
\label{eq:algebra}
\norm{A_1 \circ A_2}_{\gamma_2} \le \norm{A_1}_{\gamma_2} \norm{A_2}_{\gamma_2},
\end{equation}
where we recall that $A_1 \circ A_2$ is the entry-wise product of  $A_1$ and $A_2$.

The Grothendieck inequality implies that for every real matrix $A$, we have 
\[\norm{A}_{\gamma_2} \le \norm{A}_\mu \le 4 K \norm{A}_{\gamma_2}, \]
where $K \le \frac{\pi}{2\ln(1+\sqrt{2})} \approx 1.7822$ is the so-called Grothendieck constant for real numbers.  In light of this equivalence, we will rephrase the lower bound \cref{eq:mu} in terms of the $\gamma_2$-norm: 
\begin{equation}
\label{eq:gamma_2}
    \log \norm{F}_{\gamma_2} \leq   \DD(F). 
\end{equation}

If we compare this lower bound to \cref{eq:rank_lower_bound},  it is natural to wonder whether we can bound $\DD(F)$ from above by a function of $\norm{F}_{\gamma_2}$. The answer is negative. The $\gamma_2$-norm of every identity matrix is $1$, but identity matrices can be of arbitrarily large rank and, therefore, of arbitrarily large deterministic communication complexity. 

\begin{proposition}
\label{prop:identity}
For every $m$, the  $\gamma_2$-norm of the $m\times m$ identity matrix $\I_m$ is $1$,
\end{proposition}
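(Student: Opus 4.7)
The plan is to establish matching upper and lower bounds, both of which are essentially immediate from \cref{def:gamma_2}.

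For the upper bound $\norm{\I_m}_{\gamma_2} \le 1$, I would exhibit an explicit factorization. Take $d = m$, and for each $x \in [m]$ set $u_x = v_x = \e_x$, the $x$-th standard basis vector in $\R^m$. Then $\inp{u_x}{v_y} = \inp{\e_x}{\e_y} = \delta_{x,y} = \I_m(x,y)$, and $\norm{u_x}_2 \cdot \norm{v_y}_2 = 1$ for every $x,y \in [m]$. Thus $c = 1$ is an admissible value, giving $\norm{\I_m}_{\gamma_2} \le 1$.

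For the lower bound $\norm{\I_m}_{\gamma_2} \ge 1$, I would use Cauchy--Schwarz on the diagonal entries. For any factorization $\inp{u_x}{v_y} = \I_m(x,y)$ with $\norm{u_x}_2 \cdot \norm{v_y}_2 \le c$, the diagonal condition forces $\inp{u_x}{v_x} = 1$ for every $x$. By Cauchy--Schwarz, $1 = \inp{u_x}{v_x} \le \norm{u_x}_2 \cdot \norm{v_x}_2 \le c$, so every admissible $c$ satisfies $c \ge 1$. Taking the infimum gives $\norm{\I_m}_{\gamma_2} \ge 1$.

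There is no substantive obstacle here: the statement is tight precisely because the identity matrix admits a factorization through an orthonormal system (which is optimal by Cauchy--Schwarz on the diagonal). The only minor subtlety is that the definition allows the ambient dimension $d$ to depend on $m$, so the upper-bound construction must supply a specific $d$, which we do by taking $d = m$. Together the two bounds give $\norm{\I_m}_{\gamma_2} = 1$, as claimed.
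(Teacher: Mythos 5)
Your proof is correct and follows essentially the same route as the paper: the upper bound via the standard orthonormal basis is identical, and your Cauchy--Schwarz argument on the diagonal entries is exactly the proof of the general fact $\norm{A}_{\gamma_2} \ge \norm{A}_\infty$ that the paper invokes for the lower bound, specialized to $\I_m$. No gaps.
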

\begin{proof}
Note that since the standard basis $e_1,\ldots,e_m \in \R^m$ satisfies 
\[
\inp{e_i}{e_j}=
 \begin{cases}
    1 & i=j \\
    0 & i \neq j 
  \end{cases},
\]
by \cref{def:gamma_2}, $\I_m$ satisfies $\norm{\I_m}_{\gamma_2} \le 1$. Moreover, it is clear from the definition of the $\gamma_2$-norm that for every matrix $A$, we have $\norm{A}_{\gamma_2} \ge \norm{A}_\infty \defeq \max_{x,y} |A(x,y)|$, and therefore, $\norm{\I_m}_{\gamma_2} \ge 1$. 
\end{proof}

\paragraph{What are the Boolean matrices with small $\gamma_2$-norm  then?} 
It follows from $\norm{A}_{\gamma_2} \ge \norm{A}_{\infty}$ that the $\gamma_2$-norm of every non-zero Boolean matrix is at least $1$. Let us first study the Boolean matrices whose $\gamma_2$-norm is exactly $1$. 

\cref{prop:identity} shows that all identity matrices have $\gamma_2$-norm $1$. Note that the proof of \cref{prop:identity} generalizes to a larger class of matrices, which we call \emph{blocky matrices}.

\begin{definition}[Blocky matrices]
A Boolean matrix $F_{\cX \times \cY}$  is  \emph{blocky} if there exist disjoint sets $\cX_i \subseteq \cX$ and disjoint sets $\cY_i  \subseteq \cY$  such that  the support of $F$ is exactly $ \bigcup_{i} \cX_i \times \cY_i$. 
\end{definition}

Let $\Blocky$ denote the set of all blocky matrices. It turns out that $\Blocky$ is precisely the set of Boolean matrices with $\gamma_2$-norm $1$. 

\begin{proposition}[Livshits~\cite{MR1332920}]
\label{prop:Liv}
A Boolean matrix $F$ satisfies $\norm{F}_{\gamma_2}=1$ iff $F \in \Blocky$. 
\end{proposition}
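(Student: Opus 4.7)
The plan is to prove both implications of the equivalence; the direction $F \in \Blocky \Rightarrow \norm{F}_{\gamma_2}=1$ generalizes the proof of \cref{prop:identity}, while the converse, which is the main content, requires the equality case of Cauchy--Schwarz.

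For the easy direction, suppose $F$ has support $\bigcup_{i} \cX_i\times\cY_i$ with the $\cX_i \subseteq \cX$ mutually disjoint and the $\cY_i \subseteq \cY$ mutually disjoint. Index the blocks $i=1,\ldots,k$ and let $e_1,\ldots,e_k \in \R^k$ be the standard basis. Define $u_x \defeq e_i$ if $x \in \cX_i$ and $u_x \defeq \0$ otherwise, and define $v_y$ analogously from the $\cY_i$'s. Since $\inp{e_i}{e_j}$ is $1$ if $i=j$ and $0$ otherwise, and since the row/column block memberships are unique, we obtain $\inp{u_x}{v_y}=F(x,y)$ with $\norm{u_x}_2\norm{v_y}_2 \leq 1$. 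By \cref{def:gamma_2} this gives $\norm{F}_{\gamma_2}\le 1$, and combined with $\norm{F}_{\gamma_2}\ge \norm{F}_\infty=1$ (assuming $F$ is nonzero, the zero matrix being a trivial edge case) one concludes $\norm{F}_{\gamma_2}=1$.

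For the converse, assume $\norm{F}_{\gamma_2}=1$. A standard compactness argument (restrict the factorization to dimension at most $\min(|\cX|,|\cY|)$ and rescale each factorization so that $\max_x \norm{u_x}_2 = \max_y \norm{v_y}_2$, then pass to a convergent subsequence) shows the infimum in \cref{def:gamma_2} is attained: there exist vectors $\{u_x\},\{v_y\}$ with $\inp{u_x}{v_y}=F(x,y)$ and $\norm{u_x}_2\norm{v_y}_2\le 1$ for all $x,y$. The heart of the argument is the following dichotomy. Whenever $F(x,y)=1$,
\[ 1 \;=\; \inp{u_x}{v_y} \;\le\; \norm{u_x}_2\norm{v_y}_2 \;\le\; 1, \]
so all inequalities are equalities, and the Cauchy--Schwarz equality case forces $u_x$ and $v_y$ to be positive scalar multiples of a common unit vector. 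Writing $\hat u_x \defeq u_x/\norm{u_x}_2$ and $\hat v_y \defeq v_y/\norm{v_y}_2$ (both well-defined since $u_x,v_y\ne \0$), this reads $\hat u_x = \hat v_y$. Conversely, if $\hat u_x = \hat v_y$ for some pair with both $u_x,v_y$ nonzero, then $\inp{u_x}{v_y} = \norm{u_x}_2\norm{v_y}_2$ is \emph{strictly positive} and at most $1$, while also in $\{0,1\}$, hence equal to $1$; so $F(x,y)=1$.

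The blocky decomposition is now immediate by grouping by direction. For each unit vector $w$ that arises as some $\hat u_x$, set $\cX_w \defeq \{x : u_x\ne \0,\ \hat u_x = w\}$ and $\cY_w \defeq \{y : v_y\ne \0,\ \hat v_y = w\}$. The $\cX_w$'s are pairwise disjoint by the uniqueness of $\hat u_x$ for each nonzero $u_x$, and likewise for the $\cY_w$'s. The two observations in the previous paragraph combine to give $\supp(F) = \bigcup_w \cX_w \times \cY_w$, so $F\in\Blocky$. The most delicate step in the whole proof is justifying that the infimum in the $\gamma_2$-norm is genuinely attained; once this is settled, the rest is essentially a clean application of the Cauchy--Schwarz equality case together with the Booleanity of $F$.
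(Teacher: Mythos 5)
Your proof is correct, but it takes a genuinely different route from the paper. The paper's argument is a forbidden-pattern argument: it records (from Livshits) that the $2\times 2$ Boolean matrix with exactly three ones has $\gamma_2$-norm $2/\sqrt{3}>1$, notes that passing to a submatrix (up to permutations) does not increase the $\gamma_2$-norm, and then uses the purely combinatorial fact that a Boolean matrix avoiding this $2\times 2$ pattern must be blocky; the easy direction is delegated to the observation that the proof of \cref{prop:identity} generalizes to blocky matrices, exactly as in your first paragraph. Your argument for the hard direction instead extracts an \emph{optimal} factorization and applies the equality case of Cauchy--Schwarz: on every $1$-entry the unit vectors $\hat u_x$ and $\hat v_y$ must coincide, and grouping rows and columns by direction directly yields the blocky decomposition. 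This is a clean rigidity argument that is self-contained (no need to compute the $\gamma_2$-norm of the $2\times 2$ example, which the paper simply cites) and it produces the block structure constructively from the factorization. What the paper's approach buys in exchange is locality: the forbidden-submatrix criterion applies verbatim to infinite matrices, which matters later when \cref{prop:Liv} is invoked to identify the contractive idempotent Schur multipliers over countable index sets, whereas your compactness/attainment step is only routine for finite $\cX,\cY$ and would need an extra limiting argument (or a reduction to finite submatrices) in the infinite setting. Within the finite setting your attainment argument is fine: the pairwise constraint $\norm{u_x}_2\norm{v_y}_2\le c$ does bound $\max_x\norm{u_x}_2\cdot\max_y\norm{v_y}_2$, so after balancing one can pass to a convergent subsequence in bounded dimension, and your handling of the zero matrix and of rows/columns with zero vectors is careful.
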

\begin{proof}
It is observed in~\cite{MR1332920} that 
\[\left\| 
\begin{bmatrix}
	1 & 1 \\
	0 & 1 \\
\end{bmatrix} \right\|_{\gamma_2} = \frac{2}{\sqrt{3}}>1. \]
Since  $\norm{\cdot}_{\gamma_2}$ norm is invariant under row and column permutations, a Boolean matrix $F$ with $\norm{F}_{\gamma_2}=1$ cannot have any $2 \times 2$ submatrices with exactly $3$ ones.  It is straightforward to verify that a Boolean matrix satisfying this property must be blocky. 
\end{proof}

Since every Boolean matrix with $\gamma_2$-norm $1$ is blocky, it is natural to ask whether Boolean matrices of bounded $\gamma_2$-norm can be characterized through blocky matrices.

We first show that if $F$ is generated by entry-wise operations from a few blocky matrices, it must have a small $\gamma_2$-norm. 

\begin{proposition}
\label{prop:FunctionalBlocky}
Consider $\cX\times \cY$ blocky matrices $B_1,\dots, B_r$ and a combining function $\Gamma:\zo^r\rightarrow \zo$. The  Boolean matrix $F_{\cX \times \cY}$ defined as  
\begin{equation}\label{eq:blockyfuncrank}
F(x,y) \defeq \Gamma(B_1(x,y),\dots, B_r(x,y))
\end{equation}
satisfies  $\norm{F}_{\gamma_2} \le 3^r$.  
\end{proposition}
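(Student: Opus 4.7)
}

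The plan is to express $F$ as a sum of at most $2^r$ Schur products of matrices that are each close to blocky, and then apply \cref{eq:algebra} together with \cref{prop:Liv}. First I would introduce the notation $B_i^{1} \defeq B_i$ and $B_i^{0} \defeq \J_{\cX\times \cY} - B_i$, so that for each $a \in \{0,1\}^r$, the entry-wise product $\bigcirc_{i=1}^r B_i^{a_i}$ is the $\{0,1\}$-valued indicator of the set $\{(x,y) : B_i(x,y) = a_i \text{ for all } i\}$. Since the fibres of the map $(x,y) \mapsto (B_1(x,y),\dots,B_r(x,y))$ partition $\cX \times \cY$, we can write
\[
F \;=\; \sum_{a \in \Gamma^{-1}(1)} \bigcirc_{i=1}^r B_i^{a_i}.
\]

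Next I would bound each summand. By \cref{prop:Liv}, $\norm{B_i}_{\gamma_2} = 1$, and since $\J_{\cX\times \cY}$ is itself blocky, $\norm{\J_{\cX\times \cY}}_{\gamma_2} = 1$, so by the triangle inequality $\norm{B_i^{0}}_{\gamma_2} = \norm{\J_{\cX\times \cY} - B_i}_{\gamma_2} \le 2$. Applying the submultiplicativity of the $\gamma_2$-norm under Schur products (\cref{eq:algebra}) and then the triangle inequality,
\[
\norm{F}_{\gamma_2} \;\le\; \sum_{a \in \Gamma^{-1}(1)} \prod_{i=1}^r \norm{B_i^{a_i}}_{\gamma_2} \;\le\; \sum_{a \in \Gamma^{-1}(1)} 2^{\abs{\{i : a_i = 0\}}}.
\]

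Finally, dropping the restriction $a \in \Gamma^{-1}(1)$ and summing over all $a \in \{0,1\}^r$ can only increase the right-hand side, giving
\[
\norm{F}_{\gamma_2} \;\le\; \sum_{k=0}^r \binom{r}{k} 2^k \;=\; 3^r,
\]
as desired.

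There is no real obstacle in this argument; the whole content is the observation that every fibre of the tuple $(B_1,\dots,B_r)$ can be written as an alternating Schur product, combined with the fact that blocky matrices form the unit ball of $\norm{\cdot}_{\gamma_2}$ among Boolean matrices. The only place where one might hope to sharpen the constant is in the triangle-inequality bound $\norm{\J - B_i}_{\gamma_2}\le 2$; tightening this (for instance by noting that complements of blocky matrices are rarely as large as $2$) would improve the base from $3$, but is not needed for the stated bound.
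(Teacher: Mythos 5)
Your proof is correct and is essentially the paper's argument with the induction unrolled: the paper peels off $B_1$ via $F = B_1\circ F_1 + (\J-B_1)\circ F_2$ and recurses, which when fully expanded is exactly your fiber decomposition into Schur products of the $B_i^{a_i}$, and the bookkeeping ($3^{r-1}+2\cdot 3^{r-1}$ there versus $\sum_{k}\binom{r}{k}2^{k}$ here) gives $3^r$ in both cases. One cosmetic point: you only need $\norm{B_i}_{\gamma_2}\le 1$ (which holds for every blocky matrix, including the all-zero one), so the appeal to the equality in \cref{prop:Liv} can be replaced by the generalization of \cref{prop:identity} noted in the paper.
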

\begin{proof}
We prove the statement by induction on $r$. The base case $r=0$ is trivial. Next, consider a Boolean matrix $F$ satisfying \cref{eq:blockyfuncrank}, and write $F = (B_1 \circ F_1) + (\J - B_1) \circ F_2$, where the entries of $F_1$ and $F_2$ depend only on $B_2,\ldots, B_{r}$ and $\J$ is the all-$1$ matrix. Note that $\J$ is a blocky matrix and satisfies $\norm{\J}_{\gamma_2}=1$. We have 
\begin{align*}
\norm{F}_{\gamma_2} &\le  \norm{B_1 \circ F_1}_{\gamma_2}+ \norm{(\J-B_1) \circ F_2}_{\gamma_2} \\ &\le \norm{F_1}_{\gamma_2} + (\norm{\J}_{\gamma_2} + \norm{B_1}_{\gamma_2})  \norm{F_2}_{\gamma_2} \le 3^{r-1}+2 \cdot 3^{r-1}=3^r.\qedhere
\end{align*}
\end{proof}

In \cite{HHH23}, we conjectured that Boolean matrices of small $\gamma_2$-norm are precisely those of the form \cref{eq:blockyfuncrank}.

\begin{conjecture}[\textcolor{red}{Update: Proved in \cite{goh2026characterizationidempotentschurmultipliers}}]
\label{conj:Blocky}
Suppose that $F$ is a Boolean matrix with $\norm{F}_{\gamma_2} \le c$. Then we
may write
\begin{equation}
F = \sum_{i=1}^L \pm B_i,  
\end{equation}
where $B_i$ are blocky matrices and $L \le \ell(c)$ for some integer $\ell(c)$ depending only on $c$.  
\end{conjecture}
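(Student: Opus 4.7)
The plan is to prove the conjecture by induction on $\norm{F}_{\gamma_2}$, with Livshits' characterization (\cref{prop:Liv}) providing the base case: a Boolean matrix with $\gamma_2$-norm exactly $1$ is blocky. The driving heuristic is that the space of matrices equipped with Schur product and $\gamma_2$-norm is a Banach algebra (by \cref{eq:algebra}) in which every Boolean matrix is Schur-idempotent, so the conjecture is naturally viewed as a Cohen-type idempotent theorem for the Schur-multiplier algebra. From this vantage point, the task reduces to identifying, and then peeling off, one blocky summand at a time while controlling the decrease in $\gamma_2$-norm.

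Concretely, I would proceed as follows. First, starting from a near-optimal $\mu$-norm decomposition $F = \sum_i \lambda_i B_i$ with $B_i$ rank-one Boolean and $\sum_i |\lambda_i| \le 4K \norm{F}_{\gamma_2}$ (Grothendieck), look for a blocky matrix $B$ such that either $F\circ B$ and $F \circ (\J - B)$ both have strictly smaller $\gamma_2$-norm than $F$, or such that the signed combination $F - B$ (respectively $F + B - \J$) has $\gamma_2$-norm at most $\norm{F}_{\gamma_2} - \delta(c)$ for some universal $\delta(c)>0$. Second, iterate: each step contributes a $\pm B_i$ to the sum promised by the conjecture and strictly decreases the $\gamma_2$-norm by $\delta(c)$, so the procedure terminates (reaching the zero matrix or a matrix of $\gamma_2$-norm $1$, which is itself blocky by \cref{prop:Liv}) after at most $O(c/\delta(c))$ steps, yielding a bound $\ell(c)$ depending only on $c$.

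The heart of the matter is producing the blocky matrix $B$ at each step. One promising route is combinatorial: group the rank-one summands $B_i$ in the $\mu$-norm decomposition by the sign of $\lambda_i$, and apply pigeonhole to extract a large combinatorial rectangle on which $F$ is constant; assembling a maximal disjoint collection of such rectangles into a blocky matrix should capture a definite fraction of the ``mass'' of $F$. A complementary analytic route is to use the factorization $F(x,y) = \inp{u_x}{v_y}$ with $\max_x \norm{u_x}_2 \cdot \max_y \norm{v_y}_2 \le c$: cluster the vectors $\{u_x\}$ and $\{v_y\}$ by approximate Gram structure, and argue that blocky pieces correspond to clusters of mutually near-orthogonal projections, mirroring how cosets arise via the spectrum in Cohen's theorem.

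The main obstacle is precisely this extraction step: there is no ready-made Cohen-type theorem for the Schur-multiplier algebra at the required level of generality. Even a qualitatively weaker statement---that every Boolean $F$ with $\norm{F}_{\gamma_2} \le c$ contains a monochromatic combinatorial rectangle of relative size at least $\tau(c)>0$---would already unlock the inductive step via \cref{prop:FunctionalBlocky}, yet this Ramsey-type consequence for $\gamma_2$-norm is itself open. Quantitatively, even granting the analogy with Cohen's theorem, one should not expect $\ell(c)$ to be polynomial: tower- or doubly-exponential-in-$c$ bounds are the realistic target, and squeezing any finite bound out of Banach-algebraic structure alone, without injecting genuinely combinatorial input about Boolean entries, is the principal difficulty.
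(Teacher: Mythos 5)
The statement you are asked to prove is \cref{conj:Blocky}, which is an \emph{open conjecture}: the paper offers no proof of it, and in fact explicitly records that even the far weaker consequence---that $\norm{F}_{\gamma_2}\le c$ forces a monochromatic rectangle of density $\kappa(c)>0$---is unknown for general Boolean matrices. The only cases settled in the paper are {\sc xor}-lifts and, more generally, group-lifts (\cref{thm:Cohen}, \cref{thm:Cayley}), and these are obtained by importing the quantitative Cohen idempotent theorem of Green--Sanders and its extensions, i.e.\ heavy additive-combinatorial input, not by a peeling argument in the Schur-multiplier Banach algebra. Your proposal candidly concedes that the extraction step (finding a blocky $B$ capturing a definite fraction of $F$, with parameters depending only on $c$) is missing; but that step \emph{is} the conjecture, so what you have written is a heuristic programme rather than a proof, and it cannot be credited as one.

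Beyond the missing core, several of the surrounding claims would not survive scrutiny even if an extraction lemma were granted. (i) The iteration needs $\norm{F\mp B}_{\gamma_2}\le\norm{F}_{\gamma_2}-\delta(c)$; the triangle inequality and \cref{eq:algebra} give no such decrease, and nothing in the structure of blocky matrices guarantees that removing one reduces the $\gamma_2$-norm at all. (ii) After one subtraction the matrix is no longer Boolean (its entries lie in $\{-1,0,1\}$ or worse), so the terminal appeal to \cref{prop:Liv}, which characterizes \emph{Boolean} matrices of $\gamma_2$-norm $1$, does not apply, and Schur-idempotence is lost after the first step, undermining the ``Cohen-type'' framing of the induction. (iii) The claim that a monochromatic rectangle of density $\tau(c)$ ``unlocks the inductive step via \cref{prop:FunctionalBlocky}'' is backwards: that proposition bounds $\gamma_2$ \emph{given} a blocky decomposition, and a single dense rectangle only permits a recursion whose depth grows with the matrix size, so it yields no bound $L\le\ell(c)$ independent of dimensions; the paper states the rectangle statement as a consequence of \cref{conj:Blocky}, not a route to it. If you want partial credit toward this problem, the honest targets are the known special cases via Cohen-type theorems, or the rectangle question itself---not a generic Banach-algebraic peeling scheme.
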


\begin{updatebox}
\begin{itemize}
\item  Balla, Hambardzumyan, and Tomon~\cite{bht26} proved that every Boolean matrix with  $\norm{F}_{\gamma_2} \le c$ contains a monochromatic rectangle of density $2^{-O(c^3)}$. 
\vskip5pt

\item  Goh and Hatami~\cite{blockysubset2025} proved that the support of such a matrix contains a blocky matrix covering a significant portion of its $1$-entries. In~\cite{MR5000800}, the same authors proved that for every such $n \times n$ matrix, one may take $L \le 2^{O(c^7)} \log^2n$ in \Cref{conj:Blocky}.
\vskip5pt 

\item Beke, Goh, Hatami, Jaffe, and Naylor \cite{goh2026characterizationidempotentschurmultipliers}  finally established the conjecture, proving the bound $L \le 2^{O(c^6)}$. 
\end{itemize}
\end{updatebox}

\cref{conj:Blocky} is inspired by Cohen's idempotent theorem, and it is known to be true for a large class of Boolean matrices, including the {\sc xor}-lifts of Boolean functions. 

Recall that the   {\sc xor}-lift of a function $f:\Z_2^n \to \{0,1\}$ is the matrix $f^\oplus:\Z_2^n \times \Z_2^n \to \{0,1\}$ defined as $f^\oplus(x,y)=f(x + y)$. 

The sum of the absolute values of the Fourier coefficients of a function $f:\F_2^n \to \R$ is called the \emph{Fourier algebra norm} or \emph{spectral norm} of $f$ and is denoted by
\[ \norm{f}_A \defeq \norm{\widehat{f}}_1 = \sum_{\chi \in \widehat{G}} |\widehat{f}(\chi)|.\] 
The following identity relating the Fourier algebra norm of $f$ to the $\gamma_2$-norm of its {\sc xor} lift is due to~\cite[Lemma 36]{linial2009lower}.
\begin{equation}\label{eq:Avsgamma2}
\norm{f}_A  = \norm{f^\oplus}_{\gamma_2}. 
\end{equation}
Therefore, in the case of {\sc xor}-lifts, the assumption $\norm{f^\oplus}_{\gamma_2} \le c$ of \cref{conj:Blocky} is equivalent $\norm{f}_{A} \le c$. The structure of Boolean functions $f:\Z_2^n \to \{0,1\}$ with $\norm{f}_A \le c$ is characterized by the quantitative version of the so-called Cohen's idempotent theorem for $\Z_2^n$. 
 
\begin{theorem}[Quantitative Cohen's theorem for $\Z_2^n$ \cite{MR133397,Green_Sanders}] \label{thm:Cohen}
If $S \subseteq  \Z_2^n$ satisfies $\norm{\1_S}_{A} \le c$, we may write 
\begin{equation}
\1_S = \sum_{i=1}^L \pm \1_{H_i+a_i},  
\end{equation}
where $H_i+a_i$ are cosets and $L \le \ell(c)$ for some integer $\ell(c)$ depending only on $c$.  
\end{theorem}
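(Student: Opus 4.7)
The plan is to prove the theorem by an iterative ``peeling'' argument, strengthening the inductive statement to cover all integer-valued functions rather than only indicators. Concretely, I would establish the following \emph{main claim}: if $f : \Z_2^n \to \Z$ satisfies $\norm{f}_A \le c$ and $f \not\equiv 0$, then there exist a coset $H + a \subseteq \Z_2^n$ and a sign $\epsilon \in \{-1,+1\}$ such that $g \defeq f - \epsilon \1_{H + a}$ is again integer-valued with $\norm{g}_A \le c - \delta(c)$, for some $\delta(c) > 0$ depending only on $c$. Starting from $f = \1_S$ and iterating this claim, the algebra norm decreases by at least $\delta(c)$ per step, so after at most $\ell(c) \defeq O(c/\delta(c))$ iterations we reach the zero function, unfolding into the desired signed decomposition of $\1_S$ into indicators of cosets.

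To execute the peeling step, I would first invoke a Chang-type lemma in $\F_2^n$: because $\norm{f}_A = \sum_\chi |\widehat{f}(\chi)| \le c$, the ``large spectrum'' $\mathrm{Spec}_\eta(f) \defeq \{\chi : |\widehat{f}(\chi)| \ge \eta\}$ contains at most $c/\eta$ characters for any threshold $\eta > 0$, and in particular is contained in a subspace $W \le \widehat{\Z_2^n}$ of dimension bounded by a function of $c$. Writing $W = V^\perp$, the function $f$ is then well approximated in $L^2$ by its averages over cosets of $V$, since characters outside $V^\perp$ contribute little. A pigeonhole argument over these cosets produces some $V + a$ on which $|\E_{x \in V + a} f(x)| \ge \delta_0(c) > 0$; I would set $H = V$ and take $\epsilon$ to match the sign of this average. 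Integrality of $g$ is automatic since both $f$ and $\1_{V+a}$ are integer-valued.

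The hard part will be establishing the \emph{quantitative spectral gap} --- showing that this peeling genuinely decreases $\norm{\cdot}_A$ by a definite amount, rather than merely redistributing Fourier mass between characters. This requires a careful Plancherel-type calculation comparing $\norm{f}_A$ and $\norm{g}_A$, exploiting both that $\widehat{\1_{V+a}}$ is supported on $V^\perp$ (where $f$'s Fourier mass is already concentrated) and that $f$ is integer-valued (which prevents pathological rational cancellations). In the Green--Sanders argument this gap is established via a subtle combination of Bogolyubov's lemma, Chang's theorem, and a Freiman-type structural theorem for sets of small doubling. A secondary subtlety is that the codimension of $V$ generally grows with each iteration, so the bound $\ell(c)$ one ultimately obtains is superpolynomial in $c$; quantitative improvements here are closely linked to the Polynomial Freiman--Ruzsa conjecture over $\F_2$.
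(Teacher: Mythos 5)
The paper does not prove this statement at all: it is quoted as a black box from Cohen's idempotent theorem and the Green--Sanders quantitative version, so there is no internal proof to compare with, and your sketch must stand on its own as an outline of that (long and difficult) argument. It does not. The central ``main claim'' --- that from any nonzero integer-valued $f$ with $\norm{f}_A\le c$ one can subtract a \emph{single} signed coset indicator and decrease the algebra norm by some $\delta(c)>0$ --- is exactly the whole difficulty, and the route you propose to it breaks at both steps. First, the Chang/Markov step plus pigeonhole cannot produce the coset you need: the large spectrum is confined to a subspace whose dimension depends on the threshold, and averaging over cosets of a subspace of codimension bounded in terms of $c$ only yields an average bounded below by a quantity depending on the \emph{density} of $S$, not on $c$. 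For example, $S=\{0\}\subseteq\Z_2^n$ has $\norm{\1_S}_A=1$, yet its average over every coset of codimension $d$ is $2^{d-n}$, so no $\delta_0(c)$ exists; the coset one must peel is $\{0\}$ itself, of codimension $n$, which no bounded-codimension Chang argument will locate. Second, even granted a coset $H+a$ on which $f$ has noticeable average, nothing forces $\norm{f-\epsilon\1_{H+a}}_A\le\norm{f}_A-\delta(c)$: a large average does not give $\ell_1$-alignment of Fourier coefficients, and since $\norm{\1_{H+a}}_A=1$ the subtraction may well \emph{increase} the algebra norm. You acknowledge this as ``the hard part'' and defer it to Green--Sanders, but that is precisely the content of the theorem being proved.

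For the record, the actual Green--Sanders induction is not a one-coset-at-a-time peeling with a $c$-dependent decrement: it works with functions that are only \emph{almost} integer-valued, at each step subtracts a $\pm1$ combination of (possibly very many) cosets of a single subgroup-like structure, and gains a norm decrement of an absolute constant (about $1/2$), with the heavy lifting done by quantitative Bogolyubov/Chang-type estimates inside Bourgain systems rather than by a pigeonhole over cosets of a fixed small-codimension subspace. So while your high-level ``induct on the algebra norm'' instinct points in the right direction, the specific mechanism you give for the inductive step is not merely incomplete but would fail as stated, and the sketch therefore has a genuine gap at its core.
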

 Note that the {\sc xor}-lift of every $\1_{H_i+a_i}$ is a blocky matrix, and therefore,  \cref{thm:Cohen} verifies \cref{conj:Blocky} for  {\sc xor}-lifts. In fact, these results extend to every finite group. Given a finite group $G$, we can generalize the notion of {\sc xor}-lifts to \emph{group}-lifts, where we define $F(x,y)=f(y^{-1}x)$ for $f:G\rightarrow \C$. The notion of algebra norm also generalizes to other finite groups.  We will not give the original definition of the algebra norm, but for the purposes of this paper, it suffices to know that similar to \cref{eq:Avsgamma2}, we have  $\norm{f}_A=\norm{F}_{\gamma_2} $. This identity was observed in \cite{HHH23} based on a result of Davidson and Donsig~\cite{MR2379721}.

\begin{theorem}[Quantitative Cohen's theorem for finite groups~\cite{MR2773105}]
\label{thm:Cayley}
There exists a function $\ell:\N \to \N$ such that the following holds. For every finite group $G$ and every $S \subseteq G$, if $F_{G \times G}(x,y)=\1_S(y^{-1}x)$ satisfies  $\norm{F}_{\gamma_2} \le c$, we may write 
\begin{equation}
\1_S = \sum_{i=1}^L \pm \1_{H_ia_i},  
\end{equation}
where $H_ia_i$ are cosets and $L \le \ell(c)$. In particular, 
$ F = \sum_{i=1}^L \pm B_i$, 
where $B_i$ are blocky matrices.
\end{theorem}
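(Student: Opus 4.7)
The plan is to reduce the claim to the non-abelian quantitative form of Cohen's idempotent theorem from \cite{MR2773105} by transporting the hypothesis on $F$ through the identity $\norm{F}_{\gamma_2}=\norm{f}_A$ observed in \cite{HHH23} via Davidson--Donsig. Specifically, the hypothesis $\norm{F}_{\gamma_2}\le c$ on the group-lift $F(x,y)=\1_S(y^{-1}x)$ becomes $\norm{\1_S}_A\le c$ in the Fourier algebra of $G$. Since $\1_S$ is an idempotent of the group algebra, this is exactly the input needed to invoke the finite-group (possibly non-abelian) version of quantitative Cohen: it yields a signed decomposition
\[
\1_S=\sum_{i=1}^L\pm\1_{H_i a_i},\qquad L\le\ell(c),
\]
with $H_i$ subgroups of $G$ and $a_i\in G$, for some function $\ell$ depending only on $c$.

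Next, I would check that the group-lift of a coset indicator is blocky. Writing $B_{Ha}(x,y)\defeq\1_{Ha}(y^{-1}x)$, the condition $y^{-1}x\in Ha$ is equivalent to $x\in yHa$. Partitioning $G$ into left cosets $G=\bigsqcup_{gH\in G/H}gH$ and noting that right multiplication by $a$ is a bijection on $G$ (so that the sets $gHa$ for distinct $gH$ remain pairwise disjoint), the support of $B_{Ha}$ equals
\[
\supp(B_{Ha})=\bigsqcup_{gH\in G/H}(gHa)\times(gH),
\]
which is a disjoint union of combinatorial rectangles of the required form. Hence each $B_{Ha}$ is blocky. The group-lift operation $f\mapsto F$ is $\R$-linear, so applying it to the decomposition of $\1_S$ gives $F=\sum_{i=1}^L\pm B_i$ with $B_i$ blocky, completing the proof.

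The genuine work is encapsulated in the cited non-abelian quantitative Cohen theorem: producing a bounded signed sum of coset indicators from an idempotent with bounded algebra norm on an arbitrary finite group is a substantial piece of noncommutative harmonic analysis. A secondary but still nontrivial ingredient is the identity $\norm{F}_{\gamma_2}=\norm{f}_A$ itself; for abelian $G$ this follows from elementary Fourier analysis as in \cref{eq:Avsgamma2}, but in general it requires the Davidson--Donsig characterization of completely bounded Schur multipliers to equate the $\gamma_2$-factorization norm of the group-lift with the Fourier algebra norm of the underlying function.
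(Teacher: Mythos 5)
Your proposal is correct and follows the same route the paper takes: transport the hypothesis via the identity $\norm{F}_{\gamma_2}=\norm{\1_S}_A$ (observed in \cite{HHH23} using Davidson--Donsig), invoke the quantitative non-abelian idempotent theorem of \cite{MR2773105} to get the signed coset decomposition, and note that the group-lift of each coset indicator is a blocky matrix. Your explicit verification that $\supp(B_{Ha})=\bigsqcup_{gH}(gHa)\times(gH)$ is blocky is exactly the routine step the paper leaves implicit (it spells it out only in the abelian {\sc xor}-lift case), so there is nothing to correct.
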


\cref{{thm:Cayley}} verifies \cref{conj:Blocky} for the adjacency matrix of every Cayley (directed) graph.  On the other hand, for general Boolean matrices, it is not even known whether $\norm{F}_{\gamma_2} \le c$  implies $\rect(F) \ge \kappa(c)$ for some $\kappa(c)>0$, which would be an easy consequence of \cref{conj:Blocky}. 

\begin{updatebox}
This was first shown by Balla, Hambardzumyan,
and Tomon~\cite{bht26}, who established the bound $\kappa(c)\ge 2^{-O(c^3)}$.
\end{updatebox}

Regarding the bound in \cref{thm:Cayley}, for general finite groups, one can take $\ell(c) = A(6, O(c))$, where $A$ is the Ackermann function~\cite{MR2773105}. For the case of finite Abelian groups, a better bound of $\ell(c)= 2^{O(c^4 \polylog(c))}$ is due to~\cite{sanders2020bounds}.

In the special case of $\Z_2^n$ which corresponds to the {\sc xor}-lifts, the best bound~\cite{sanders_2019} that appears in the literature is $\ell(c) \le 2^{O(c^3 \polylog(c))}$. However, recently, Gowers, Green, Manners, and Tao~\cite{gowers2023conjecture} announced a proof for Morton's conjecture (aka polynomial Freiman–Ruzsa conjecture). \textcolor{red}{[\textbf{Erratum:}} Substituting this result in Sanders' proof for \cite[Proposition 2]{sanders_2019} shows that in the case of $\Z_2^n$, one may take  $\ell(c)= 2^{O(c \polylog(c))}$. \textcolor{red}{This claim is incorrect. The best known bound is still $\ell(c) \le 2^{O(c^3 \polylog(c))}$: as explained in the last paragraphs of Section 2 of \cite{sanders_2019}, the constant $3$ arises at two different points in Sanders' proof, and the resolution of Morton's conjecture only addresses one of these two points.]}

\subsection{Equality Oracle Protocols} 
\textsc{Equality} is the canonical problem with the strongest possible separation between deterministic and randomized communication complexities. We have $\DD(\EQ_n)=n+1$, which is the largest possible value for any $n$-bit communication problem. On the other hand, as we will discuss in \cref{sec:BPP},  the randomized communication complexity of $\EQ_n$ is only $O(1)$.   

We know that the deterministic model cannot solve \textsc{Equality} efficiently. What if we augment the model with an equality oracle?  Does this result in a significantly stronger model? Can this model efficiently solve every problem with small randomized communication complexity? 

Formally, in the deterministic communication model with access to an \textsc{Equality} oracle, a protocol for a Boolean matrix $F_{\cX \times \cY}$ corresponds to a binary tree. Each non-leaf node $v$ in the tree is labelled with two functions $a_v:\cX \to \{0,1\}^m$ and $b_v:\cY \to \{0,1\}^m$ for some $m$.   On this node, the players map their inputs to strings $a_v(x)$ and $b_v(y)$, respectively,  and the oracle will broadcast the value of $\EQ_m(a_v(x), b_v(y))$  to both players. This will contribute only $1$ to the cost of the protocol. Note that the oracle queries can simulate sending one-bit messages from each party to the
other one. For example, if it is Alice's turn to send a bit $a$, the query $\EQ_1(a,1)$ can transmit it to Bob. Hence, in this model, we can assume that all the communication is through oracle queries.  
 
Let $\DD^{\EQ}(F)$ denote the smallest cost of a deterministic protocol with equality oracle for the matrix $F$, and define $\Pcc^\EQ$ and $\Pcc^\EQ_0$ to be, respectively, the class of problems with polylogarithmic and constant communication costs in this model.  

 In the same way that combinatorial rectangles are the building blocks of deterministic communication protocols, blocky matrices serve as the foundational components of equality oracle protocols. Indeed, every node $v$ of an equality oracle protocol for computing $F(x,y)$ corresponds to $B_v(x,y)=\EQ(a_v(x),b_v(y))$ where $B_v$ is a blocky matrix.

\cref{eq:rank_lower_bound} characterizes $\Pcc_0$ as the set of problems with $O(1)$ rank. Can we obtain a similar characterization for $\Pcc^\EQ_0$ via the blocky matrices? To this end, let us define a notion of rank based on blocky matrices.
 
\begin{definition}[Blocky Rank]
The \emph{blocky rank} of a real matrix $A$, denoted $\blockyrank(A)$, is the smallest integer $r$ such that $A$ is a real linear combination of $r$ blocky matrices.
\end{definition}

Blocky rank has interesting connections to circuit and communication complexity theory~\cite{AmirBlocky,HHH23}. The following proposition shows analogous bounds to \cref{eq:gamma_2} on $\DD^{\EQ}(F)$, and implies that a matrix family $\{F_n\}$ is in $\Pcc^\EQ_0$ iff $\blockyrank(F_n)=O(1)$.

\begin{proposition}[\cite{HHH23}] 
\label{prop:EqOracle}
For every Boolean matrix $F_{\cX \times \cY}$, we have
\[ \frac{1}{2}\log \blockyrank(F)  \le \DD^\EQ(F) \le \blockyrank(F)\]
   and
\begin{equation}\label{eq:gamma2vseq}
\log \norm{F}_{\gamma_2} \leq 2\cdot \DD^{\EQ}(F).
\end{equation}
\end{proposition}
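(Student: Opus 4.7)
The plan is to prove all three inequalities by exploiting the same structural observation: blocky matrices are closed under Schur product, and an equality query exactly computes a blocky matrix.

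\textbf{Upper bound $\DD^\EQ(F) \le \blockyrank(F)$.} Given a decomposition $F = \sum_{i=1}^r \lambda_i B_i$ with $r = \blockyrank(F)$ and $B_i$ blocky (with blocks $\cX^{(i)}_j \times \cY^{(i)}_j$), I would build a protocol that makes exactly $r$ oracle queries. At query $i$, Alice sends $a_i(x) := j$ if $x\in\cX^{(i)}_j$ and a fresh symbol unique to $x$ otherwise; Bob defines $b_i(y)$ analogously for the $\cY^{(i)}_j$'s. Then $\EQ(a_i(x),b_i(y)) = B_i(x,y)$, so after $r$ queries both players know $B_1(x,y),\dots,B_r(x,y)$ and can evaluate $F(x,y)$.

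\textbf{Lower bound $\frac{1}{2}\log\blockyrank(F) \le \DD^\EQ(F)$.} Fix a protocol of cost $c$. At each internal node $v$ the query matrix $B_v(x,y) := \EQ(a_v(x), b_v(y))$ is blocky, with blocks indexed by the strings in the common image of $a_v$ and $b_v$. For any leaf $\ell$, the indicator matrix of the set $R_\ell$ of inputs reaching $\ell$ is the Schur product, along the root-to-$\ell$ path, of factors $M_v \in \{B_v,\ \J - B_v\}$. The key structural fact is that if $B$ has blocks $\{\cX_i\times\cY_i\}$ and $B'$ has blocks $\{\cX'_j\times\cY'_j\}$, then $B\circ B'$ is blocky with blocks $\{(\cX_i\cap\cX'_j)\times(\cY_i\cap\cY'_j)\}$ (these are pairwise disjoint on each side). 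Since $\J$ is blocky, $\J - B_v$ has blocky rank at most $2$, so each factor $M_v$ has blocky rank at most $2$, and expanding the Schur product gives $\blockyrank(\mathbf{1}_{R_\ell}) \le 2^c$. Summing over the at-most-$2^c$ leaves labeled $1$ yields $\blockyrank(F)\le 2^{2c}$.

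\textbf{The $\gamma_2$ bound.} The same decomposition gives the third inequality. Each $B_v$ is blocky, hence $\|B_v\|_{\gamma_2}=1$ by Proposition~\ref{prop:Liv}, and by the triangle inequality $\|\J - B_v\|_{\gamma_2}\le 2$. Using the multiplicativity property~(\ref{eq:algebra}) of $\|\cdot\|_{\gamma_2}$ under Schur product along the path, $\|\mathbf{1}_{R_\ell}\|_{\gamma_2}\le 2^c$, and then a further triangle inequality over the (at most $2^c$) accepting leaves gives $\|F\|_{\gamma_2}\le 2^{2c}$, i.e.\ $\log\|F\|_{\gamma_2}\le 2\DD^\EQ(F)$.

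The main substantive step is really just the Schur-product lemma for blocky matrices in the lower-bound argument; once that is observed, both the blocky-rank bound and the $\gamma_2$-bound fall out of identical bookkeeping, and the upper bound is a direct protocol construction. The only subtlety to be careful about is that the ``fresh symbol'' trick in the upper bound works over any alphabet of sufficient size, so the range of $a_i$ and $b_i$ must be chosen large enough to make all off-block symbols distinct, which poses no difficulty.
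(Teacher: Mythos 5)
Your proposal is correct and follows essentially the same route as the paper: the protocol upper bound from a blocky decomposition, and the lower bounds by writing the leaf rectangles as Schur products of factors $B_v$ or $\J-B_v$, expanding via the fact that Schur products of blocky matrices are blocky, and summing over the at most $2^c$ accepting leaves. Your $\gamma_2$ step invokes submultiplicativity of $\norm{\cdot}_{\gamma_2}$ under Schur product directly rather than reading the bound off the resulting $\pm 1$ blocky decomposition, but this is the same bookkeeping and yields the identical bound $2^{2\DD^\EQ(F)}$.
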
	
\begin{proof}
We first prove $\DD^\EQ(F) \le \blockyrank(F)$. Let $k= \blockyrank(F)$.  We construct an $\EQ$-oracle protocol for $F$. In advance, Alice and Bob agree on a decomposition $F=\sum_{i=1}^k \lambda_i B_i$, where $B_i$ is a blocky matrix and $\lambda_i \in \R$ for $i\in [k]$. Since each blocky matrix $B_i$ corresponds to an $\EQ$ query, for an input $(x,y)$, Alice and Bob make $k$ queries to the oracle to determine  $F(x,y)$.

For the lower bounds, let $d=\DD^{\EQ}(F)$. Consider a leaf $\ell$ in the $\EQ$-oracle protocol tree computing $F$ and let $P_{\ell}$ denote the path of length $k_{\ell} \le d$ from the root to $\ell$. Note that each non-leaf node $v$ in the tree corresponds to a query to the equality oracle, and each such query corresponds to a blocky matrix $B_v$. Define
$B_{v}^1 = B_v$ and $B_{v}^0 = \neg B_v = \J_{\mathcal{X} \times \mathcal{Y}} - B_v$.

Suppose $P_{\ell}=v_1,v_2, \ldots, v_{k_\ell}, \ell$, and consider the matrix
\[F_{P_{\ell}} \coloneqq B_{v_1}^{\sigma_{v_1}} \circ B_{v_2}^{\sigma_{v_2}} \circ \ldots \circ B_{v_{k_{\ell}}}^{\sigma_{v_{k_\ell}}},\]
where $\sigma_{v_i} \in \{0,1\}$ and $\sigma_{v_i} = 1$ iff the edge $(v_{i-1},v_i)$ is labelled by $1$. Hence, after simplification,  $F_{P_\ell}$ can be written as a sum of at most $2^d$ summands with $\pm 1$ coefficients, where each summand is a Schur product of at most $k_\ell$ blocky matrices. Observe that the Schur product of two blocky matrices is a blocky matrix. Thus, $F_{P_\ell}$ is a sum of at most $2^d$ blocky matrices with $\pm 1$ coefficients.

Summing over all the leaves that are labelled by $1$, we get
$F = \sum_{\ell \text{ is a 1-leaf}} F_{P_{\ell}}$. As the number of leaves is bounded by $2^d$, and each $F_{P_\ell}$ is a $\pm 1$ linear combination of at most $2^d$ blocky matrices, we have $\blockyrank(F) \leq 2^{2d}$ and $\norm{F}_{\gamma_2} \leq 2^{2d}$.
\end{proof}
 
\subsection{Analogue of the log-rank conjecture for blocky rank is false}

The log-rank conjecture speculates that the deterministic communication complexity is polynomially equivalent to the logarithm of the rank. In light of \cref{prop:EqOracle} it is natural to ask a similar question for $\DD^\EQ$ and  $\log\blockyrank$. Arkadev Chattopadhyay\footnote{private communication, no pun intended!} observed that the recent counter-example to the so-called log-approximate-rank conjecture by  Chattopadhyay, Mande, and Sherif~\cite{approx} implies that the answer is negative. 

Recall that a node in a directed graph is called a sink if all of its adjacent edges are incoming. 
Define a function $\SINK_m: \{0, 1\}^{m \choose 2} \to \{0,1\}$  where the input of length ${m \choose 2}$ specifies the orientation of the edges of the complete graph on $m$ vertices. The function outputs $1$ if there is a vertex that is a sink in the given orientation of edges and $0$ otherwise.  

Fix $m$, and for $i\in [m]$, define $\psi_i: \{0,1\}^{m \choose 2} \to \{0,1\}$ to be the indicator function of whether $i$ is a sink in the orientation given by $x$. Note that 
\begin{equation}
\label{eq:sink}
\psi_i(x)=1 \Leftrightarrow x_{j,i}=1 \ \forall j \neq i,
\end{equation}
where $x_{j,i}=1$ indicates that the edge between $i$ and $j$ is oriented towards $i$. Since no orientation of the complete graph has more than one sink, we have 
\[ \SINK_m(x) = \sum_{i=1}^m \psi_i(x).\]

We will consider the family of {\sc xor}-lifts of sink functions. Recall that the {\sc xor}-lift of a function $f:\{0,1\}^n \to \R$ is $f^\oplus:\{0,1\}^n  \times \{0,1\}^n  \to \R$ with $f^\oplus(x,y) \defeq f(x \oplus y)$. We have  
\[\SINK_m^\oplus = \sum_{i=1}^m \psi_i^\oplus.\]
It follows from \cref{eq:sink} that each $\psi_i^\oplus$ is a blocky matrix, and therefore $\blockyrank(\SINK_m^\oplus) \le m$. On the other hand,  Chattopadhyay, Mande, and Sherif~\cite{approx} prove that the $\Rcc(\SINK_m^\oplus)=\Theta(m)$.  Since $\Rcc$ provides a lower bound on $\DD^\EQ$ (see \cref{eq:DEQvsR}), we obtain the following theorem. 

\begin{theorem}[Chattopadhyay, Mande, and Sherif~\cite{approx}]
For the family of  Boolean matrices $F_m={\SINK}_m^\oplus$, we have 
$
\DD^{\EQ}(F_m)= \widetilde{\Omega}(m)$ and $  \blockyrank(F_m) \le m$.  

\end{theorem}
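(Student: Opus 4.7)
My plan has two parts, corresponding to the two claims in the theorem.

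\textbf{Upper bound on blocky rank.} I would spell out why each $\psi_i^\oplus$ is a blocky matrix, which then immediately yields $\blockyrank(F_m) \le m$ from the decomposition $\SINK_m^\oplus = \sum_{i=1}^m \psi_i^\oplus$ already noted above. By \cref{eq:sink}, $\psi_i(z) = 1$ iff $z_{j,i} = 1$ for all $j \neq i$. After the XOR-lift this becomes $\psi_i^\oplus(x,y)=1$ iff $x_{j,i} \oplus y_{j,i} = 1$ for all $j \neq i$, i.e., iff Alice's vector $a_i(x) \defeq (x_{j,i})_{j \neq i}$ equals Bob's vector $b_i(y) \defeq (\neg y_{j,i})_{j \neq i}$. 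Grouping rows by the value of $a_i(x)$ and columns by $b_i(y)$, the $1$-entries of $\psi_i^\oplus$ form a union of disjoint combinatorial rectangles indexed by strings in $\{0,1\}^{m-1}$, one block per string. Hence $\psi_i^\oplus \in \Blocky$, and summing over $i$ gives the claimed bound.

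\textbf{Lower bound on $\DD^{\EQ}$.} This is the substantive part, but it is handed to us by combining two external inputs. First, the referenced inequality $\Rcc(F) \lesssim \DD^{\EQ}(F) \cdot \log \DD^{\EQ}(F)$ (the forward-referenced \cref{eq:DEQvsR}): any $\EQ$-oracle protocol of cost $d$ can be simulated by a public-coin randomized protocol by replacing each $\EQ$-query with a $O(\log d)$-bit randomized $\EQ$-protocol whose error is amplified to $O(1/d)$, then union-bounding over the $d$ queries along a root-to-leaf path. Second, the Chattopadhyay--Mande--Sherif lower bound $\Rcc(\SINK_m^\oplus) = \Omega(m)$. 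Putting these together,
\[
m \;\lesssim\; \Rcc(\SINK_m^\oplus) \;\lesssim\; \DD^{\EQ}(F_m) \cdot \log \DD^{\EQ}(F_m),
\]
which rearranges to $\DD^{\EQ}(F_m) = \widetilde{\Omega}(m)$, i.e., $\Omega(m / \log m)$.

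\textbf{Main obstacle.} There is essentially no new technical obstacle inside this theorem's proof: the content lives in the CMS lower bound on $\Rcc(\SINK_m^\oplus)$, which is cited as a black box. The only thing one has to be careful about is accounting for the polylogarithmic loss incurred by the error-amplification step when simulating $\EQ$-queries randomly — this is exactly why the statement reads $\widetilde{\Omega}(m)$ rather than $\Omega(m)$. If one wanted the cleaner $\Omega(m)$ bound, one would need either to sharpen the $\Rcc \lesssim \DD^{\EQ} \log \DD^{\EQ}$ simulation or to prove the lower bound directly in the $\EQ$-oracle model, neither of which I would attempt here.
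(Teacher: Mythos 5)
Your proposal is correct and follows essentially the same route as the paper: the blocky-rank bound comes from observing that each $\psi_i^\oplus$ is a blocky matrix in the decomposition $\SINK_m^\oplus=\sum_i \psi_i^\oplus$, and the lower bound combines the Chattopadhyay--Mande--Sherif bound $\Rcc(\SINK_m^\oplus)=\Omega(m)$ with the simulation $\Rcc(F)\lesssim \DD^{\EQ}(F)\log \DD^{\EQ}(F)$ of \cref{eq:DEQvsR}. Your added detail (the explicit equality reformulation $a_i(x)=b_i(y)$ showing blockiness, and the error-amplification accounting behind the $\widetilde{\Omega}$) is exactly the justification the paper leaves implicit.
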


\subsection{Blocky matrices and Idempotents of Schur Multipliers} 
Let $\cX$ and $\cY$ be countable sets,
and let $B(\cY,\cX)$ denote the space of bounded linear operators $A:\ell_2(\cY) \to \ell_2(\cX)$ endowed with the operator norm:
\[ \norm{A} = \sup_{x \in \ell_2(\cY): \ \norm{x}_2=1} \norm{Ax}_2.\]
A matrix $M_{\cX \times \cY}$ is called a \emph{Schur multiplier} if, for every $A \in B(\cY,\cX)$, we have $M \circ A \in B(\cY,\cX)$. In other words, $\norm{M \circ A}<\infty$ for every $A=A_{\cX \times \cY}$ with $\norm{A}<\infty$. Note that Schur multipliers form an algebra with addition and Schur product: If $M_1$ and $M_2$ are Schur multipliers, then $M_1+M_2$ and $M_1 \circ M_2$ are both Schur multipliers.

Every Schur multiplier $M$  defines a map  $B(\cY,\cX) \to B(\cY,\cX)$ via  $A \mapsto M \circ A$, which assigns an operator norm to it:  
\[\norm{M}_\m \coloneqq \norm{M}_{B(\cY,\cX) \to B(\cY,\cX)}= \sup_{\substack{A \in B(\cY,\cX) \\  \norm{A}=1}}  \norm{M \circ A}.\]
Note that $\norm{\cdot}_\m$ is an algebra norm as for every $M_1$ and $M_2$, we have
\[\norm{M_1 \circ M_2}_\m \le  \norm{M_1}_\m \norm{M_2}_\m.\]
In other words, the algebra of Schur multipliers endowed with the norm $\norm{\cdot}_\m$ is a Banach algebra. A classical result, due to Grothendieck, shows that the multiplier norm coincides with the $\gamma_2$-norm. 
\begin{proposition}[{See~\cite[Theorem 5.1]{MR1441076}}] 
For every matrix $A$, we have $\norm{A}_\m = \norm{A}_{\gamma_2}$.
\end{proposition}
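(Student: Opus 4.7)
The plan is to prove the two inequalities $\norm{A}_\m \le \norm{A}_{\gamma_2}$ and $\norm{A}_{\gamma_2} \le \norm{A}_\m$ separately. The forward inequality is an explicit operator identity, while the reverse requires either a duality argument or an operator-space factorization theorem.

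For $\norm{A}_\m \le \norm{A}_{\gamma_2}$, I would start from an optimal factorization $A(x,y)=\inp{u_x}{v_y}$ satisfying $\sup_x\norm{u_x}_2 \cdot \sup_y\norm{v_y}_2 = \norm{A}_{\gamma_2}$ (after a harmless rescaling that balances the two suprema). Define the ``column-diagonal'' operators $D_u : \ell_2(\cX) \to \ell_2(\cX) \otimes \ell_2^d$ by $D_u e_x = e_x \otimes u_x$ and $D_v : \ell_2(\cY) \to \ell_2(\cY) \otimes \ell_2^d$ by $D_v e_y = e_y \otimes v_y$; these satisfy $\norm{D_u}=\sup_x\norm{u_x}_2$ and $\norm{D_v}=\sup_y\norm{v_y}_2$. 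A direct computation on basis vectors, using $D_u^*(e_x \otimes w) = \inp{w}{u_x} e_x$, shows that for every $B \in B(\cY,\cX)$,
\[ A \circ B \;=\; D_u^* \, (B \otimes I_d) \, D_v. \]
Taking operator norms gives $\norm{A \circ B} \le \norm{D_u}\,\norm{B}\,\norm{D_v} = \norm{A}_{\gamma_2}\norm{B}$, which is the desired bound on $\norm{A}_\m$.

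For the reverse inequality $\norm{A}_{\gamma_2} \le \norm{A}_\m$, my plan is a Hahn--Banach / duality argument. First I would compute the dual norms of both $\norm{\cdot}_\m$ and $\norm{\cdot}_{\gamma_2}$ on the space of finitely supported matrices: the dual of $\norm{\cdot}_{\gamma_2}$ is a nuclear-type norm, namely the infimum of $\sum_i \norm{\xi_i}_2\norm{\eta_i}_2$ over rank-one decompositions of the dual matrix. A parallel computation, obtained by testing the Schur multiplier against rank-one operators $\xi\eta^*$ and then against trace-class operators, yields exactly the same characterization for the dual of $\norm{\cdot}_\m$. Equality of the duals then forces equality of the original norms. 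An alternative route invokes Wittstock's factorization theorem for completely bounded maps: the Schur multiplication map $B \mapsto A \circ B$ is completely bounded with CB-norm exactly $\norm{A}_\m$, hence factors as $V^* \pi(\cdot) U$ through a $*$-representation $\pi$, and restricting this factorization to the matrix units extracts vectors $u_x,v_y$ realizing $A(x,y)=\inp{u_x}{v_y}$ within the required norm budget.

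The hard step will be this reverse direction: recovering a concrete Hilbert space factorization from an abstract bound on a Schur-multiplier operator norm. The forward direction is essentially a one-line operator identity, whereas the reverse needs either a careful dual-norm calculation (verifying that two superficially different duals actually coincide) or a nonelementary structural theorem from operator space theory such as Wittstock's or Haagerup's factorization. This depth is precisely what makes the identity useful throughout the paper: it converts the analytic multiplier norm into the geometric factorization underlying $\gamma_2$, a form that can then be manipulated combinatorially in the rest of the discussion.
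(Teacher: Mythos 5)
You should first note that the paper itself gives no proof of this proposition; it is quoted as a classical result of Grothendieck from Pisier's book, so your attempt stands on its own. Your forward half is correct and essentially complete: the identity $A\circ B=D_u^*\,(B\otimes I_d)\,D_v$ checks out on basis vectors, since $D_u^*(B\otimes I_d)D_v e_y=\sum_x B(x,y)\inp{v_y}{u_x}\,e_x$, the norms $\norm{D_u}=\sup_x\norm{u_x}_2$ and $\norm{D_v}=\sup_y\norm{v_y}_2$ hold because distinct basis vectors have orthogonal images, and the balancing rescaling is harmless; taking a factorization within $\epsilon$ of the infimum (which need not be attained) gives $\norm{A}_\m\le\norm{A}_{\gamma_2}$.

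The reverse inequality, which you yourself identify as the hard step, has a genuine gap, and the duality route contains an error as stated. The quantity $\inf\sum_i\norm{\xi_i}_2\norm{\eta_i}_2$ over rank-one decompositions is the trace norm, which is dual to the operator norm $\norm{\cdot}_{2\to 2}$, not to $\gamma_2$; the dual of $\gamma_2$ is $\gamma_2^*(C)=\sup\bigl\{\sum_{x,y}C(x,y)\inp{u_x}{v_y}:\norm{u_x}_2\le 1,\ \norm{v_y}_2\le 1\bigr\}$, and the two already disagree on the all-ones matrix ($n^2$ versus $n$, reflecting $\norm{\J}_{\gamma_2}=1$ while $\norm{\J}_{2\to 2}=n$). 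More importantly, the asserted ``parallel computation'' identifying the dual of $\norm{\cdot}_\m$ with the dual of $\gamma_2$ is precisely where all the content of the theorem lives: extracting a Hilbert-space factorization from the bound $\sup_{\norm{B}\le 1}\norm{A\circ B}\le c$ requires a Grothendieck--Pietsch/Hahn--Banach separation argument (plus a compactness step to pass from finite submatrices to countable $\cX,\cY$), and none of this is carried out. Your alternative route via Wittstock's factorization silently assumes that the multiplier norm of the map $B\mapsto A\circ B$ equals its completely bounded norm; this automatic complete boundedness of Schur multipliers is itself part of the cited theorem of Pisier, so as written that route is circular unless you also prove it. In short: the easy inequality is done, but the substantive direction $\norm{A}_{\gamma_2}\le\norm{A}_\m$ still lacks a proof.
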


An element $a$ of a Banach algebra is said to be an \emph{idempotent} (aka \emph{projection}) if $a^2=a$. The following question arises naturally. 
\begin{quotation}
What are the \emph{idempotents} of the algebra of Schur multipliers?
\end{quotation}

Every idempotent $F$ of this algebra must satisfy $F=F \circ F$ and, therefore, is a Boolean matrix. However, not every (infinite) Boolean matrix is a bounded Schur multiplier, as it is possible to have $\norm{F}_\m=\infty$ for a Boolean matrix $F$. \cref{prop:Liv} shows that blocky matrices are precisely the set of all   \emph{contractive} idempotents. In other words,  an idempotent Schur multiplier satisfies $\norm{F}_\m \le 1$ iff it is a blocky matrix. 

\begin{question}[\textcolor{red}{Update: Proved in \cite{goh2026characterizationidempotentschurmultipliers}}]
Are the idempotent Schur multipliers precisely those Boolean matrices that can be written as a $\pm 1$-linear combination of \emph{finitely} many contractive idempotents (equivalently,  blocky matrices)?
\end{question}

A simple compactness argument, as outlined in~\cite{HHH23}, shows that this problem is equivalent to \cref{conj:Blocky}. Therefore, a positive answer to \cref{conj:Blocky} would characterize idempotents of Schur multipliers, analogous to Cohen's~\cite{MR133397} characterization of the idempotents of the Fourier–Stieltjes algebra.

\section{Nondeterministic Model and $\Pcc^\NP$} 
In a nondeterministic protocol $\pi$ for a problem $F_{\cX \times \cY}$, the parties receive a shared advice string $a$ and use it in a standard deterministic protocol $\pi_a$. We say that a protocol computes $F$ if 
\[
F(x,y)= 1 \Leftrightarrow \exists a, \pi_a(x,y)=1. 
\]
The cost of the protocol is the bit-length of $a$ plus the maximum cost of $\pi_a(x,y)$ over all choices of $a,x,y$. The nondeterministic communication complexity of $F$, denoted by $\Ncc(F)$, is the minimum cost of such a protocol for $F$. A matrix family is in the class $\NP$ if they have polylogarithmic nondeterministic communication complexity. 

Unlike in the Turing-Machine complexity, in the communication framework, it is known that $\Pcc= \NP\cap \coNP$, which follows from $\DD(F)=O(\Ncc(F)\cdot \Ncc(\neg F))$; see~\cite[Theorem 2.11]{MR1426129}. However, nondeterministic protocols are provably more powerful than deterministic ones, as can be demonstrated by the important example of the set intersection problem.

The \textsc{Set-Int} problem, $\intr_n$, is defined by $\intr_n(x,y)=1$ if there exists a coordinate $i$ such that $x_i=y_i=1$. Since the players can use their nondeterminism to guess the intersecting coordinate $i$, we have $\Ncc(\intr_n)=O(\log n)$. However, it is easy to see that $\DD(\intr_n)=n+1$. In fact, 
 \cite{bfs86} already in the 1980s proved that \textsc{Set-Int} does not belong to $\Pcc^\EQ$.

\paragraph{The structural properties of $\NP$:} The nondeterministic communication complexity of a problem is fully captured by its \emph{monochromatic rectangle covering number.} Let $\mathrm{C}^1(F)$ denote the minimum number of $1$-monochromatic rectangles required to cover the $1$ entries of $F$. It is easy to see~\cite{MR1426129} that
\begin{equation}\label{eq:NvsCover}
\Ncc(F)=\log(\mathrm{C}^1(F))+ O(1).
\end{equation}
Combined with $\DD(F)=O(\mathrm{C}^1(F))$, we have
\begin{equation}\label{eq:DvsN}
\DD(F)\leq O(2^{\Ncc(F)}).
\end{equation}
Therefore, $\Pcc_0=\NP_0=\coNP_0$.  The following proposition shows that nondeterministic protocols, while more powerful than deterministic ones, satisfy the same quantitative bound on $\wrect(\cdot)$. 
\begin{proposition}
For every Boolean matrix $F$, 
\label{prop:rect_NP}
\[
\Ncc(F)\gtrsim \log(1/\wrect(F)).
\]
\end{proposition}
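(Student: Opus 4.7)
The plan is to use the characterization of nondeterministic complexity via rectangle covers from \eqref{eq:NvsCover}: a protocol of cost $\Ncc(F)$ yields a cover of $F^{-1}(1)$ by $N \le 2^{\Ncc(F)+O(1)}$ 1-monochromatic rectangles $R_1 = S_1 \times T_1, \ldots, R_N = S_N \times T_N$. Fixing an arbitrary product measure $\mu = \mu_{\cX} \times \mu_{\cY}$ on $\cX \times \cY$, the goal is to produce a monochromatic rectangle of $\mu$-mass at least $2^{-O(\Ncc(F))}$; this gives $\wrect(F) \ge 2^{-O(\Ncc(F))}$, which is the claim.

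The argument splits on where $\mu$ concentrates. If $\mu(F^{-1}(1)) \ge 1/2$, then the union bound $\mu(F^{-1}(1)) \le \sum_i \mu(R_i) \le N \cdot \max_i \mu(R_i)$ immediately produces a 1-monochromatic rectangle of $\mu$-mass at least $1/(2N) = 2^{-\Ncc(F)-O(1)}$. If instead $\mu(F^{-1}(0)) > 1/2$, the plan is to extract a 0-monochromatic rectangle from the cover's structure: classify each row $x \in \cX$ by its signature $A(x) = \{i : x \in S_i\} \subseteq [N]$, observe that rows of the same signature $A$ have identical 0-columns $U_A = \cY \setminus \bigcup_{i \in A} T_i$, and note that each $\cX_A \times U_A$ is a 0-monochromatic rectangle, and these pieces partition $F^{-1}(0)$. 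An averaging step on $\sum_A \mu_{\cX}(\cX_A) \mu_{\cY}(U_A) > 1/2$ then selects some $\cX_A \times U_A$ of significant $\mu$-mass.

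The main obstacle is making this second step actually deliver $2^{-O(\Ncc(F))}$: with up to $2^N$ signatures, the naive pigeonhole only gives a rectangle of $\mu$-mass $\ge 2^{-N-1}$, producing the exponentially weaker bound $\log(1/\wrect(F)) \le 2^{\Ncc(F)}$. Closing this gap is the heart of the proof and requires exploiting further structural properties of the cover — for instance, showing that the infimum-achieving $\mu$ can spread nonzero mass over only $\mathrm{poly}(N)$ signatures, or replacing the signature partition with a more economical collection of 0-rectangles built from controlled intersections of the complementary row-sets $\cX \setminus S_i$ and column-sets $\cY \setminus T_i$. Small examples such as $\EQ_n$ and $\intr_n$ satisfy the stronger bound with room to spare, which makes me expect such a combinatorial refinement to be within reach.
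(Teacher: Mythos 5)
There is a genuine gap, and you have correctly located it yourself: in the case $\mu(F^{-1}(0))>1/2$, the signature partition can have up to $2^N$ classes, so pigeonholing over it only gives a $0$-rectangle of mass $2^{-N-1}$, i.e.\ the exponentially weaker bound $\log(1/\wrect(F)) \lesssim 2^{\Ncc(F)}$, and none of the repairs you sketch (controlling the infimum-achieving $\mu$, or pruning the signature family) is carried out. The missing idea is that the dichotomy should not be on where $\mu$ concentrates ($F^{-1}(1)$ versus $F^{-1}(0)$) but on the individual rectangles of the cover. With $c=\mathrm{C}^1(F)=O(2^{\Ncc(F)})$ and cover $S_1\times T_1,\dots,S_c\times T_c$: either some $i$ has $\mu_\cX(S_i)\,\mu_\cY(T_i)\ge 1/4c^2$ and you are done, or every rectangle is ``thin'' on at least one side, meaning $\mu_\cX(S_i)<1/2c$ or $\mu_\cY(T_i)<1/2c$. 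Letting $I$ be the indices thin on the row side, set $A=\cX\setminus\bigcup_{i\in I}S_i$ and $B=\cY\setminus\bigcup_{i\notin I}T_i$. Every $(x,y)\in A\times B$ avoids all $c$ rectangles of the cover, so $A\times B$ is $0$-monochromatic, and the union bound gives $\mu_\cX(A)\ge 1/2$ and $\mu_\cY(B)\ge 1/2$, hence a $0$-rectangle of mass at least $1/4$. Either way $\wrect(F)\ge 1/4c^2 = 2^{-O(\Ncc(F))}$.

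Your closing suggestion of ``a more economical collection of $0$-rectangles built from intersections of the complementary row-sets and column-sets'' is pointing in the right direction, but the resolution is not a cleverer collection: it is a single complement rectangle, and the control comes from the thin/fat dichotomy on each cover rectangle rather than from any property of the optimizing measure. Also note that once this argument is in place, your first case ($\mu(F^{-1}(1))\ge 1/2$) becomes unnecessary, since the dichotomy above works for an arbitrary product measure without reference to how much mass the $1$-entries carry.
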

\begin{proof}
Let $F_{\cX\times \cY}$ be a Boolean matrix, and  $c=\mathrm{C}^1(F) = O(2^{\Ncc(F)})$. Let $\mu_\cX \times \mu_\cY$ be a product probability measure on $\cX\times \cY$, and let $S_1\times T_1, \dots, S_c\times T_c$ be a $1$-monochromatic rectangle covering of $F$. The case $c\leq 1$ is trivial, so assume $c>1$. 

If there exists $i$ with $\mu_\cX(S_i)\cdot \mu_\cY(T_i)\geq 1/4c^2$, then we are done. So, assume otherwise that for every $i$, we have $\mu_\cX(S_i)\cdot \mu_\cY(T_i)< 1/4c^2$. Let $I$ be the set of indices $i$ such that $\mu_\cX(S_i)<1/2c$. Note that, if $i\notin I$, then $\mu_\cY(T_i)< 1/2c$. Now define, $A=\cX \setminus \cup_{i\in I} S_i$ and $B=\cY \setminus \cup_{j\not\in I} T_i$. It is easy to see that $A\times B$ is a $0$-monochromatic rectangle of $F$ and $\mu_\cX(A)\cdot \mu_\cY(B)> 1/4$. 
\end{proof}
Impagliazzo and Williams~\cite{impagliazzo2010communication} extended the bound in \cref{prop:rect_NP} to the more powerful model of deterministic communication with access to $\NP$ oracles. Let us first define this model formally. 

An \emph{oracle communication protocol} for a communication problem $F$ is a protocol where each node $v$ is either a regular communication node or it is labelled with a triple $(P_v,a_v,b_v)$ where $P_v$ is a Boolean matrix, and $P_v(a_v(x),b_v(y))$ is used to decide whether to travel to the left or the right child of $v$. 

\paragraph{The complexity class $\Pcc^{\mathsf{NP}}$.} 
The $\mathsf{D}^{\mathsf{NP}}$ cost of an oracle communication protocol is the largest cost of a path from the root to a leaf, which is the sum of the communicated bits plus the sum of $\Ncc(P_v)$ for every $v$ on the path. 

Define $\mathsf{D}^{\mathsf{NP}}(F)$ to be the smallest $\DD^{\NP}$ cost of an oracle communication protocol for $F$. The complexity class $\Pcc^{\NP}$  is the class of problems $\{F_n\}$ with $\DD^{\mathsf{NP}}(F_n)=\polylog n$. 

Among the extensive list of complexity classes detailed in  G\"o\"os, Pitassi, and Watson's article~\cite{goos_landscape}, titled  ``\emph{the landscape of communication complexity classes}'', $\Pcc^{\NP}$ is the largest non-probabilistic class for which an explicit lower bound is known. For example, consider the inner product problem $\IP_n:\{0,1\}^n \times \{0,1\}^n \to \{0,1\}$ defined as $\IP_n(x,y)=x_1y_1+\dots+x_ny_n \mod 2$. A simple argument, based on dimension and orthogonality (see~\cite[Claim 1.17]{MR4312803}), shows that every monochromatic rectangle in $\IP_n$ is of size at most $2^n$, and therefore,  $\wrect(\IP_n)^{-1}  \ge 2^{\Omega(n)}$. The following theorem of \cite{impagliazzo2010communication} shows that $\DD^\NP(\IP_n) = \Omega(n)$.  

\begin{theorem}[Impagliazzo and Williams~\cite{impagliazzo2010communication}]
\label{thm:IW}
For every Boolean matrix $F$, we have 
\[\DD^{\NP}(F) \gtrsim \log\left( \wrect(F)^{-1}\right). \]
\end{theorem}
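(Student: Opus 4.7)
The plan is to prove by strong induction on $c = \DD^{\NP}(F)$ that for every product probability measure $\mu = \mu_\cX \times \mu_\cY$, the matrix $F$ contains a monochromatic rectangle of $\mu$-measure at least $2^{-O(c)}$; this immediately yields $\wrect(F)^{-1} \le 2^{O(\DD^{\NP}(F))}$. The base case $c=0$ is trivial, since then $\Pi$ outputs a constant and $F$ itself must be constant. For the inductive step, we case-split on the type of the root of the protocol tree $\Pi$.

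If the root is a regular communication node, one player (say Alice) sends a bit $a(x)$, partitioning $\cX = \cX_0 \sqcup \cX_1$. Some side $\cX_b$ has $\mu_\cX(\cX_b) \ge 1/2$, and the corresponding subtree is a $\DD^{\NP}$-protocol of cost $c-1$ correctly computing $F$ on $\cX_b \times \cY$. Applying the induction hypothesis with $\mu_\cX$ renormalized on $\cX_b$ yields a monochromatic rectangle whose absolute $\mu$-measure is at least $\tfrac{1}{2} \cdot 2^{-O(c-1)} = 2^{-O(c)}$.

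The main obstacle is the oracle-node case. Suppose the root queries $P_v(a_v(x), b_v(y))$ with $\Ncc(P_v) = n_v$; the two children subtrees are $\DD^{\NP}$-protocols of cost at most $c - n_v$ that correctly compute $F$ on $A$ and on $\bar A$ respectively, where $A = \{(x,y) : P_v(a_v(x), b_v(y)) = 1\}$. By the covering characterization \cref{eq:NvsCover} of $\Ncc$ pulled back through $(a_v, b_v)$, the set $A$ is a union of at most $N_v \le 2^{n_v}$ combinatorial rectangles $R_1, \dots, R_{N_v}$. The difficulty is that $\bar A$ is \emph{not} itself a union of few rectangles, so the ``$0$''-branch does not obviously recurse on a rectangle-shaped sub-problem. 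We bypass this by applying the dichotomy argument from the proof of \cref{prop:rect_NP} to the cover $\{R_i\}$: either some $R_i$ has $\mu(R_i) \ge 1/(4 N_v^2) = 2^{-O(n_v)}$, or else removing the low-measure slabs $\bigcup_{i \in I} S_i$ and $\bigcup_{i \notin I} T_i$, with $I = \{i : \mu_\cX(S_i) < 1/(2 N_v)\}$, exhibits a combinatorial rectangle $R' \subseteq \bar A$ with $\mu(R') \ge 1/4$.

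In the first outcome, we recurse into the ``$1$''-branch subprotocol on $R_i$ with $\mu$ renormalized inside $R_i$, obtaining a monochromatic rectangle of $\mu$-measure at least $2^{-O(n_v)} \cdot 2^{-O(c - n_v)} = 2^{-O(c)}$. In the second outcome, we recurse into the ``$0$''-branch subprotocol on $R'$ with $\mu$ renormalized inside $R'$, obtaining a monochromatic rectangle of $\mu$-measure at least $\tfrac{1}{4} \cdot 2^{-O(c - n_v)} = 2^{-O(c)}$ (using $n_v \ge 0$). Either way the induction closes, completing the proof.
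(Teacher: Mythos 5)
Your argument is correct in substance. Note that the paper itself does not prove \cref{thm:IW} (it cites \cite{impagliazzo2010communication} and only proves the special case \cref{prop:rect_NP} for $\Ncc$), so there is no in-paper proof to match; what you give is the natural extension of the proof of \cref{prop:rect_NP} and is essentially the original Impagliazzo--Williams argument: induct down the protocol tree over all product measures, and at each oracle node apply the covering dichotomy to the pulled-back $1$-cover of $P_v$. The individual steps check out: by \cref{eq:NvsCover} (in fact $\mathrm{C}^1(P_v)\le 2^{\Ncc(P_v)}$ exactly) the oracle-$1$ region $A$ is covered by at most $2^{n_v}$ genuine combinatorial rectangles of $\cX\times\cY$; conditioning a product measure on a rectangle yields again a product measure, which is exactly why the induction hypothesis must be (and is, as you state it) quantified over all product measures; and each child subtree is a correct cost-$(c-n_v)$ oracle protocol on its region, so restricting to a heavy $R_i\subseteq A$ or to the complementary rectangle $R'\subseteq\bar A$ of measure $>1/4$ is legitimate, and any monochromatic rectangle of the restricted matrix is one of $F$.

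Two small points to tighten. First, your parenthetical ``using $n_v\ge 0$'' is not quite enough: if an oracle query had $\Ncc(P_v)=0$, the cost would not decrease while the dichotomy still loses a constant factor, so an induction with a fixed constant would not close. You need $n_v\ge 1$, which holds because a cost-$0$ query is a constant matrix, and such nodes can be deleted without loss of generality. Second, since constant-factor losses recur at every level, the induction should carry an explicit constant rather than $O(\cdot)$; for instance, proving ``every product measure admits a monochromatic rectangle of measure at least $2^{-4c}$'' works: a communication node loses a factor $2$ against a saving of $2^4$, the heavy-rectangle branch loses at most $4N_v^2\le 2^{2n_v+2}$ against $2^{4n_v}$, and the complement branch loses $4$ against $2^{4n_v}$, all of which close once $n_v\ge 1$.
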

One might ask whether $\log\left( \wrect(F)^{-1}\right)$ and $\mathsf{D}^{\mathsf{NP}}(F)$ are polynomially equivalent. The answer is negative as~\cite{goos2019query} constructs an explicit family of Boolean matrices exhibiting a large gap between the two quantities.

\begin{theorem}[G\"o\"os, Kamath, Pitassi, and Watson~\cite{goos2019query}]
\label{thm:GapPNP}
There exists a sequence of $2^n \times 2^n$ Boolean matrices $F_n$ satisfying $\mathsf{D}^{\mathsf{NP}}(F_n) \ge n^{\Omega(1)}$  and  $\log\left( \wrect(F_n)^{-1}\right) \le \log^{O(1)}(n)$. 
\end{theorem}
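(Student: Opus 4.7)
The plan is to derive the communication separation from a corresponding decision-tree separation via a query-to-communication lifting theorem, which is the standard paradigm for proving lower bounds of this flavor and is indeed the route taken in~\cite{goos2019query}. The strategy has three main ingredients.

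First, I would set up the decision-tree analogue. Define $\DT^{\NP}(f)$ for $f:\{0,1\}^N \to \{0,1\}$ as the minimum depth of a decision tree whose internal nodes either query a single input bit or pose a nondeterministic query on the input (an ``existential subcube'' query whose complexity is the bit-length of the advice). Analogously define a query version of the weighted rectangle ratio, call it $\wrect_{\DT}(f)$, as the infimum over distributions on $\{0,1\}^N$ of the maximum measure of any monochromatic subcube of $f$. The lifting theorem of G\"o\"os--Kamath--Pitassi--Watson states that for a suitable constant-size gadget $g:\{0,1\}^b \times \{0,1\}^b \to \{0,1\}$ with $b = \Theta(\log N)$ (e.g.\ inner product mod $2$),
\[
\DD^{\NP}(f \circ g^N) \;=\; \Omega\!\left(\DT^{\NP}(f)\cdot b\right),
\]
which transports a query separation into a communication separation.

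Second, I need to exhibit a candidate $f$ with a large gap: $\DT^{\NP}(f) \ge N^{\Omega(1)}$ but $\log \wrect_{\DT}(f)^{-1} \le \polylog(N)$. The canonical candidate is an engineered search/decision problem whose $1$-inputs and $0$-inputs each admit a short ``local certificate'' of constant or polylogarithmic size (yielding a small monochromatic subcube under any reasonable distribution), yet for which any $\P^{\NP}$ decision tree must probe many coordinates. A SINK-like or sparse-Tribes variant works: an adversary argument maintains a large pool of still-consistent inputs and shows that each nondeterministic query eliminates only a $\poly(\log N)/N$ fraction of them, forcing depth $N^{\Omega(1)}$.

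Third, I translate the upper bound on $\wrect_{\DT}(f)$ back to the lifted matrix. Taking $\mu = \mu_\cX \times \mu_\cY$ to be the product of independent uniform gadget marginals, the standard density-restoration / random-restriction argument used in lifting proofs shows that any monochromatic rectangle of $f \circ g^N$ projects (after conditioning on a partial assignment consistent with the rectangle) to a monochromatic subcube of $f$ of comparable weight, giving
\[
\wrect(f\circ g^N)^{-1} \;\le\; 2^{O(b)}\cdot \wrect_{\DT}(f)^{-1} \;\le\; \polylog(N).
\]
Setting $n = \Theta(N b)$ so that the lifted problem is a $2^n\times 2^n$ matrix, the two bounds combine to give $\DD^{\NP}(F_n) \ge n^{\Omega(1)}$ while $\log \wrect(F_n)^{-1} \le \log^{O(1)} n$, as required.

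The main obstacle is the lifting step itself: converting a small $\P^{\NP}$ communication protocol for $f\circ g^N$ into a $\P^{\NP}$ decision tree for $f$ requires carefully handling both the deterministic rectangle decomposition and the $\NP$-oracle queries, and it is here that~\cite{goos2019query} invest most of the technical work, combining the Raz--McKenzie style ``thick rectangle'' framework with a nondeterministic restriction step that simulates oracle calls by existentially guessed subrectangles. The query-model separation and the $\wrect$ transfer are relatively routine once this machinery is available.
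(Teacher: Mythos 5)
The survey states \cref{thm:GapPNP} as a black-box citation to G\"o\"os--Kamath--Pitassi--Watson and contains no proof of it, so the comparison is with the cited work; your overall plan --- lift a query-model separation through a $\Pcc^{\NP}$ lifting theorem --- is indeed the route taken there (their lifting theorem is phrased via the decision-list-width characterization of $\Pcc^{\NP}$-type query complexity and uses a polynomial-size index gadget rather than a $\Theta(\log N)$-size inner-product gadget; that discrepancy only affects the constant in the exponent of $n^{\Omega(1)}$ and is cosmetic). The real problem is your second ingredient: the candidate $f$ you describe cannot exist. If every $1$-input and every $0$-input of a total $f:\zo^N\to\zo$ has a certificate of size $\polylog(N)$, then the standard bound $\DD_{\DT}(f)\le C_0(f)\,C_1(f)\le \polylog(N)$ already gives a $\polylog(N)$ deterministic decision tree, hence $\polylog(N)$ $\Pcc^{\NP}$-type query complexity, and no lifting theorem can then yield an $n^{\Omega(1)}$ communication lower bound. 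Even one-sided short certificates are fatal: a width-$k$ DNF is itself a width-$k$ decision list, so $C_1(f)\le k$ (or $C_0(f)\le k$) collapses exactly the decision-list-width measure that must be large for the lift. What is needed is a \emph{distributional} largeness property (a heavy monochromatic subcube under every distribution) coexisting with large worst-case decision-list width, and constructing such an $f$ --- and proving the query lower bound against oracle queries whose cost is the witness length, not unit cost --- is precisely where the cited paper does its work; ``a SINK-like or sparse-Tribes variant'' with an adversary counting argument does not supply it.

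The transfer of the $\wrect$ bound is also argued in the wrong direction. The mechanism you invoke (every monochromatic rectangle of $f\circ g^N$ projects to a comparably heavy monochromatic subcube of $f$) yields $\wrect_{\DT}(f)^{-1}\lesssim \wrect(f\circ g^N)^{-1}$, i.e., it transfers \emph{lower} bounds on $\log \wrect(\cdot)^{-1}$ from $f$ to the lift, whereas the theorem requires an \emph{upper} bound on $\log\wrect(F_n)^{-1}$: for every product distribution on the lifted domain one must exhibit a large monochromatic rectangle of $F_n$. Passing from a subcube to a rectangle is not generic: a codimension-$k$ monochromatic subcube pulls back under the gadget map to a set that is not a rectangle, and the natural rectangles inside it (fixing the relevant gadget blocks) lose a factor of order $2^{\Theta(bk)}$, not $2^{O(b)}$; since $k$ can be as large as $N$, the claimed inequality $\wrect(f\circ g^N)^{-1}\le 2^{O(b)}\cdot\wrect_{\DT}(f)^{-1}$ is unjustified, and making it true with polylogarithmic loss forces the certifying subcubes to have polylogarithmic codimension, which runs straight back into the obstruction of the previous paragraph. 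So, as written, neither the query-side separation nor the $\wrect$ transfer is established, and the proposal does not prove the theorem.
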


\section{Probabilistic Communication Models}
Next, we discuss probabilistic communication protocols where the players can act in a randomized fashion. Randomness can be introduced in two different ways: private randomness and public randomness. 

In a \emph{private-coin randomized protocol}, each player has access to their own independent random bits and can use them to decide which bit to send next. More precisely, Alice and Bob have access to random strings $R_A$ and $R_B$, respectively. These two strings are chosen independently, each according to some probability distribution described by the protocol. The bit sent by Alice at a node $v$ is determined by a function $a_v$ of both $x$ and $R_A$. Similarly, the bits sent by Bob are determined by functions of $y$ and $R_B$.

In the \emph{public-coin} model, the players have access to a shared source of randomness. In other words, Alice and Bob both receive the same random string $R$. The public-coin model is stronger than the private-coin model as the former can simulate the latter by setting $R=(R_A,R_B)$. 

The cost of a randomized protocol is the maximum number of communicated bits over all inputs and all choices of random strings. A probabilistic protocol is allowed to make errors. It is common to consider three types of errors: 

\begin{itemize}
	\item {\bf Two-sided error ($\BPP$):}   For every $x,y$,  the probability that the protocol makes an error on $(x,y)$ is at most $\epsilon$ for some $\epsilon<1/2$. When $\epsilon$ is a fixed constant strictly less than $1/2$, the protocol is called a bounded-error protocol.  The particular choice of $\epsilon$ is unimportant as a simple error reduction shows that it affects the complexity by only a constant factor. Therefore, as it is common, we will fix the error parameter to $\epsilon=1/3$.  

	\item   {\bf One-sided error ($\RP$):} In this setting, the protocol can only make an error if $F(x,y)=1$. In other words,  for every $x,y$ with $F(x,y)=0$, the protocol must always correctly output $0$, but for every $x,y$ with $F(x,y)=1$, it might output a wrong answer with probability at most $\epsilon$ for some fixed $\epsilon <1$. We will fix the error parameter to $\epsilon=1/3$.

	\item  {\bf Zero-error ($\ZPP$):} In this case, the output of a  protocol is $0$, $1$, or $\perp$, where $\perp$ indicates a failure to compute $F(x,y)$. The protocol must never output $0$ or $1$ erroneously; however, on every input, it is allowed to output $\perp$ with probability at most $\frac{1}{2}$. 
\end{itemize} 

A classical result in communication complexity, called Newman's lemma, states that in the two-sided error, one-sided error, and zero-error settings, the following is true. The difference between public-coin and private-coin randomized communication complexities of any $n$-bit  communication problem is $O(\log(n))$.

Newman's lemma shows that when defining the polylogarithmic communication complexity classes $\BPP, \RP, \ZPP$, it is unimportant whether we use shared randomness or private randomness. However, to define constant-cost classes  $\BPP_0, \RP_0, \ZPP_0$, we need to make a choice. It turns out that in the setting of private-coin, all these classes collapse to $\Pcc_0$. Therefore, we shall define these classes in the public-coin model. 

Regarding the zero-error protocols, the following theorem shows that even in the public-coin model, the zero-error randomized communication complexity of a matrix $F$ is polynomially equivalent to $\DD(F)$ and in particular $\ZPP_0=\Pcc_0$.  

\begin{theorem}[{\cite[Theorem 2.1]{DAVIS2022106293}}]
\label{thm:zeroerror}
The public-coin zero-error randomized communication complexity of every Boolean matrix $F$ is at least $\Omega(\DD(F)^{1/4})$. 
\end{theorem}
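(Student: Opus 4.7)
Let $Z \coloneqq \mathsf{ZPP}(F)$. I propose to establish $\DD(F) \le O(Z^4)$, which is equivalent to $\mathsf{ZPP}(F) \ge \Omega(\DD(F)^{1/4})$. My strategy is to control both $\Ncc(F)$ and $\Ncc(\neg F)$ in terms of $Z$ and then invoke the Aho--Ullman--Yannakakis (AUY) deterministic simulation, $\DD(F) \le O(\Ncc(F)\cdot \Ncc(\neg F))$.

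First I would convert the zero-error protocol into one-sided-error variants. Fix a public-coin zero-error protocol $\Pi$ of cost $Z$; for each random string $r$, $\Pi_r$ is a deterministic protocol of cost $Z$ whose leaves are labelled $0$, $1$, or $\perp$, satisfying $\Pi_r(x,y)\in\{F(x,y),\perp\}$ and $\Pr_r[\Pi_r(x,y)=\perp]\le 1/2$. Replacing every $\perp$ output with $0$ yields an $\RP$ protocol for $F$ of cost $Z$ (no false positives, false negatives with probability at most $1/2$); replacing instead with $1$ yields a $\coRP$ protocol of cost $Z$. For each $r$, the $1$-labelled leaves of the $\RP$ protocol are $1$-monochromatic rectangles, and every $1$-entry of $F$ lies in some such leaf for at least half the random strings; symmetrically for $0$-entries via $\coRP$.

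The crux is to bound the monochromatic rectangle cover numbers $\mathrm{C}^0(F), \mathrm{C}^1(F)\le 2^{O(Z^2)}$, i.e.\ $\Ncc(F), \Ncc(\neg F)=O(Z^2)$, without an $n$-dependence. I would amplify the zero-error protocol by $k=O(Z)$ independent repetitions—taking any non-$\perp$ output to be the answer—driving the joint $\perp$-probability below $2^{-\Omega(Z)}$. The $k$-fold protocol has cost $O(Z^2)$ and produces at most $2^{O(Z^2)}$ distinct rectangle patterns across all choices of $r$. I would then argue that it suffices to retain one representative random string per pattern, yielding a cover of size $2^{O(Z^2)}$ for the $1$-entries, and for the $0$-entries symmetrically. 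Plugging into AUY gives $\DD(F)\le O(\Ncc(F)\cdot \Ncc(\neg F))=O(Z^4)$.

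The main obstacle is exactly this randomness reduction. A na\"ive application of Newman's lemma gives only $\Ncc(F)\le Z+O(\log n)$, which is fatal: an $n$-dependent bound on $\Ncc$ spoils the uniform conclusion $\DD(F)=O(Z^4)$ and, in particular, obstructs identifying $\mathsf{ZPP}_0$ with $\Pcc_0$. The heart of the argument must exploit structural features specific to zero-error—showing that the ``information content'' of a zero-error protocol collapses into an $n$-independent combinatorial object of size $2^{O(Z^2)}$, rather than being controlled by a union bound over the input space. I expect this step to be the one requiring a genuinely new idea rather than a direct application of standard derandomization.
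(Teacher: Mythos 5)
Since the survey only imports this theorem from \cite{DAVIS2022106293} without reproducing its proof, your proposal has to stand on its own, and it does not: the step you yourself call the crux is not merely unfinished, the mechanism you sketch for it is unsound. Fixing the randomness of a public-coin protocol of cost $O(Z^2)$ gives, for each string $r$, a partition into at most $2^{O(Z^2)}$ rectangles, but there is no bound at all on the number of \emph{distinct} partitions (``rectangle patterns'') as $r$ varies --- the protocol may use arbitrarily many random bits and a completely different tree for each $r$ --- so ``one representative string per pattern'' does not produce a cover of size $2^{O(Z^2)}$. Worse, your derivation of the $1$-cover uses only the one-sided consequence of the zero-error protocol (every $1$-entry lies in a $1$-monochromatic leaf for all but a $2^{-\Omega(Z)}$ fraction of the strings). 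Any argument using only that data would prove that bounded public-coin one-sided-error complexity forces an $n$-independent bound on $\mathrm{C}^1$, which is false: $\neg\EQ_n$ has a public-coin one-sided-error protocol of cost $O(1)$, amplifiable to error $2^{-k}$ at cost $O(k)$, and for each fixed random string its $1$-outputs are $1$-monochromatic rectangles covering every $1$-entry with probability $1-2^{-\Omega(k)}$; yet $\mathrm{C}^1(\neg\EQ_n)=\Theta(n)$, i.e.\ $\Ncc(\neg\EQ_n)=\Theta(\log n)$ is unbounded. So the lemma $\Ncc(F)\le O(\ZPP(F)^2)$ cannot be reached from the $\RP$ side plus pattern counting; any correct proof must use the two sides of the zero-error guarantee jointly, or some other $n$-independent potential, before the AUY step $\DD(F)=O(\Ncc(F)\cdot\Ncc(\neg F))$ can be invoked.

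What the zero-error hypothesis does give cheaply is a distributional guarantee: averaging over $r$, for \emph{every} probability distribution $\mu$ on the entries (not only product distributions) some monochromatic leaf of the correct colour satisfies $\mu(R)\ge 2^{-Z-1}$, and by LP duality the $1$-entries and the $0$-entries each admit \emph{fractional} covers by monochromatic rectangles of total weight $2^{O(Z)}$. The entire difficulty of the theorem --- and the reason the known bound loses a fourth power rather than the conjecturally optimal square, as noted right after the statement --- is converting these fractional/distributional guarantees into integral covers, or directly into a deterministic protocol, without paying the $\log n$ that naive LP rounding, greedy covering, or Newman-style sparsification would cost. You correctly identify that this is where a genuinely new idea is needed, but the proposal then substitutes an incorrect counting claim for that idea, so the heart of the theorem remains unproved.
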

It is an open problem whether this bound can be improved to $\Omega(\sqrt{\DD(F)})$, which, if true, would be sharp. We will not further discuss the zero-error model and refer the interested reader to~\cite{DAVIS2022106293} for further reading.

\subsection{The power of randomness: $\BPP$}
\label{sec:BPP}

The \emph{randomized communication complexity} of a Boolean matrix $F$, denoted by $\Rcc(F)$, is the minimum cost of a public-coin randomized protocol with two-sided error $\le 1/3$. Let $\BPP$ and $\BPP_0$ be, respectively, the class of problems with polylogarithmic and $O(1)$ randomized communication complexities.

Are probabilistic protocols more powerful than deterministic protocols? The example of \textsc{Equality} shows that randomness can provide a significant advantage.   To test whether $x \neq y$, Alice and Bob can use their shared randomness to jointly sample a random subset $S \subseteq \{0,1\}^n$ at no cost and then, by exchanging two bits of information, indicate to each other whether their inputs belong to $S$. If they see a disparity, they can conclude confidently that $x \neq y$. They can run this test twice, and if they do not detect $x \neq y$, they declare $x=y$. Note that the probability of error is $\le 1/4$. Therefore, $\Rcc(\EQ_n)=O(1)$, and $\textsc{Equality}\in \BPP_0$. In particular, we have the relations $\Pcc \subsetneq \BPP$ and $\BPP_0 \not\subseteq \Pcc$. Also note that $\Rcc(\EQ_n)=O(1)$ implies via standard error-reduction that
\begin{equation}\label{eq:DEQvsR}
\Rcc(F) \lesssim  \DD^\EQ(F) \log \DD^\EQ(F),
\end{equation}
which establishes  $\Pcc^\EQ \subseteq \BPP$ and  $\Pcc_0^\EQ \subseteq \BPP_0$.

Which problems have efficient randomized protocols? A substantial portion of the literature in communication complexity is dedicated to lower-bound techniques against randomized communication complexity, and many celebrated results establish such lower bounds for important concrete problems, such as Set Disjointness~\cite{MR1192778}, Gap Hamming Distance~\cite{MR3023795}, and Halfspace~\cite{sherstov2008halfspace}. While it is possible to write a voluminous book about the lower bounds against randomized communication complexity, we know very little about what is inside $\BPP$ and $\BPP_0$. In fact, until recently, it was not known whether there is any problem in $\BPP$ that is not in $\Pcc^\EQ$. Let us list some classical problems in $\BPP$. 

\begin{itemize}
\item  \textsc{Greater-Than} is the family of communication problems $\GT_n:[2^n] \times [2^n] \to \{0,1\}$ where $\GT_n(x,y)=1$ iff $x \le y$. It is known~\cite{Nis93,viola2015communication,ramamoorthy2015} that $\Rcc(\GT_n)=\Theta(\log(n))$, and therefore, 
\[\textsc{Greater-Than} \in \BPP \setminus \BPP_0.\]
\item \textsc{Hypercube} is the family of communication problems $\Q_n:\{0,1\}^n \times \{0,1\}^n \to \{0,1\}$ where $\Q_n(x,y)=1$ iff $x$ and $y$ differ in exactly one coordinate.  Given $x,y\in \{0,1\}^n$, Alice and Bob can pick a uniform partition of $[n]$ into $8$ sets $S_1,\ldots,S_8$ and accept if for exactly one $i\in [8]$, it holds that $(\oplus_{j \in S_i} x_j) \oplus (\oplus_{j \in S_i} y_j)=1$. It is easy to see that the communication cost of this protocol is constant and that the error probability is at most $1/3$.  Therefore, $\Rcc(\Q_n)=O(1)$. On the other hand, \cite[Lemma 2.15]{HHH23} and \Cref{eq:Avsgamma2} shows that $\norm{\Q_n}_{\gamma_2} \ge \Omega(\sqrt{n})$ and therefore, $\DD^{\EQ}(\Q_n) \ge \Omega(\log n)$. We have  \[\textsc{Hypercube} \in \BPP_0 \setminus \Pcc_0^\EQ;\]
see also \cite{HWZ22} for a different proof of this fact.%
\item More generally, let $\ell(n)<n/2$ be an integer, and $S_n \subseteq \{0,\ldots,\ell(n)\} \cup  \{n-\ell(n),\ldots,n\}$ and denote the hamming weight of an $x\in \{0,1\}^n$ by $|x|$. If $\ell(n)=\polylog(n)$, then the family of the \textsc{xor}-lifts  $\1_{S_n}^\oplus:\{0,1\}^n \times \{0,1\}^n \to \{0,1\}$ defined as $\1_{S_n}^\oplus(x,y)=\1_{S_n}(|x \oplus y|
)$ is in $\BPP$. If $\ell(n)=O(1)$, then this family is in $\BPP_0$~\cite{yao2003power}. Note that \textsc{Hypercube} corresponds to $S_n=\{1\}$.   
\item \textsc{Integer Inner product}:  Given a fixed positive integer $t$, the communication problem $\IIP_t$ is the family of functions $\IIP_{t,n}:[-2^n,2^n]^t \times  [-2^n,2^n]^t \to \{0,1\}$ with $\IIP_{t,n}([x_1,\ldots,x_t],[y_1,\ldots,y_t])=1$ iff $x_1y_1+\ldots+x_ty_t=0$. To check the validity of the equation, Alice and Bob can choose a random prime $p \approx \log(n)$ and exchange $x_i \mod p$ and $y_i \mod p$ for $i=1,\ldots,t$. This leads to a randomized protocol for $\IIP_t$ with cost $O(\log(n))$, which shows $\IIP_t \in \BPP$. On the other hand, for any fixed $t>2$, it was shown in \cite{chattopadhyay2019equality} that $\DD^\EQ(\IIP_{t,n}) \ge \Omega(n)$. In fact, as \cite{CHHS23} shows, one even has $\norm{\IIP_{t,n}}_{\gamma_2} \ge 2^{\Omega(n)}$. Hence, for $t>2$,  
\[\IIP_t \in \BPP \setminus \Pcc^\EQ.\]
\end{itemize}

We are unaware of any examples in $\BPP$ that fundamentally differ from those listed above. In fact, $\IIP_t$, which was introduced by Chattopadhyay, Lovett, and Vinyals~\cite{chattopadhyay2019equality}, is the only known example of a communication problem in $\BPP$ that is not in $\Pcc^\EQ$  (see also \cite{PSS23,CHHS23}). Let us mention a conjecture about $\IIP_t$ before proceeding further. We do not know how to prove any $\omega(1)$ lower bound for $\Rcc(\IIP_{t,n})$.  

\begin{conjecture}[{See \cite[Conjecture 6.4]{CHHS23} \textcolor{red}{Update: Solved in \cite{GoosHarmsRichterSofronova2026}}}]
\label{conj:IIP}
For $t>2$, \[\IIP_t \not\in \BPP_0.\] 
\end{conjecture}

Note that disproving \cref{conj:IIP} would imply that $\BPP_0 \not\subseteq \Pcc^{\EQ}$.

\begin{updatebox}

\begin{itemize}
\item  Consider the  point-line incidence communication problem where Alice receives $
u=(x_1,x_2)\in [-2^n,2^n]^2$, representing the line $y=x_1 x+x_2$, and Bob receives a point
$v=(y_1,y_2)\in [-2^n,2^n]^2$. Their goal is to decide whether the point lies on the line, or equivalently, whether
\[
x_1y_1+x_2-y_2=0.
\] 
We denote this communication problem by $\PL_{n}$. In a recent paper, G\"o\"os, Harms, Richter, and Sofronova~\cite{GoosHarmsRichterSofronova2026} gave an elegant application of the circle method to show that $\Rcc(\PL_{n}) = \Theta(\log n)$. In particular, it follows that for every fixed $t>2$,  $\Rcc(\IIP_{t,n}) = \Theta(\log n)$, and therefore, $\IIP_t \not\in \BPP_0$. 

\item   G\"o\"os, Harms, and Riazanov~\cite{GoosHarmsRiazanov2025} proved the existence of $n$-bit communication problems $F$ with $\Rcc(F) = O(1)$ versus $\DD^{\EQ}(F) \gtrsim \sqrt{n}$. In particular, they established $\BPP_0 \not\subseteq \Pcc^{\EQ}$.
\vskip5pt

\item  Goh and Hatami~\cite{gohhata2026} showed there is an arrangement of $2^n$ lines and $2^n$ points on the plane such that the corresponding $n$-bit point-line incidence communication problem $F$ gives the optimal separation $\Rcc(F) = O(1)$ versus $\DD^{\EQ}(F) \gtrsim n$. 
\end{itemize}
\end{updatebox}
The communication problems in $\Pcc^\EQ$ are highly structured as they are linear combinations of a few blocky matrices.  On the other hand, the only known example in $\BPP \setminus \Pcc^\EQ$ is the \textsc{integer inner product}, which has a low-dimensional geometric representation (i.e. bounded sign-rank) and enjoys nice structural properties. 

\begin{updatebox}
 The recent results above provide further examples in
$\mathrm{BPP}\setminus \Pcc^{\mathrm{EQ}}$, even in the constant-cost setting.
\end{updatebox}

All these known examples contain large monochromatic rectangles. Does every Boolean matrix with an efficient randomized communication protocol contain a large monochromatic rectangle? More specifically, G\"o\"os, Kamath, Pitassi, and Watson~\cite{goos2019query} asked the following question. 
\begin{question}
\label{q:BPP_large_rect}
Is it the case that for every family of $n$-bit communication problems $F_n$ in $\BPP$, there exists $c>0$ such that 
$\rect(F_n) \geq 2^{-c (\log n)^c}$?
\end{question}
By \cref{thm:IW}, a negative answer to \cref{q:BPP_large_rect} would imply that $\BPP\not\subseteq \Pcc^{\NP}$, a relation that remains unknown. 
\begin{conjecture}\label{conj:bppvsPNP}
$\BPP\not\subseteq \Pcc^{\NP}$. 
\end{conjecture}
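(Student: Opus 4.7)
The plan is to exhibit an explicit family $F_n$ of Boolean matrices lying in $\BPP$ for which $\wrect(F_n)^{-1}$ grows faster than any polynomial in $\log n$. Once this is achieved, \cref{thm:IW} immediately gives $\DD^{\NP}(F_n) = (\log n)^{\omega(1)}$, so $F_n \notin \Pcc^{\NP}$ and \cref{conj:bppvsPNP} follows.

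My first candidate would be $\IIP_t$ for a sufficiently large fixed $t \ge 3$. Two pieces of evidence motivate this choice: $\IIP_t \in \BPP$ via the random-prime protocol recorded in the excerpt, and the bound $\|\IIP_{t,n}\|_{\gamma_2} \ge 2^{\Omega(n)}$ from \cite{CHHS23} rules out any short $\pm 1$-decomposition of $\IIP_{t,n}$ into blocky matrices, which is the kind of structure that otherwise tends to create large balanced monochromatic rectangles. The main technical step is then a rectangle-size bound: for every product measure $\mu_\cX \times \mu_\cY$ and every monochromatic rectangle $S \times T$ of $\IIP_{t,n}$, show
\begin{equation*}
\mu_\cX(S)\,\mu_\cY(T) \le 2^{-n^{\Omega(1)}}.
\end{equation*}
For $1$-rectangles, the defining condition $x \cdot y = 0$ on all of $S\times T$ forces $T \subseteq \mathbb{Z}^t$ to lie in the integer annihilator of the $\mathbb{Q}$-span of $S$, giving a strict dimension trade-off; I would try to amplify this into a product-mass bound by passing to a random small modulus $p$ and invoking character-sum estimates in $(\mathbb{Z}/p\mathbb{Z})^t$. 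For $0$-rectangles, one likely needs a Fourier-analytic or additive-combinatorial argument showing that $x \cdot y$ is generically spread across $\mathbb{Z}\setminus\{0\}$.

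If the direct attack on $\IIP_t$ stalls, an alternative route is query-to-communication lifting. One would look for a partial search or relation problem whose randomized query complexity is $\polylog n$ yet for which any $\NP$-query decision dag requires $n^{\Omega(1)}$ nodes, and then compose it with a gadget that transfers the dag obstruction into a product-measure $\wrect$ lower bound while preserving the randomized upper bound. The $\DD^{\NP}$-lifting machinery of \cite{goos2019query} already produces separations of the right flavour (as in \cref{thm:GapPNP}); what is missing is a base query problem that is simultaneously easy for randomized algorithms and hard for $\NP$-query dags.

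The core obstacle in either route is methodological rather than computational. Every technique currently known to lower-bound $\wrect^{-1}$ at the super-polylogarithmic scale---discrepancy, sign-rank, approximate trace norm, corruption---in fact lower-bounds $\Rcc$ at the same scale, which would contradict $F_n \in \BPP$. Settling \cref{conj:bppvsPNP} therefore requires isolating a rectangle obstruction that is strictly weaker than any randomized lower bound: sensitive to the absence of large balanced monochromatic rectangles, yet compatible with the existence of an $O(\polylog n)$-cost randomized protocol. Producing such an obstruction, plausibly by exploiting the known asymmetry between $\NP$ and $\BPP$ at the query level and then lifting it, is the main conceptual hurdle I would expect to consume most of the effort.
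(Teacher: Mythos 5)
You have not proved anything here, and indeed the statement you were asked about is an open conjecture in the paper: the authors give no proof of \cref{conj:bppvsPNP}, they only observe (as you do) that a negative answer to \cref{q:BPP_large_rect} would imply it via \cref{thm:IW}. Your proposal is a research plan whose decisive step --- exhibiting a family in $\BPP$ with $\wrect(F_n)^{-1}$ superpolylogarithmic in $\log n$ (or any other route to $\DD^{\NP}(F_n)=(\log n)^{\omega(1)}$) --- is left entirely open, and your own final paragraph concedes this. So there is a genuine gap: the proposal reduces the conjecture to a statement that is itself unproven and, as the paper stresses, is exactly where the difficulty lies.

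Moreover, your first concrete candidate is provably unsuitable. The matrix $\IIP_{t,n}$ has bounded sign-rank (the paper notes this explicitly: it has ``a low-dimensional geometric representation (i.e.\ bounded sign-rank)''; concretely, $\tfrac12-(x\cdot y)^2$ realizes it with rank $O(t^2)$). By \cref{thm:monochromaticSignRank}, bounded sign-rank forces $\wrect(\IIP_{t,n})^{-1}=2^{O(t^2)}=O(1)$, so the rectangle-mass bound $\mu_\cX(S)\mu_\cY(T)\le 2^{-n^{\Omega(1)}}$ you set out to prove is false for every fixed $t$: these matrices contain large (weighted) monochromatic rectangles under every product measure, which is precisely why the paper lists them among the structured examples rather than as candidates against $\Pcc^{\NP}$. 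The large $\gamma_2$-norm of $\IIP_{t,n}$ rules out membership in $\Pcc^{\EQ}$, but it says nothing against large monochromatic rectangles, so the motivation you cite does not support the step you need. The lifting route you sketch also currently fails for the reason the paper and \cite{goos2019query} point out: the known $\DD^{\NP}$-hard lifted functions are not in $\BPP$, and no base problem with the two properties you require is known. In short, the conjecture remains open and your attempt does not close, or even narrow, the gap.
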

Independently of~\cite{goos2019query}, and also motivated by  \cref{conj:bppvsPNP}, \cite{chattopadhyay2019equality} asked whether there exists a $c>0$ such that every communication problem $F$ satisfies $\rect(F)\geq 2^{-c\Rcc(F)^c}$. In fact, we do not know whether there is a uniform lower bound on $\rect(F)$ depending only on $\Rcc(F)$.

\begin{question}[{\cite{chattopadhyay2019equality,HHH23}}]\label{q:uniformrectangle}
Is there a function $\kappa:\N \rightarrow (0,1)$ such that 
$\rect(F)\geq \kappa(\Rcc(F))$?
\end{question}
 
As we shall see in \cref{thm:monochromaticSignRank}, a negative answer to \cref{q:uniformrectangle} would imply that $\BPP_0 \not\subseteq \UPP_0$, which is currently unknown.

\subsection{Is two-sided error necessary?}

Define the one-sided randomized communication complexity  $\Rcc^1(F)$  and its corresponding complexity classes $\RP$ and $\RP_0$ analogous to the two-sided error counterparts $\Rcc(F)$, $\BPP$, and $\BPP_0$.

\begin{quote}
Are two-sided error protocols genuinely more powerful than one-sided error protocols?
\end{quote}

One could give an affirmative answer to this question by referring to \textsc{Equality}, which satisfies $\Rcc(\EQ_n)=O(1)$ while $\Rcc^1(\EQ_n)=\Omega(n)$. This, however, is not a fully satisfactory separation. Indeed, since $\Rcc^1(\neg \EQ_n)=O(1)$, we can solve \textsc{Equality} with a single oracle query to the \textsc{Nonequality} problem which belongs to $\RP_0$. In other words, $\textsc{Equality} \in \coRP_0$.

If we examine all the known examples in $\BPP$, we realize that they all essentially boil down to solving problems with one-sided error in the sense that they are either in $\RP \cup \coRP$, or they are composed of a few components, each belonging to $\RP \cup \coRP$. 
We find this surprising, as we are not aware of any evident reasons as to why a two-sided error protocol might be simulated by a series of steps that can be performed by efficient one-sided error protocols. We suspect this phenomenon to be due to our limited knowledge of examples in $\BPP$. 
 
Define the class $\Pcc^{\RP}$ similarly to $\Pcc^{\NP}$, except that  the protocol is now charged $\Rcc^1(P_v)$ for its oracle queries $P_v$ at a node $v$. The simple inclusions of $\RP\subseteq \BPP$, $\Pcc^{\RP} \subseteq \BPP$ and  $\Pcc_0^{\RP} \subseteq \BPP_0$ are immediate from the definitions.  
\begin{question}\label{q:bppvsprp}
Is it true that  $\BPP= \Pcc^{\RP}$?
\end{question}

It is known that nondeterministic protocols can simulate one-sided protocols with a logarithmic loss,
\begin{equation}\label{eq:NvsR1}
\Ncc(F) \leq \Rcc^1(F)+O(\log n).
\end{equation} 
This shows that $\RP\subseteq \NP$ and $\Pcc^{\RP}\subseteq \Pcc^{\NP}$. In particular, \cref{conj:bppvsPNP} would imply a negative answer to \cref{q:bppvsprp}.

It is interesting to ask the above questions in the constant-cost setting. 

\begin{question}
Is it true that  $\BPP_0 = \Pcc_0^{\RP}$? 
\end{question}

\cref{thm:rectfromR1} below implies that for every communication problem in $\Pcc_0^{\RP}$, we have $\wrect(F)=\Omega(1)$. In particular,  a negative answer to \cref{q:uniformrectangle} would imply  $\BPP_0 \not\subseteq \Pcc_0^{\RP}$.

\begin{theorem}[{\cite[Theorem~3.8]{HHH23}}]
\label{thm:rectfromR1}
For every communication problem $F$, \[\wrect(F)\ge 2^{-O(\Rcc^1(F))}.\]
\end{theorem}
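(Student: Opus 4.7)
Fix a product measure $\mu = \mu_\cX \times \mu_\cY$ and set $c = \Rcc^1(F)$. By Yao's minimax principle applied to the public-coin one-sided protocol, fixing its randomness gives a deterministic one-sided protocol $\pi$ of cost at most $c$ with $\mu$-error at most $1/3$; one-sidedness is preserved by any fixing of the randomness, so $\pi(x,y) = 0$ whenever $F(x,y) = 0$. The protocol $\pi$ partitions $\cX \times \cY$ into at most $2^c$ leaf rectangles, each labelled $0$ or $1$, and by one-sidedness every $1$-labelled rectangle is automatically $1$-monochromatic for $F$.

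The argument splits on $p := \Pr_\mu[F = 1]$. If $p \ge 1/2$, the total $\mu$-mass of the $1$-labelled rectangles is $\Pr_\mu[\pi = 1] = \Pr_\mu[\pi = 1, F = 1] \ge p - 1/3 \ge 1/6$, so by pigeonhole some $1$-labelled rectangle has $\mu$-mass $\ge 1/(6 \cdot 2^c) = 2^{-O(c)}$, yielding the desired $1$-monochromatic rectangle of $F$. The harder case is $p < 1/2$, where the $1$-labelled rectangles may carry negligible mass and one must instead produce a $0$-monochromatic rectangle. Here the $0$-labelled rectangles collectively have $\mu$-mass at least $1/2$, so by averaging some $0$-labelled leaf $R = A \times B$ has $\mu(R) \ge 2^{-c-1}$; the $\mu$-mass of $F$-$1$-entries inside $R$ is bounded by the total error of $\pi$, namely $1/3$, but this error could in principle be all concentrated in $R$, so $R$ itself is not guaranteed to be $0$-monochromatic for $F$.

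\textbf{Main obstacle.} The core difficulty is then to extract a truly $0$-monochromatic subrectangle $A' \times B' \subseteq R$ of $F$ with $\mu_\cX(A') \mu_\cY(B') \ge 2^{-O(c)}$. A straightforward amplification of the protocol followed by applying \cref{thm:IW} to the surrogate $G := \pi$ fails to give the linear exponent $O(c)$: amplifying to $\mu$-error $\epsilon \ll 2^{-c}$ blows up the deterministic cost to $\Omega(c^2)$, and this propagates through \cref{thm:IW} to $\wrect(G) \ge 2^{-O(c^2)}$ rather than $2^{-O(c)}$. Achieving the sharp bound therefore requires a finer argument, presumably exploiting the product structure of $\mu$: use Markov's inequality inside $R$ to restrict to rows of $R$ carrying small $\mu_\cY$-mass of $F^{-1}(1)$, and symmetrically for columns, then eliminate the remaining $F$-$1$-entries across all retained rows via a random-subset / probabilistic extraction argument tailored to the one-sided protocol's error structure. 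The main technical difficulty is that a union-bound over retained rows does not suffice---$F$-$1$-entries in different good rows can lie at pairwise disjoint columns, so no common column restriction works in general---and producing the right combinatorial lemma that turns the smallness of $\mu(R \cap F^{-1}(1))$ into a genuine $0$-monochromatic subrectangle while losing only a $2^{-O(c)}$ factor of mass is where the bulk of the proof's effort lies.
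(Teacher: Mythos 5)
Your case $p\ge 1/2$ is fine, but the proposal does not actually prove the theorem: in the case $p<1/2$ you stop at the point where the real work begins, and the route you sketch for it cannot be completed as described. After fixing the randomness against $\mu$, the only facts you retain about the heavy $0$-labelled leaf $R$ are that $\mu(R)\ge 2^{-c-1}$ and that the $\mu$-mass of $F$-$1$-entries inside $R$ is at most $1/3$. These two facts alone do not imply any monochromatic subrectangle of mass $2^{-O(c)}$: a rectangle in which the $1$-entries form a random sparse pattern (say relative density $1/4$) has no $0$-monochromatic subrectangle of relative mass better than $\mathrm{polylog}(m)/m^2$ under the uniform product measure, so no Markov-plus-extraction argument carried out inside $R$ can succeed. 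The step you defer ("the right combinatorial lemma") is exactly the content of the theorem, and the information you have kept after fixing the randomness is provably insufficient to supply it.

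The fix is to not fix the randomness at all, and instead use the per-input guarantee on $1$-entries, which your first move throws away. Since one-sidedness holds for every fixing of the public coins, every $1$-leaf of every protocol $\pi_r$ in the support is a $1$-monochromatic rectangle of $F$, and each $1$-entry is covered by such a leaf with probability at least $2/3$ over $r$. Assigning to each $1$-monochromatic rectangle $R$ the weight $w_R=\tfrac32\,\Ex_r[\#\{\text{$1$-leaves of }\pi_r\text{ equal to }R\}]$ gives a fractional cover of $F^{-1}(1)$ with $\sum_{R\ni(x,y)}w_R\ge 1$ for every $1$-entry and total weight $W\le \tfrac32\cdot 2^c$. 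Now run the weighted analogue of the argument proving \cref{prop:rect_NP}: if some rectangle in the support has $\mu$-mass at least $1/(16W^2)$ we are done; otherwise split the support into rectangles with $\mu_\cX$-light row sides and the rest (which have $\mu_\cY$-light column sides), let $A$ be the rows whose total weight under the first group is below $1/2$ and $B$ the columns whose total weight under the second group is below $1/2$. Any $1$-entry in $A\times B$ would have total coverage below $1$, so $A\times B$ is $0$-monochromatic, and Markov's inequality gives $\mu_\cX(A),\mu_\cY(B)\ge 1/2$. Either way $\wrect(F)\ge \Omega(4^{-c})=2^{-O(c)}$, with no case split on $p$, no amplification, and no quadratic loss in the exponent.
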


\subsection{Sign-rank and $\UPP$}

The \emph{unbounded-error communication complexity} of $F$, denoted by $\UU(F)$, is the smallest communication cost of a \emph{private-coin} randomized protocol $\pi$ that satisfies \[\Pr[\pi(x,y) \neq F(x,y)]< \frac{1}{2} \qquad \forall x,y. \]
In other words, the protocol is only required to outperform a random guess. The complexity classes corresponding to this measure are $\UPP$ and $\UPP_0$.

It is crucial that in this communication model, the players have only access to private randomness. Otherwise, given access to shared randomness, they could jointly sample a random input $(x_0,y_0)$ at no cost and use two bits of communication to verify whether $(x,y)=(x_0,y_0)$. If $(x,y)=(x_0,y_0)$, then they know the output $F(x,y)$, and if it is not, they can output a random bit. This protocol has an error probability strictly less than $1/2$. 

Paturi and Simon~\cite{paturi1986probabilistic} proved that the unbounded-error communication complexity is precisely determined by an elegant matrix parameter called \emph{sign-rank}. 

To discuss sign-rank, it is more convenient to switch from Boolean matrices to \emph{sign matrices}, which are matrices with $\pm 1$ entries.  The \emph{sign-rank} $\rank_\pm(F)$ of a sign matrix $F_{\cX \times \cY}$ is the smallest rank of a real matrix $A_{\cX \times \cY}$ such that the entries of $A$ are nonzero and have the same signs as their corresponding entries in $F$. Geometrically, sign-rank corresponds to the smallest dimension where we can represent $F$ as points and homogeneous half-spaces. 

We can reformulate the definition of sign-rank as follows.

\begin{definition}[Sign-rank]
\label{def:signrank}
The sign-rank of a sign matrix  $F_{\cX \times \cY}$ is the smallest $d$ such that there exist vectors $u_x, v_y\in \R^d$ with $F(x,y)=\sign(\inp{u_x}{v_y})$ for all  $(x,y) \in \cX\times \cY$.
\end{definition} 

Recall that the log-rank conjecture speculates that for deterministic protocols, the communication complexity is polynomially related to the logarithm of the rank of the corresponding matrix. Paturi and Simon proved that a similar and tighter connection is true for unbounded-error protocols, except that rank is replaced by sign-rank.
	
\begin{theorem}[Paturi and Simon~\cite{paturi1986probabilistic}]
	\label{thm:PaturiSimon}
	For every sign-matrix $F$, we have 
	\[\UU(F)= \log\rank_\pm(F) \pm O(1). \]
\end{theorem}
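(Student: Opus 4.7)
The plan is to establish the two inequalities $\UU(F)\le \log\rank_\pm(F)+O(1)$ and $\UU(F)\ge \log\rank_\pm(F)-O(1)$ separately; both arguments are short, and the main care lies in choosing the right normalizations on the upper-bound side.

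For the upper bound, let $d=\rank_\pm(F)$ and, invoking \cref{def:signrank}, fix vectors $u_x,v_y\in\R^d$ with $F(x,y)=\sign\inp{u_x}{v_y}$. Rescale each $u_x$ so that $\norm{u_x}_1=1$ and each $v_y$ so that $\norm{v_y}_\infty\le 1$; neither rescaling affects the sign pattern, and both families are computed in advance by the parties from $F$. The protocol is the following: using her private randomness, Alice samples $i\in[d]$ with probability $|u_x^{(i)}|$ and transmits $(i,\sign(u_x^{(i)}))$ to Bob in $\lceil\log d\rceil+1$ bits; Bob then outputs $+1$ with probability $\tfrac{1+\sign(u_x^{(i)})\,v_y^{(i)}}{2}\in[0,1]$ and $-1$ otherwise. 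A direct computation gives
\[
\Pr[\text{output}=+1]
\;=\; \tfrac12 + \tfrac12\sum_{i}|u_x^{(i)}|\sign(u_x^{(i)})\,v_y^{(i)}
\;=\; \tfrac12 + \tfrac12\inp{u_x}{v_y},
\]
which has the same sign as $F(x,y)$; strictness holds because the sign-rank representation forces $\inp{u_x}{v_y}\ne 0$. Hence the error probability is strictly below $\tfrac12$ on every input, and the total cost is $\log d+O(1)$.

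For the lower bound, let $\pi$ be a private-coin protocol of cost $c=\UU(F)$ and consider the acceptance matrix $M$ defined by $M(x,y)\defeq \Pr[\pi(x,y)=+1]$. The key structural feature of private-coin protocols is that for each leaf $\ell$, the set of $(x,R_A,y,R_B)$ reaching $\ell$ is a combinatorial rectangle (at every internal node the transmitted bit depends only on one party's input together with that party's private coins). Consequently, the probability that input $(x,y)$ reaches $\ell$ factors as $p_\ell(x)\cdot q_\ell(y)$ for suitable nonnegative $p_\ell,q_\ell$. Summing over the $+1$-labelled leaves writes $M$ as a sum of at most $2^c$ nonnegative rank-$1$ matrices, so $\rank(M)\le 2^c$. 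Setting $A\defeq M-\tfrac12\J$, we have $\rank(A)\le 2^c+1$, and by correctness of $\pi$ every entry of $A$ is nonzero with the same sign as the corresponding entry of $F$. Hence $\rank_\pm(F)\le 2^c+1$, giving $\log\rank_\pm(F)\le \UU(F)+O(1)$.

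I do not expect a serious obstacle. The one subtlety worth flagging is in the lower bound: the factorization $p_\ell(x)q_\ell(y)$ relies crucially on privacy of the coins — this is precisely why the public-coin variant of the unbounded-error model would collapse (two bits suffice for any $F$ with shared randomness, as noted earlier in the excerpt). In the upper bound, the normalizations $\norm{u_x}_1=1$ and $\norm{v_y}_\infty\le 1$ are exactly what is needed so that Bob's biased coin is well-defined and the inner-product estimator is unbiased; H\"older's inequality then automatically yields $|\inp{u_x}{v_y}|\le 1$ to close the bias computation.
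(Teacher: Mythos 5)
Your proof is correct. The paper itself offers no proof of \cref{thm:PaturiSimon}---it simply cites Paturi and Simon---so there is nothing internal to compare against; your two-sided argument is essentially the original one. Both directions check out: in the upper bound, the normalizations $\norm{u_x}_1=1$ and $\norm{v_y}_\infty\le 1$ make Alice's coordinate-sampling and Bob's biased coin well-defined, and the acceptance probability $\tfrac12+\tfrac12\inp{u_x}{v_y}$ is strictly on the correct side of $\tfrac12$ because a sign-rank representation has no zero inner products; in the lower bound, the leaf-rectangle factorization $p_\ell(x)q_\ell(y)$ (valid precisely because the coins are private) writes the acceptance matrix $M$ as a sum of at most $2^c$ rank-one matrices, and subtracting $\tfrac12\J$ yields a matrix of rank at most $2^c+1$ whose entries are nonzero with the signs of $F$. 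The only conventions worth noting are harmless at the $O(1)$ level: whether the output bit is charged as communication, and whether the output is read off the leaf label rather than an extra private coin toss of the last speaker.
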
	

In light of \cref{thm:PaturiSimon}, to study $\UU(F)$, one can set aside the intricacies of communication and focus on the geometric notion of sign-rank.

\paragraph{Number of matrices of small sign-rank:} Shortly after the introduction of sign-rank in~\cite{paturi1986probabilistic}, Alon, Frankl, and R\"odl~\cite{alon1985geometrical} used results of~\cite{MR161339,MR0200942,MR226281} on the number of connected components of real algebraic varieties and obtained a linear lower bound on the sign-rank of random sign matrices. This argument was later refined in~\cite[Lemma 24]{DBLP:conf/colt/AlonMY16} to the following bound on the number of low sign-rank matrices. 

\begin{lemma}[{See \cite[Lemma 24]{DBLP:conf/colt/AlonMY16}}]
\label{lem:count}
For  $d \le \frac{m}{2}$, the number of $m \times m$ sign matrices of sign-rank at most $d$ does not exceed $\left( O(m/d)\right)^{2dm} \leq 2^{O(dm \log(m))}$.
\end{lemma}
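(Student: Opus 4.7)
The plan is to reduce the counting question to a classical bound on the number of sign patterns of real polynomial systems, specifically Warren's theorem (a refinement of the Oleinik--Petrovsky--Milnor--Thom bounds cited in the paragraph preceding the lemma). The key observation is that a sign matrix of sign-rank at most $d$ is parametrized by only $2md$ real numbers.

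Concretely, by \cref{def:signrank}, every $m\times m$ sign matrix $F$ with $\rank_\pm(F)\le d$ arises from a choice of vectors $u_1,\dots,u_m,v_1,\dots,v_m\in\R^d$ via $F(i,j)=\sign\inp{u_i}{v_j}$. Collecting all $2md$ coordinates of these vectors into a single variable $z\in\R^{2md}$, each of the $N=m^2$ entries $F(i,j)$ is the sign of a fixed bilinear form $P_{i,j}(z)=\inp{u_i}{v_j}$, which is a polynomial of degree $\le 2$ in $z$. Thus the set of sign matrices we are counting is contained in the set of sign patterns $(\sign P_{i,j}(z))_{i,j\in[m]}$ realized by a single point $z\in\R^{k}$ with $k=2md$.

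Next I would invoke Warren's theorem (or the Milnor--Thom/Oleinik--Petrovsky bound in the form sharpened by Warren): the number of distinct sign patterns of $N$ real polynomials of degree $\le \delta$ in $k$ variables is at most $(4e\delta N/k)^k$ whenever $N\ge k$. In our case $\delta=2$, $N=m^2$, $k=2md$, and the hypothesis $d\le m/2$ ensures $N=m^2\ge 2md=k$. Substituting, the number of realizable sign patterns is at most
\[
\left(\frac{4e\cdot 2\cdot m^2}{2md}\right)^{2md}=\left(\frac{4em}{d}\right)^{2md}=\bigl(O(m/d)\bigr)^{2md},
\]
and the second inequality $\bigl(O(m/d)\bigr)^{2md}\le 2^{O(dm\log m)}$ is immediate from $\log(m/d)\le \log m$.

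The main substantive step is the application of Warren's theorem; I do not expect any real obstacle beyond being careful that zero entries are handled correctly (we may either perturb the vectors infinitesimally to ensure $\inp{u_i}{v_j}\ne 0$ without changing the induced sign matrix, or absorb the $3^N$ vs $2^N$ factor into the constant, since the theorem counts sign patterns in $\{-1,0,+1\}^N$ and we only need the final exponential-type bound). The condition $d\le m/2$ is exactly what makes $N\ge k$ and thus makes Warren's bound applicable in the stated form, which is why the hypothesis appears in the lemma.
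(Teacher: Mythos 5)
Your argument is correct and is essentially the same as the one behind the paper's citation: the paper gives no in-text proof, deferring to \cite[Lemma 24]{DBLP:conf/colt/AlonMY16}, whose proof is exactly this parametrization by $2md$ real coordinates followed by Warren's refinement of the Oleinik--Petrovsky--Milnor--Thom sign-pattern bounds. One small simplification: since \cref{def:signrank} already requires the inner products $\inp{u_x}{v_y}$ to be nonzero, each sign matrix of sign-rank at most $d$ is realized by a point where no $P_{i,j}$ vanishes, so no perturbation or zero-pattern bookkeeping is needed.
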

\cref{lem:count} shows that there are very few matrices with small sign-rank and that a typical $m\times m$ sign matrix has sign-rank $\Omega(m)$. This scarcity of small-sign-rank matrices suggests that they might possess strong structural properties.

\paragraph{Large monochromatic rectangles:}  \cite{ALON2005310} used the geometric properties of sign-rank to prove  that every $\cX \times \cY$ sign matrix of  sign-rank $d$ contains an  $\frac{|\cX|}{2^{d+1}} \times \frac{|\cY|}{2^{d+1}}$ monochromatic rectangle.  Their result uses a theorem of Yao and Yao~\cite{YaoYao}, which is based on the Borsuk-Ulam theorem, a result in topology.  Slightly different bounds are also obtained in~\cite{fox2016polynomial} using the cutting lemma of Chazelle~\cite{MR1194032}.  In our notation, we have the following relation between sign-rank and $\wrc(F)$. 

\begin{theorem}[See {\cite[Theorem 1.3]{ALON2005310}}]
\label{thm:monochromaticSignRank}
For every sign-matrix $F$, we have
\begin{equation}
\label{eq:AlonRect}
\rank_\pm(F) \gtrsim  \log \left( \wrc(F)^{-1} \right). 
\end{equation}
\end{theorem}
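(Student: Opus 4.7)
The plan is to work directly from the sign-rank representation provided by \cref{def:signrank}. Setting $d := \rank_\pm(F)$, find vectors $u_x, v_y \in \R^d \setminus \{0\}$ (WLOG on $S^{d-1}$) with $F(x,y) = \sign(\inp{u_x}{v_y})$. Given any product probability measure $\mu_\cX \times \mu_\cY$, I would push it forward to probability measures $\nu_1, \nu_2$ on $S^{d-1}$ via $x \mapsto u_x$ and $y \mapsto v_y$. Pulling back a monochromatic rectangle on $S^{d-1} \times S^{d-1}$ produces one in $F$, so it suffices to exhibit $A, B \subseteq S^{d-1}$ with $\nu_1(A), \nu_2(B) \geq 2^{-O(d)}$ and $\sign\inp{u}{v}$ constant on $A \times B$.

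The key geometric tool is a Yao--Yao-type partition theorem~\cite{YaoYao}, a consequence of Borsuk--Ulam: any Borel probability measure on $\R^d$ admits an equipartition into $2^d$ convex cones $\{O_\epsilon\}_{\epsilon \in \{\pm1\}^d}$ apexed at the origin, cut out by $d$ hyperplanes through the origin with unit normals $w_1, \ldots, w_d$. Apply this to $\nu_1$, so each cone carries $\nu_1$-mass $2^{-d}$. Define the sign map $\sigma(v) := (\sign\inp{v}{w_i})_{i=1}^{d} \in \{\pm 1\}^d$; by pigeonhole some $\epsilon^*$ satisfies $\nu_2(\sigma^{-1}(\epsilon^*)) \geq 2^{-d}$. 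Take $A := O_{\epsilon^*}$ and $B := \sigma^{-1}(\epsilon^*)$.

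It remains to verify that $\sign\inp{u}{v}$ is constant on $A \times B$. In the clean case that $w_1, \ldots, w_d$ are orthonormal, writing $\inp{u}{v} = \sum_{i} \inp{u}{w_i}\inp{v}{w_i}$ makes this immediate: every summand is positive on $A \times B$ because $\epsilon^*_i = \sign\inp{u}{w_i} = \sign\inp{v}{w_i}$, hence $\inp{u}{v}>0$ throughout. This last step is the main obstacle: a generic Yao--Yao partition does not yield orthogonal hyperplanes, and cross-terms $\inp{w_i}{w_j}$ with $i \neq j$ can sabotage the sign (as a $d=2$ example with nearly antipodal $w_1,w_2$ shows). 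Alon and Pach's resolution~\cite{ALON2005310} exploits a stronger structural property of the Yao--Yao partition, namely that the resulting simplicial cones come equipped with a well-behaved polar duality ensuring $u \in O_{\epsilon^*}$ and $v$ in the matching dual cell force $\inp{u}{v}$ to have a fixed sign. Alternatively, the route of~\cite{fox2016polynomial} via Chazelle's cutting lemma~\cite{MR1194032} bypasses Yao--Yao entirely: take a shallow cutting of the hyperplane arrangement $\{v_y^\perp : y \in \cY\}$, locate a cell of $\nu_1$-mass $2^{-O(d)}$ inside which all relevant sign patterns are frozen, then pigeonhole over sign patterns to extract a $\nu_2$-mass $2^{-O(d)}$ column set with a consistent sign. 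Either route gives $\rank_\pm(F) \gtrsim \log(\wrect(F)^{-1})$.
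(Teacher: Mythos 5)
Your overall plan (pass to the unit-vector representation, push the product measure forward, and extract a constant-sign product set of mass $2^{-O(d)}$) is the right one, but the geometric engine you invoke is not the Yao--Yao theorem, and as stated it is not a theorem at all. You assume that any probability measure on $\R^d$ can be equipartitioned into $2^d$ equal-mass convex cones cut out by $d$ hyperplanes through the origin. A degrees-of-freedom count already rules this out in general: $d$ hyperplanes through the origin carry roughly $d(d-1)$ parameters, while an equipartition imposes $2^d-1$ constraints, so for moderately large $d$ no such arrangement exists (this is the well-known hyperplane mass-equipartition problem, not Yao--Yao). What Yao--Yao~\cite{YaoYao} actually gives is a partition of $\R^d$ into $2^d$ parts of equal mass with the property that \emph{every hyperplane avoids at least one part}; the parts are produced by a recursive construction and are not the cells of an arrangement of $d$ hyperplanes, so there is no family $w_1,\dots,w_d$ of normals to run your sign-pattern matching against. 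You correctly notice that your argument only closes when the $w_i$ are orthonormal, but the escape hatch you then offer --- a ``polar duality'' of the Yao--Yao cells guaranteeing the sign --- is not a property of that partition and is not how the cited proof goes.

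The actual argument in~\cite{ALON2005310} uses the avoidance property directly and needs no orthogonality: apply Yao--Yao to the pushforward of $\mu_\cX$ under $x\mapsto u_x$ to get $2^d$ parts each of $\mu_\cX$-mass $2^{-d}$. For each column $y$, the hyperplane $\{u:\inp{u}{v_y}=0\}$ misses at least one part, so $\sign\inp{u_x}{v_y}$ is constant over the rows landing in that part. Pigeonholing the columns over the $2\cdot 2^d$ pairs (part, sign) yields a set of columns of $\mu_\cY$-mass at least $2^{-(d+1)}$ sharing one part and one sign, hence a monochromatic rectangle of product measure at least $2^{-(2d+1)}$, i.e.\ $\wrect(F)\ge 2^{-O(\rank_\pm(F))}$, which is \cref{eq:AlonRect}. (One also needs a standard perturbation/smoothing step, since Yao--Yao is stated for nice densities and your pushforward measures are atomic.) Your alternative sketch via Chazelle's cutting lemma~\cite{MR1194032} in the spirit of~\cite{fox2016polynomial} is a legitimate second route, but as written it is only a gesture --- the cell of large mass with frozen sign patterns is exactly what has to be produced, so it does not repair the gap. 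As it stands, the proposal identifies the right strategy but is missing the one idea that makes it work: the hyperplane-avoidance property of the Yao--Yao partition replaces, and renders unnecessary, the orthogonality your cone argument requires.
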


On the other hand, \cite{hatami2022lower} used a counting argument to show that there are matrices with $\wrc(A)^{-1}=O(1)$ and very large sign-rank.  
 
\begin{theorem}[{\cite[Theorem 3.2]{hatami2022lower}}] 
\label{thm:count_signrank_large}
There exist $m \times m$ sign matrices $A$ such that  
\[
\wrc(A)^{-1} \le 2^{15}, \qquad \text{while} \qquad \rank_\pm(A)=\Omega\left(\frac{m^{1/3}}{\log(m)}\right).
\]
\end{theorem}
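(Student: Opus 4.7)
The natural strategy is a counting argument that pits \cref{lem:count} against a large explicit family of structured matrices. Concretely, let $d = c\, m^{1/3}/\log m$ for a small constant $c > 0$. By \cref{lem:count}, the number of $m\times m$ sign matrices with sign-rank at most $d$ is bounded by $2^{O(dm\log m)} = 2^{O(c m^{4/3})}$. Hence, if we can exhibit a family $\mathcal{F}$ of $m\times m$ sign matrices, each satisfying $\wrc(A)^{-1} \le 2^{15}$, with $\log|\mathcal{F}| = \omega(cm^{4/3})$, then some member of $\mathcal{F}$ must have sign-rank strictly larger than $d$, which gives the theorem.

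The construction of $\mathcal{F}$ should proceed through a block-structured template. Fix a constant $k$ (calibrated so that the analysis below delivers the target $2^{15}$) and partition $[m]$ into $k$ row-blocks $R_1,\ldots,R_k$ and $k$ column-blocks $C_1,\ldots,C_k$ of equal size $m/k$. Within each block $R_i\times C_j$ impose a combinatorial constraint --- for instance, that a large monochromatic sub-rectangle be present, or that the block be (approximately) determined by a common vector with a small perturbation budget --- so that every product measure $\mu_X\times \mu_Y$, by the usual pigeonhole step on the row/column groups, picks out a pair $(i,j)$ with $\mu_X(R_i),\mu_Y(C_j)\ge 1/k$, and the constraint then forces a monochromatic rectangle inside that heavy block of total mass $\ge 1/k^2 \ge 2^{-15}$. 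Meanwhile, by leaving the "unconstrained" entries completely free, the family becomes exponentially large in those free coordinates, and a careful accounting should yield $\log|\mathcal{F}|$ at least of order $m^{4/3}$ (in fact considerably larger, giving some slack).

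Combining the upper bound $2^{O(cm^{4/3})}$ on low-sign-rank matrices with the lower bound $\log|\mathcal{F}| = \omega(cm^{4/3})$ immediately produces the desired $A\in\mathcal{F}$ with $\rank_\pm(A) = \Omega(m^{1/3}/\log m)$, as claimed.

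The main obstacle is the construction itself, since the condition $\wrc(A)^{-1}\le 2^{15}$ is adversarial: it must hold against every product measure, including ones that concentrate on regions designed to avoid the guaranteed monochromatic rectangles. A purely blocky template (with every $R_i\times C_j$ monochromatic, for instance) is robust to any $\mu$ but yields only $2^{O(k^2)}$ matrices, far too few; conversely, inserting too much freedom into the blocks risks letting an adversarial $\mu$ dodge the monochromatic witnesses. The technical heart of the argument is to interpolate between these extremes --- for example, by ensuring each block carries an internal monochromatic sub-rectangle that is itself "robust" in the sense that any sub-measure on the block lands on it with nontrivial mass --- so that both the uniform $\wrc^{-1}$ bound and the $\Omega(m^{4/3})$-bit entropy count are attained simultaneously.
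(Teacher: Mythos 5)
You cannot be faulted for not matching a proof line-by-line here: the survey does not prove this statement, it cites \cite{hatami2022lower} and only records that the argument is a counting argument played against \cref{lem:count}. Your high-level frame is indeed that of the cited proof: fix $d\approx c\,m^{1/3}/\log m$, use \cref{lem:count} to bound the number of matrices of sign-rank at most $d$ by $2^{O(dm\log m)}=2^{O(m^{4/3})}$, and exhibit a family $\mathcal{F}$ of $m\times m$ sign matrices, each with $\wrc^{-1}\le 2^{15}$, of size exceeding that count. That reduction, and the numerology $\log|\mathcal{F}|\gtrsim m^{4/3}$, are correct.

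The gap is that the entire content of the theorem is the construction of such a family, and the block template you sketch cannot deliver it. First, the pigeonhole step only guarantees a block $R_i\times C_j$ of mass at least $1/k^2$; to conclude you need a monochromatic rectangle capturing a \emph{constant fraction of the conditional measure inside that block}, and that conditional measure is again an adversarial product measure. Requiring each block to ``contain a large monochromatic sub-rectangle'' is a statement about sizes (equivalently, about the uniform measure) and is defeated by a measure concentrated off that sub-rectangle; a single fixed sub-rectangle $S\times T\subsetneq R_i\times C_j$ that is robust against \emph{every} product measure supported on the block does not exist (a measure supported on $R_i\setminus S$ gives it mass zero), so your closing ``robust internal sub-rectangle'' fix forces $S\times T=R_i\times C_j$, collapsing the family to blow-ups of $k\times k$ sign patterns --- only $2^{k^2}=O(1)$ matrices for constant $k$. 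Second, ``leaving the unconstrained entries completely free'' is incompatible with the property you must certify: bounded $\wrc^{-1}$ is hereditary (extend a product measure on any submatrix by zero; a monochromatic rectangle of the full matrix of mass at least $\wrc(A)$ meets the submatrix in a monochromatic rectangle of the same mass), so every submatrix must itself admit a constant-mass monochromatic rectangle against every product measure on it, and a genuinely free $t\times t$ region is random-like and fails this under the uniform measure concentrated on it. Consequently the claimed bound $\log|\mathcal{F}|=\Omega(m^{4/3})$ is unsubstantiated; what remains to be done --- and what \cite{hatami2022lower} actually does --- is to design a family with roughly $m^{1/3}$ bits of freedom per row that nevertheless satisfies the uniform, hereditary rectangle guarantee, and to verify that guarantee against all product measures. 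Your proposal correctly isolates this obstacle but does not overcome it, so as it stands it proves nothing beyond the (easy) counting reduction.
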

It is known that $\Pcc^{\mathsf{NP}} \subsetneq \UPP$ (see~\cite{goos_landscape}).
Therefore, \cref{thm:count_signrank_large} is stronger than the separation of \cref{thm:GapPNP}, as it shows the existence of communication problems with $\wrc(F)^{-1}=O(1)$ that are not in $\UPP$. This answers an open problem by G\"o\"os, Kamath, Pitassi, and Watson~\cite{goos2019query}.

The caveat of \cref {thm:count_signrank_large} is that its existential proof does not provide any explicit construction. In fact, regarding explicit examples that separate sign-rank and $\wrect(\cdot)^{-1}$, our knowledge is embarrassingly limited. The following problem is open. 
  
\begin{problem}
\label{prob:explicit}
Construct an explicit sequence of matrices $F_n$ such that $\wrect(F_n)^{-1}=O(1)$ and
\[ \lim_{n \to \infty} \rank_{\pm}(F_n)=\infty. \]
\end{problem}

By \cref{thm:count_signrank_large}, we know such matrices exist, and in fact, with very large sign-ranks. On the other hand,  none of the known lower bound techniques can directly imply a solution to \cref{prob:explicit}. Indeed, in addition to the monochromatic rectangle lower bound, there are only two other known methods for
proving lower bounds on the sign-rank of explicit matrices: (i) Sign-rank is at least the VC-dimension: $\rank_\pm(A) \ge \mathrm{VC}(A)$; (ii) Forster’s method, which states that sign-rank is at least the inverse of the largest possible average margin among the representations of the matrix by points and half-spaces:  
 $\rank_\pm(F) \ge \margin^{\avg}(F)^{-1}$. We refer the reader to~\cite{hatami2022lower} for the definition of average margin and a thorough discussion of these facts. 

Qualitatively, \Cref{eq:AlonRect} is the strongest known method for proving lower bounds on the sign-rank of an explicit matrix.  If it fails to provide a super-constant lower bound for the sign-rank of a  matrix, then the other two methods will also fail. More precisely, we have 
\begin{equation}
\label{eq:rectanglestrong}
\sqrt{\VC(A)} \le \amargin(A)^{-1} \le \wrc^{-1}(A).
\end{equation} 
In this sense, \cref{prob:explicit} captures the limitation of the currently known lower bound techniques for sign-rank.

\paragraph{Sign-rank of hypercubes and $\BPP_0 \text{ vs } \UPP_0$:} Linial, Mendelson, Schechtman, and Shraibman~\cite{MR2359826} asked whether sign-rank can be bounded from above by a function of the so-called margin complexity. The relations between margin complexity, discrepancy, and randomized communication complexity, which were discovered later, allows us to rephrase their question as follows. 

\begin{question}
\label{q:LMSS}
Is it true that $\BPP_0 \subseteq \UPP_0$? Equivalently, is it possible to upper bound $\rank_\pm(F)$ by a function of $\Rcc(F)$? 
\end{question}

We believe the answer to this question to be negative.  Consider the sign version of the \textsc{Hypercube} problem, that is, let $\Q_n$ be the sign matrix whose rows and columns are indexed with the elements of $\{0,1\}^n$, and $\Q_n(x,y)=-1$ if $x$ and $y$ differ in exactly one coordinate. As we discussed earlier, $\Rcc(\Q_n)=O(1)$. We conjecture that the sign-rank of $\Q_n$ tends to infinity as $n$ grows, which, if true, would imply $\BPP_0 \not\subseteq \UPP_0$.   

\begin{conjecture}[Sign-rank of \textsc{Hypercube}~\cite{hatami2022lower} \textcolor{red}{Refuted in \cite{GoosHarmsImbachSokolov2025}}] 
\label{conj:cubes}
We have 
\[\lim_{n \to \infty} \rank_\pm(\Q_n)=\infty.\]
\end{conjecture}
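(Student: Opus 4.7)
The natural first attack is Forster's spectral lower bound, exploiting the fact that $\Q_n$ is a Cayley matrix over $\Z_2^n$. The first step is to compute the spectrum of $\Q_n$ explicitly: writing $\Q_n(x,y) = f(x \oplus y)$ with $f(z) = 1 - 2 \cdot \mathbf{1}[|z|=1]$, the Fourier expansion on $\Z_2^n$ gives eigenvalues $\lambda_S = 2^n \hat f(S)$, namely $\lambda_\emptyset = 2^n - 2n$ and $\lambda_S = -2(n - 2|S|)$ for $S \neq \emptyset$. Consequently $\|\Q_n\| = 2^n - 2n$ is entirely controlled by the single dominant eigenvalue along the all-ones direction, and the direct Forster bound $\rank_\pm(\Q_n) \gtrsim \sqrt{|\cX||\cY|}/\|\Q_n\|$ gives only $1 + o(1)$, which is trivial.

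The second step is to eliminate this dominant eigenvalue by passing to a submatrix $M = \Q_n[X,Y]$ on which the $+1$ bulk is suppressed, exploiting submatrix monotonicity of sign-rank. A clean candidate is the bipartite restriction to consecutive Hamming layers, $X = \{x : |x| = k\}$ and $Y = \{y : |y| = k+1\}$ for $k \approx n/2$: on $X \times Y$ the $-1$ entries form a bipartite inclusion matrix of the Boolean cube, whose non-trivial spectrum is $O(\sqrt{n})$ by standard Johnson-scheme computations, while the $+1$ bulk contributes an all-ones block of spectral norm $\approx \sqrt{|X||Y|}$. To kill the remaining all-ones component, I would invoke a diagonally weighted refinement of Forster's method: choose row/column probability weights $\mu_X, \mu_Y$ so that, after reweighting, the matrix is essentially orthogonal to the constant direction, and then apply Forster to the reweighted submatrix. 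If the weighted spectral norm turns out to be $\polylog(n)$ while $|X||Y|$ remains $\Theta(4^n/n)$, we would conclude $\rank_\pm(M) \to \infty$, and hence $\rank_\pm(\Q_n) \ge \rank_\pm(M) \to \infty$.

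\paragraph{Main obstacle.}
The essential difficulty is precisely the barrier flagged in the excerpt just after \cref{prob:explicit}: the three currently known explicit sign-rank lower-bound techniques --- $\VC$, inverse average margin, and $\log \wrc^{-1}$ --- satisfy the chain $\sqrt{\VC(A)} \le \amargin(A)^{-1} \le \wrc^{-1}(A)$ of \cref{eq:rectanglestrong}, and for $\Q_n$ the crudest upper bound already shows that $\wrc^{-1}$ is essentially constant: taking $\mu_X, \mu_Y$ each uniform on even-Hamming-weight strings produces a $+1$ monochromatic rectangle of measure $1/4$, because $|x \oplus y|$ is then always even and hence never $1$. This parity obstruction persists under any Boolean-symmetric submatrix of $\Q_n$, so the bipartite-layer approach above is likely to hit the same wall as soon as one tries to bound the weighted spectral norm: the ``orthogonalization'' step will force the weights to be nearly uniform, reintroducing the large eigenvalue. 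I therefore expect that a full proof of \cref{conj:cubes} requires a genuinely new sign-rank lower-bound technique, most plausibly one that leverages the translation-invariant Cayley structure of $\Q_n$ to convert a hypothetical $d$-dimensional sign-representation into an algebraic or topological constraint on the $n$ half-spaces that must isolate each vertex from its $n$ cube-neighbors --- a system that is infeasible for $d$ small as $n \to \infty$.
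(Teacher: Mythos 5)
There is no proof to compare here: \cref{conj:cubes} is stated in the paper as an \emph{open conjecture} (attributed to \cite{hatami2022lower}), and the paper offers no argument for it beyond the remark that any proof would need a lower-bound technique going beyond \cref{thm:monochromaticSignRank}. Your submission is likewise not a proof but a plan, and by your own admission the plan terminates at exactly the barrier the paper flags. The concrete gap is that the entire second half of your strategy --- the ``diagonally weighted refinement of Forster's method'' applied to a layer submatrix --- provably cannot work: a weighted Forster bound is precisely the inverse average-margin bound $\amargin(\cdot)^{-1}$, and by \cref{eq:rectanglestrong} this is at most $\wrc^{-1}(\cdot)$, which is $O(1)$ for $\Q_n$ by \cite{hatami2022lower}. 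Moreover the obstruction passes to \emph{every} submatrix, not just symmetric ones: a product measure supported on $X\times Y$ is a product measure on the full domain, and every monochromatic rectangle of $\Q_n[X,Y]$ is a monochromatic rectangle of $\Q_n$, so $\wrect(\Q_n[X,Y])\ge\wrect(\Q_n)$ and hence $\amargin(\Q_n[X,Y])^{-1}\le\wrc^{-1}(\Q_n[X,Y])\le\wrc^{-1}(\Q_n)=O(1)$. So no choice of layers $X,Y$ or weights $\mu_X,\mu_Y$ can yield a super-constant bound by this route; the ``genuinely new technique'' you ask for at the end is exactly what is missing, and it is missing from the literature as well.

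Two smaller points on what you did write. The spectral computation is correct ($\lambda_\emptyset=2^n-2n$, $\lambda_S=-2(n-2|S|)$ for $S\neq\emptyset$), as is the conclusion that plain Forster gives only $1+o(1)$. However, your justification that ``$\wrc^{-1}$ is essentially constant'' has a quantifier slip: exhibiting one product measure (uniform weights) under which the even-weight $\times$ even-weight block is a large $+1$-monochromatic rectangle only bounds the unweighted ratio $\rect(\Q_n)$; the weighted quantity $\wrect(\Q_n)$ is an infimum over \emph{all} product measures, and the bound $\wrect(\Q_n)^{-1}=O(1)$ is a nontrivial result of \cite{hatami2022lower}, which the paper cites for precisely this purpose. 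With that citation in place, your diagnosis of the obstacle matches the paper's: this conjecture, like \cref{prob:explicit}, sits beyond the reach of the VC, average-margin, and rectangle-based lower bounds, and resolving it requires a new idea that neither you nor the paper supplies.
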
 

\vskip5pt 
\begin{updatebox}
 G\"o\"os, Harms, Imbach, and Sokolov~\cite{GoosHarmsImbachSokolov2025} refuted this conjecture, proving that $\Q_n$ has constant sign-rank. The following paragraph is no longer relevant.
\end{updatebox}

It is worth pointing out that proving \cref{conj:cubes} in the positive would likely require some new lower bound techniques, as $\wrect(\Q_n)^{-1}=O(1)$, shown in~\cite{hatami2022lower}. Note that a positive answer to \cref{conj:cubes} would also solve \cref{prob:explicit}.

\paragraph{Equality oracles, $\Pcc_0^\EQ  \subsetneq \UPP_0$:}
It is easy to show that $\EQ_n\in \UPP_0$, as its sign-rank is $3$. 
The following theorem shows that, in fact, $\Pcc_0^\EQ  \subseteq\UPP_0$. The example of \textsc{Greater-Than} shows that this inclusion is strict. It is easy to see that $\textsc{Greater-Than}  \in \UPP_0$ as its sign-rank is $2$. This fact combined with $\Rcc(\GT_n)=\Theta(\log n)$ shows that  
\[ \textsc{Greater-Than}  \in \UPP_0 \setminus  \BPP_0  \subseteq  \UPP_0 \setminus  \Pcc_0^\EQ.\]
\begin{theorem}[\cite{hatami2022lower}]
\label{eq:EqualityOracle}
For every sign matrix $F_{\cX \times \cY}$, we have $\rank_{\pm}(F) \leq 4^{\DD^{\EQ}(F)}$. In particular, $\UU(F) \le 2 \DD^\EQ(F) + O(1)$.  
\end{theorem}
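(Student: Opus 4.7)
The plan is to prove $\rank_{\pm}(F) \leq 4^{\DD^{\EQ}(F)}$ directly by induction on $d := \DD^{\EQ}(F)$; the bound $\UU(F) \leq 2\DD^{\EQ}(F) + O(1)$ then follows immediately from Paturi--Simon (\cref{thm:PaturiSimon}), since $\UU(F) \leq \log \rank_{\pm}(F) + O(1)$. The base case $d=0$ is trivial. For the inductive step I would split on the type of the root of an optimal $\EQ$-oracle protocol tree.

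If the root is a regular bit from Alice, say $a:\cX \to \{0,1\}$, the matrix $F$ decomposes into two row-submatrices $F^0, F^1$, each computable in cost $\leq d-1$. By induction each has a rank-$4^{d-1}$ sign-representation, and the block construction $U_x = (\1_{a(x)=0}\,u^0_x,\,\1_{a(x)=1}\,u^1_x)$, $V_y = (v^0_y, v^1_y)$ produces a sign-rep of $F$ in dimension $2 \cdot 4^{d-1} < 4^d$; Bob's bit is symmetric.

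The heart of the argument is the case of an $\EQ$-query $B(x,y) = \EQ(a(x), b(y))$. Extending the left and right sub-protocols arbitrarily to all inputs gives matrices $F^L, F^R$ of $\EQ$-cost $\leq d-1$ with $F = F^L$ on $\{B=0\}$ and $F = F^R$ on $\{B=1\}$; by induction they admit rank-$4^{d-1}$ sign-reps, so setting $L(x,y) := \langle u^L_x, v^L_y\rangle$ and $R(x,y) := \langle u^R_x, v^R_y\rangle$ gives sign matrices for $F^L, F^R$ respectively. I would then invoke the classical moment-curve construction realizing sign-rank $3$ for the equality matrix: for pairwise distinct scalars $\{t_i\}$ indexed by the range of $a,b$, with $c := \min_{i \neq j}(t_i - t_j)^2$, and small $\epsilon > 0$, set $p_x = (1,\, t_{a(x)},\, -t_{a(x)}^2)$ and $q_y = (\epsilon - t_{b(y)}^2,\, 2t_{b(y)},\, 1)$, so that $N(x,y) := \langle p_x, q_y\rangle = \epsilon - (t_{a(x)} - t_{b(y)})^2$ equals $\epsilon$ on $\{B=1\}$ and is at most $\epsilon - c$ on $\{B=0\}$. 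Define $U_x := (u^R_x,\, -p_x \otimes u^L_x) \in \R^{4^d}$ and $V_y := (v^R_y,\, q_y \otimes v^L_y) \in \R^{4^d}$, which have total dimension $4^{d-1} + 3 \cdot 4^{d-1} = 4^d$ and satisfy $\langle U_x, V_y\rangle = R(x,y) - N(x,y)L(x,y)$.

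The main obstacle is verifying $\sign(R - NL) = F$ everywhere, which reduces to magnitude-balancing. On $\{B=1\}$ the expression reads $R - \epsilon L$; a routine case analysis (nontrivial only when $\sign L = \sign R$) shows its sign equals $\sign R$ provided $\epsilon < |R|/|L|$ pointwise there. On $\{B=0\}$ it reads $R + |N|L$; the analysis (nontrivial only when $\sign L \neq \sign R$) shows its sign equals $\sign L$ provided $|N| > |R|/|L|$ pointwise. Since $L$ is nowhere zero (it sign-represents $F^L$) and the matrix is finite, the ratio $|R|/|L|$ is bounded above and below; hence first spreading the $\{t_i\}$ widely to make $c$, and therefore $|N|$ on $\{B=0\}$, exceed $\max |R|/|L|$, and then taking $\epsilon$ small enough to fall below $\min |R|/|L|$ on $\{B=1\}$, simultaneously secures both inequalities. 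This yields a sign-representation of $F$ in dimension exactly $4^d$, closing the induction, and the stated $\UU(F)$ bound is then an immediate consequence of \cref{thm:PaturiSimon}.
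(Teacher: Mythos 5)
Your proof is correct and follows essentially the same route as the paper: induction on $\DD^{\EQ}(F)$, peeling off the root equality query and combining sign-representations of the two subtrees via the rank-$3$ quadratic gadget (your $N(x,y)=\epsilon-(t_{a(x)}-t_{b(y)})^2$ is the paper's $E_{xy}=(a(x)-b(y))^2$ up to a shift and scaling), with the identical rank accounting $4^{d-1}+3\cdot 4^{d-1}=4^d$. Your tensor-product/vector formulation and the two-sided balancing of $\epsilon$ versus the spread of the $t_i$ are just a more explicit rendering of the paper's ``sufficiently large $k$'' Schur-product argument, and your separate treatment of plain communication bits is harmless since those can be folded into oracle queries.
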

\begin{proof} 
    We proceed by induction on $d \coloneqq \DD^{\EQ}(F)$. When $d = 1$, $F$ corresponds to a blocky matrix, which in fact has $\rank_\pm(F)\leq 3$. For larger $d$, consider a cost $d$ protocol for a sign matrix $F$ and suppose the equality query at the root of the tree is $\EQ(a(x), b(y))$, where here we assume without loss of generality $a(x)$ and $b(y)$ take integer values. Let $S_{\cX\times \cY}$ be the matrix with entries $S_{xy}=\1_{a(x)=b(y)}$.
    We branch according to the output of the first query either to the left or the right subtree of the root, each corresponding to a protocol with cost at most $d-1$.  Let the corresponding matrices for these protocols be $\Pi_1$ and $\Pi_2$, and note that  
    \[
    F = S \circ \Pi_1 + (\J-S) \circ \Pi_2,
    \]     
    where $\J \defeq \J_{\cX \times \cY}$ is the all-ones matrix.  
    By  the induction hypothesis, $\Pi_1$ and $\Pi_2$ have sign-rank at most  $\leq 4^{d-1}$. Let $\widetilde{\Pi}_1$ and $\widetilde{\Pi}_2$ be real matrices with rank at most $4^{d-1}$ that satisfy $\sign(\widetilde{\Pi}_1)=\Pi_1$ and $\sign(\widetilde{\Pi}_2)=\Pi_2$. 
    Let $E_{\cX\times \cY}$ be the rank-$3$ matrix with entries $E_{xy}= (a(x)-b(y))^2$. Note that for a sufficiently large $k$, we have 
    \[
    F=\sign(\widetilde{\Pi}_1 + k E \circ \widetilde{\Pi}_2 ). 
    \]
   Finally, we have
       \[
    \rank(\widetilde{\Pi}_1 + k \widetilde{\Pi}_2 \circ E) \le \rank(\widetilde{\Pi}_1) + \rank(\widetilde{\Pi}_2)\cdot \rank(E)\le 4^{d-1}+ 3\cdot 4^{d-1} = 4^{d}.\qedhere \]  
\end{proof}

The above theorem combined with \cref{eq:DEQvsR} shows that $\Pcc_0^\EQ \subseteq \UPP_0\cap \BPP_0$. Both the inclusions $\Pcc_0^\EQ \subsetneq \UPP_0$ and $\Pcc_0^\EQ \subsetneq \BPP_0$ are strict; the former follows from the example of \textsc{Greater-Than}  and the latter holds for \textsc{Hypercube}~\cite{HHH23,HWZ22}. The question of whether these separations can be obtained simultaneously was asked recently by \cite{harms2023randomized}.
\begin{question}[\cite{harms2023randomized} \textcolor{red}{Solved: Refuted in~\cite{GoosHarmsImbachSokolov2025}}]
\label{q:Harms}
Is it the case that $\Pcc_0^\EQ = \UPP_0\cap \BPP_0$?
\end{question}

\begin{updatebox}
 This question is refuted by G\"o\"os, Harms, Imbach, and Sokolov~\cite{GoosHarmsImbachSokolov2025} as they proved that $\Q_n$ has constant sign-rank;  recall that it also  satisfies $\Rcc(\Q_n) = O(1)$ and $\DD^{\EQ}(\Q_n) = \Theta(\log n)$. More strongly, the point-line arrangement of Goh and Hatami~\cite{gohhata2026} has constant sign-rank, and satisfies $\Rcc(F) = O(1)$ and $\DD^{\EQ}(F) = \Theta(n)$. The following two paragraphs are no longer relevant.
\end{updatebox}

Since $\textsc{Hypercube} \not\in \Pcc_0^\EQ$ and $\textsc{Hypercube} \in \BPP_0$, a positive answer to \cref{q:Harms} would imply $\textsc{Hypercube} \not\in \UPP_0$ and solve \cref{conj:cubes}. 

In the converse direction, \cref{conj:cubes}  would imply a positive answer to \cref{q:Harms} for the special case of \textsc{xor}-lifts. Indeed, if $\textsc{Hypercube} \not\in \UPP_0$, then the result of \cite{cheung2022boolean} would imply that every family of  \textsc{xor}-lift $f_n^\oplus \in \UPP_0 \cap \BPP_0$ must satisfy $\norm{f_n}_A = O(1)$ and therefore by \cref{{thm:Cayley}} and \cref{prop:EqOracle}, we must have $f_n^\oplus \in \Pcc_0^\EQ$.

\subsection{Weakly unbounded-error complexity, $\PPcc$}
The \emph{weakly unbounded-error communication complexity} of a  problem $F_{\cX\times \cY}$ is defined as 
\[
\mathsf{PP}(F)\defeq \min_{\epsilon<1/2} \Rcc_\epsilon(F)+\log\left(\frac{1}{\frac{1}{2}-\epsilon}\right),
\]
where $\Rcc_\epsilon(F)$ is the minimum cost of a public-coin randomized protocol with two-sided error at most $\epsilon$.  
 
Define $\PPcc$ and $\PPcc_0$ to be the classes of families of $n$-bit problems $F_n$ with $\mathsf{PP}(F_n)=\polylog n$ and $\mathsf{PP}(F_n)=O(1)$, respectively. It is immediate from the definitions and Newman's lemma~\cite{newman} that 
\[
\BPP\subseteq \PPcc \subseteq \UPP,\qquad \text{and} \qquad \PPcc_0=\BPP_0.
\]

As discussed before, the \textsc{Greater-Than} problem separates $\UPP_0$ from $\BPP_0=\PPcc_0$. Babai, Frankl, and Simon~\cite{bfs86} asked whether $\UPP=\PPcc$. Their question remained unanswered for over two decades, until ~\cite{buhrman2007computation,sherstov2008halfspace} independently showed that there are $2^n \times 2^n$ sign matrices $F$ with $\rank_\pm(F)= n$ but $\PPcc(F) =  n^{\Omega(1)}$. The separation was strengthened in subsequent works to $\rank_\pm(F) = n$ and $\PPcc(F)= \Omega(n)$ in \cite{MR4298067}.

The example proposed in \cite{buhrman2007computation}, in fact, belongs to $\Pcc^\NP$, and therefore also shows that $\Pcc^\NP\not\subseteq \PPcc$. It turns out that the opposite direction of this inclusion is not true either, as follows from an argument involving bounds on the rectangle ratio. 
\begin{theorem}
There exists a family of $n$-bit communication problems $F_n$ with $\PPcc(F_n)=O(\log(n))$ and $\rect(F_n)^{-1}=2^{\Omega(n)}$. In particular, $\DD^\NP(F_n)=\Omega(n)$ and $\PPcc \not\subseteq \Pcc^\NP$.
\end{theorem}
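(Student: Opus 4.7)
My plan is to exhibit an explicit family $F_n : \{-1,1\}^n \times \{-1,1\}^n \to \{0,1\}$ given by $F_n(x,y) = \1[\langle x, y\rangle > n/2]$, verify that $\PPcc(F_n) = O(\log n)$ while $\rect(F_n) \le 2^{-\Omega(n)}$, and then close the argument using \cref{thm:IW}. Since $\wrect(F) \le \rect(F)$ (take $\mu$ to be the uniform product measure in the definition of $\wrect$), this will yield $\DD^\NP(F_n) \gtrsim \log(\wrect^{-1}(F_n)) \ge \log(\rect^{-1}(F_n)) = \Omega(n)$ and hence $F_n \in \PPcc \setminus \Pcc^\NP$.

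For the $\PPcc$ upper bound, Alice and Bob pick $i \in [n]$ uniformly via shared randomness; Alice sends $x_i$ and both set $b = \1[x_i = y_i]$. Then $\Pr[b=1] = g(x,y) := 1/2 + \langle x, y\rangle/(2n)$, so $g \ge 3/4 + 1/n$ on $F_n^{-1}(1)$ and $g \le 3/4$ on $F_n^{-1}(0)$. Mixing with a biased coin (output $0$ with probability $q = 1/3 + 1/(3n)$, else output $b$) yields a protocol whose output-$1$ probability $(1-q)g$ lies strictly above $1/2$ by $\Omega(1/n)$ when $F_n(x,y) = 1$ and strictly below $1/2$ by $\Omega(1/n)$ when $F_n(x,y) = 0$. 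Hence $\PPcc(F_n) \le O(1) + \log n = O(\log n)$.

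For the rectangle bound I would treat $1$- and $0$-monochromatic rectangles separately. If $S \times T$ is $1$-monochromatic then, averaging over $y \in T$, every $x \in S$ satisfies $\langle x, \bar y\rangle > n/2$, where $\bar y := |T|^{-1}\sum_{y \in T} y \in [-1,1]^n$ has $\|\bar y\|_2^2 \le n$. Hoeffding on uniform $x \in \{-1,1\}^n$ then gives $|\{x : \langle x, \bar y\rangle > n/2\}| \le 2^n e^{-n/8}$, so $|S|, |T| \le 2^n e^{-n/8}$ and the density is at most $e^{-n/4} = 2^{-\Omega(n)}$. A $0$-monochromatic rectangle $S \times T$ has all cross-pairs at Hamming distance $\ge n/4$, i.e., $S \subseteq \overline{N_{n/4-1}(T)}$. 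By Harper's vertex-isoperimetric inequality, $|N_{n/4-1}(T)|$ is minimized when $T$ is a Hamming ball of the same size; writing $|T| \approx 2^{H(p)n}$ and applying the extremal estimate gives $|S|\cdot|T| \lesssim 2^{(H(3/4-p)+H(p))n}$, which is maximized at $p = 3/8$, producing density $\le 2^{-(2 - 2H(3/8))n + o(n)} = 2^{-\Omega(n)}$. Combining the two cases yields $\rect(F_n) \le 2^{-\Omega(n)}$, and \cref{thm:IW} finishes the proof.

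I expect the main obstacle to be the $0$-monochromatic bound: Hoeffding is ineffective on that side, since $\langle x, y\rangle \le n/2$ is the typical regime for a uniformly random pair, so a fixed-$y$ restriction gives almost all of $\{-1,1\}^n$ and the averaging trick used for $1$-mono does not bite. It is precisely here that the isoperimetric input (Harper's theorem, or Kruskal--Katona) is essential; the PP-protocol and $1$-mono arguments are otherwise routine.
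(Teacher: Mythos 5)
Your construction and argument are sound, and they follow the same basic template as the paper's proof: a Hamming-distance threshold matrix, the random-coordinate comparison protocol for the $\PPcc$ upper bound, and an isoperimetric rectangle bound fed into \cref{thm:IW}. The difference is in execution. The paper takes $F_n(x,y)=1$ iff $x$ and $y$ differ on at least $n/2$ bits, cites a classical Harper-type result of Frankl for $\rect(F_n)^{-1}=2^{\Omega(n)}$, and leaves the protocol (and its boundary bias issue) as a one-liner; you instead shift the threshold to $\langle x,y\rangle>n/2$ (distance $<n/4$), which lets you handle the $1$-monochromatic side by a soft averaging-plus-Hoeffding concentration argument and reserve Harper's theorem only for the $0$-monochromatic side, and you make the bias-mixing in the $\PPcc$ protocol explicit. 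So your proof is self-contained where the paper's is citation-based, at the cost of a more delicate case analysis; both are legitimate, and your diagnosis that the $0$-side is where the real combinatorial content lies matches the paper's reliance on the Frankl/Harper literature for exactly that kind of bound.

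Two small patches are needed but do not affect correctness. First, the integrality gives $g\ge 3/4+1/(2n)$ on $1$-inputs rather than $3/4+1/n$; with your choice $q=1/3+1/(3n)$ the acceptance probability is still $1/2+\Omega(1/n)$ on $1$-inputs and $1/2-\Omega(1/n)$ on $0$-inputs, so the $O(\log n)$ bound on $\PPcc(F_n)$ stands. Second, your bound $|S|\cdot|T|\lesssim 2^{(H(3/4-p)+H(p))n}$ for $0$-rectangles is only valid when $p\ge 1/4$ (for $p<1/4$ the complement of the $(p+1/4)n$-ball has size close to $2^n$, not $2^{H(3/4-p)n}$); in that regime one should simply use $|S|\le 2^n$ and $|T|\le 2^{H(p)n+o(n)}\le 2^{H(1/4)n+o(n)}$, which still gives density $2^{-\Omega(n)}$, so the combined bound $\rect(F_n)\le 2^{-\Omega(n)}$ and the conclusion via \cref{thm:IW} go through.
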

\begin{proof}
Let $F_n(x,y)= 1$ iff $x$ and $y$ differ on at least $n/2$ bits. It is known through classical results from combinatorics~\cite{frankl1981short} that $\rect(F_n)^{-1}=2^{\Omega(n)}$. Thus by \cref{thm:IW}, we get $\DD^\NP(F_n)=\Omega(n)$.

To see the inclusion in $\PPcc$, note that the simple protocol where two parties pick a random index $i$ uniformly at random and output $1$ iff $x_i\neq y_i$, has cost $O(\log n)$.
\end{proof}

Finally, recent works have shown that $\PPcc$ does not even contain $\UPP_0$. Indeed, \cite{MR4129280,ahmed2023communication} gave simple constructions of $n$-bit communication problems $F$ with $\rank_\pm(F)=3$ and $\PPcc(F)=\Omega(n)$.

\section{Final remarks}

We discussed several open problems that indicate significant gaps in our understanding of communication complexity and capture the limitations of the currently available techniques. For example, disproving $\BPP = \Pcc^\RP$, $\BPP \subseteq \Pcc^\NP$, or giving a negative answer to \cref{q:BPP_large_rect}  or \cref{q:uniformrectangle} requires constructing a family of matrices in $\BPP$ that is fundamentally different from all the currently known examples. Conversely, proving that any of these statements is true would be a major stride toward achieving a structural description of $\BPP$. Similarly, \cref{q:LMSS} and \cref{prob:explicit} require a new lower-bound technique for sign-rank that can reach beyond the  $\log \wrect(\cdot)^{-1}$  bound of \cref{thm:monochromaticSignRank}.

We hope that, similar to the introduction of communication classes by~\cite{bfs86}, the formal paradigm of constant-cost communication classes will catalyze future research, and efforts to establish separations between these classes will lead to the discovery of new examples and lower-bound techniques and give us a deeper understanding of communication models and their connections to other areas of theoretical computer science. 

Due to space limitations, we did not discuss quantum communication models, multi-party models, search problems, and various related query models, most notably parity decision trees. The questions that are being discussed in this article can be asked in a similar way for these models.

We conclude by presenting, in \cref{fig1} and \cref{fig:enter-label}, the known relations and separations among various classes discussed in this article. These figures include a selected list of classes, excluding easier-to-handle classes such as $\Pcc, \RP, \NP, \Pcc_0=\NP_0$.  We define the classes $\Rect$ and $\Rect_0$ to consist of matrix families with $\wrect^{-1}(\cdot)$ bounded from above by $2^{\polylog n}$ and $O(1)$, respectively. The class $\Rect$ appears in \cite{goos2019query} with the different name of $\mathsf{PM}$ (for Product Method).

\begin{figure}[H]
    \centering
\[
\begin{array}{c|cccccccccccc}
    & \Pcc_0^{\EQ} & \Pcc_0^{\RP} & \BPP_0 & \UPP_0 & \Rect_0 & \Pcc^{\EQ} & \Pcc^{\RP} & \BPP & \PPcc & \UPP & \Pcc^{\NP} & \Rect \\
\hline
\Pcc_0^{\EQ} & = & \subseteq & \subseteq & \subseteq & \subseteq & \subseteq & \subseteq & \subseteq & \subseteq & \subseteq & \subseteq & \subseteq \\
\Pcc_0^{\RP} & \not\subseteq & = & \subseteq & {\textbf{\color{red}{\circled{?}}}}  &  \subseteq & \not\subseteq  & \subseteq & \subseteq & \subseteq & \subseteq & \subseteq & \subseteq \\
\BPP_0 & \not\subseteq & {\textbf{\color{red}{\circled{?}}}}  & = & {\textbf{\color{red}{\circled{?}}}} & {\textbf{\color{red}{\circled{?}}}}  & \not\subseteq  & {\textbf{\color{red}{\circled{?}}}}  & \subseteq & \subseteq & \subseteq & {\textbf{\color{red}{\circled{?}}}}  & {\textbf{\color{red}{\circled{?}}}}  \\
\UPP_0 & \not\subseteq & \not\subseteq & \not\subseteq & = & \subseteq & \not\subseteq & \not\subseteq & \not\subseteq & \not\subseteq & \subseteq & {\textbf{\color{red}{\circled{?}}}}  & \subseteq \\
\Rect_0 & \not\subseteq & \not\subseteq & \not\subseteq & \not\subseteq & = & \not\subseteq & \not\subseteq & \not\subseteq & \not\subseteq & \not\subseteq & \not\subseteq & \subseteq \\
\Pcc^{\EQ} & \cellcolor{lightgray}{\color{White}{\not\subseteq}} & \cellcolor{lightgray}{\color{White}{\not\subseteq}} & \cellcolor{lightgray}{\color{White}{\not\subseteq}} & \cellcolor{lightgray}{\color{White}{\not\subseteq}} & \cellcolor{lightgray}{\color{White}{\not\subseteq}} & = & \subseteq & \subseteq & \subseteq & \subseteq & \subseteq & \subseteq \\
\Pcc^{\RP} & \cellcolor{lightgray}{\color{White}{\not\subseteq}} & \cellcolor{lightgray}{\color{White}{\not\subseteq}} & \cellcolor{lightgray}{\color{White}{\not\subseteq}} & \cellcolor{lightgray}{\color{White}{\not\subseteq}} & \cellcolor{lightgray}{\color{White}{\not\subseteq}} & \not\subseteq & = & \subseteq & \subseteq & \subseteq & \subseteq & \subseteq \\
\BPP & \cellcolor{lightgray}{\color{White}{\not\subseteq}} & \cellcolor{lightgray}{\color{White}{\not\subseteq}} & \cellcolor{lightgray}{\color{White}{\not\subseteq}} & \cellcolor{lightgray}{\color{White}{\not\subseteq}} & \cellcolor{lightgray}{\color{White}{\not\subseteq}} & \not\subseteq & {\textbf{\color{red}{\circled{?}}}}  & = & \subseteq & \subseteq & {\textbf{\color{red}{\circled{?}}}}  & {\textbf{\color{red}{\circled{?}}}}  \\
\PPcc & \cellcolor{lightgray}{\color{White}{\not\subseteq}} & \cellcolor{lightgray}{\color{White}{\not\subseteq}} & \cellcolor{lightgray}{\color{White}{\not\subseteq}} & \cellcolor{lightgray}{\color{White}{\not\subseteq}} & \cellcolor{lightgray}{\color{White}{\not\subseteq}} & \not\subseteq & \not\subseteq & \not\subseteq & = &  \subseteq & \not\subseteq & \not\subseteq \\
\UPP & \cellcolor{lightgray}{\color{White}{\not\subseteq}} & \cellcolor{lightgray}{\color{White}{\not\subseteq}} & \cellcolor{lightgray}{\color{White}{\not\subseteq}} & \cellcolor{lightgray}{\color{White}{\not\subseteq}} & \cellcolor{lightgray}{\color{White}{\not\subseteq}} & \not\subseteq & \not\subseteq & \not\subseteq & \not\subseteq & = & \not\subseteq & \not\subseteq \\
\Pcc^{\NP} & \cellcolor{lightgray}{\color{White}{\not\subseteq}} & \cellcolor{lightgray}{\color{White}{\not\subseteq}} & \cellcolor{lightgray}{\color{White}{\not\subseteq}} & \cellcolor{lightgray}{\color{White}{\not\subseteq}} & \cellcolor{lightgray}{\color{White}{\not\subseteq}} & \not\subseteq & \not\subseteq & \not\subseteq & \not\subseteq & \subseteq & = & \subseteq \\
\Rect & \cellcolor{lightgray}{\color{White}{\not\subseteq}} & \cellcolor{lightgray}{\color{White}{\not\subseteq}} & \cellcolor{lightgray}{\color{White}{\not\subseteq}} & \cellcolor{lightgray}{\color{White}{\not\subseteq}} & \cellcolor{lightgray}{\color{White}{\not\subseteq}} & \not\subseteq & \not\subseteq & \not\subseteq & \not\subseteq & \not\subseteq & \not\subseteq & = \\
\end{array}
\]
     \caption{The entry at a row $\mathsf{A}$ and a column $\mathsf{B}$ indicates whether $\mathsf{A}\subseteq \mathsf{B}$ or $\mathsf{A}\not\subseteq \mathsf{B}$. A question mark indicates that the relationship is unknown. The separations in grey entries follow trivially via padding. \begin{updatebox} The separations $\BPP_0 \not\subseteq \Pcc^{\EQ}$ and $\Pcc_0^{\RP} \not\subseteq \Pcc^{\EQ}$ are new and due to~\cite{GoosHarmsRiazanov2025}. See also~\cite{gohhata2026} for stronger bounds.\end{updatebox}}
    \label{fig1}
\end{figure}

\begin{figure}[H]
    \centering
    
\tikzset{every picture/.style={line width=0.75pt}} 

\begin{tikzpicture}[x=0.75pt,y=0.75pt,yscale=-.7,xscale=.9]

\draw   (171.5,225) -- (220.5,225) -- (220.5,257) -- (171.5,257) -- cycle ;
\draw   (120,283) -- (164,283) -- (164,314) -- (120,314) -- cycle ;
\draw   (39,341) -- (81,341) -- (81,370) -- (39,370) -- cycle ;
\draw   (40.5,162) -- (91.5,162) -- (91.5,195) -- (40.5,195) -- cycle ;
\draw    (81,341) -- (117.53,315.71) ;
\draw [shift={(120,314)}, rotate = 145.3] [fill={rgb, 255:red, 0; green, 0; blue, 0 }  ][line width=0.08]  [draw opacity=0] (8.93,-4.29) -- (0,0) -- (8.93,4.29) -- cycle    ;
\draw    (164,283) -- (195.13,258.84) ;
\draw [shift={(197.5,257)}, rotate = 142.18] [fill={rgb, 255:red, 0; green, 0; blue, 0 }  ][line width=0.08]  [draw opacity=0] (8.93,-4.29) -- (0,0) -- (8.93,4.29) -- cycle    ;
\draw    (61.5,340) -- (63.46,200) ;
\draw [shift={(63.5,197)}, rotate = 90.8] [fill={rgb, 255:red, 0; green, 0; blue, 0 }  ][line width=0.08]  [draw opacity=0] (8.93,-4.29) -- (0,0) -- (8.93,4.29) -- cycle    ;
\draw   (167,27) -- (223,27) -- (223,59) -- (167,59) -- cycle ;
\draw    (195.5,225) -- (195.5,62) ;
\draw [shift={(195.5,59)}, rotate = 90] [fill={rgb, 255:red, 0; green, 0; blue, 0 }  ][line width=0.08]  [draw opacity=0] (8.93,-4.29) -- (0,0) -- (8.93,4.29) -- cycle    ;
\draw   (319.5,342) -- (359.5,342) -- (359.5,370) -- (319.5,370) -- cycle ;
\draw   (320,184) -- (362,184) -- (362,212) -- (320,212) -- cycle ;
\draw    (81,355) -- (317,355.99) ;
\draw [shift={(320,356)}, rotate = 180.24] [fill={rgb, 255:red, 0; green, 0; blue, 0 }  ][line width=0.08]  [draw opacity=0] (8.93,-4.29) -- (0,0) -- (8.93,4.29) -- cycle    ;
\draw   (322,111) -- (362,111) -- (362,138) -- (322,138) -- cycle ;
\draw   (320,27) -- (369,27) -- (369,57) -- (320,57) -- cycle ;
\draw   (495,30) -- (543,30) -- (543,58) -- (495,58) -- cycle ;
\draw   (498.5,234) -- (539.5,234) -- (539.5,261) -- (498.5,261) -- cycle ;
\draw   (496,112) -- (541,112) -- (541,142) -- (496,142) -- cycle ;
\draw    (360.5,355) -- (496.02,262.69) ;
\draw [shift={(498.5,261)}, rotate = 145.74] [fill={rgb, 255:red, 0; green, 0; blue, 0 }  ][line width=0.08]  [draw opacity=0] (8.93,-4.29) -- (0,0) -- (8.93,4.29) -- cycle    ;
\draw    (519,112) -- (519.47,61) ;
\draw [shift={(519.5,58)}, rotate = 90.53] [fill={rgb, 255:red, 0; green, 0; blue, 0 }  ][line width=0.08]  [draw opacity=0] (8.93,-4.29) -- (0,0) -- (8.93,4.29) -- cycle    ;
\draw    (342,111) -- (342,60) ;
\draw [shift={(342,57)}, rotate = 90] [fill={rgb, 255:red, 0; green, 0; blue, 0 }  ][line width=0.08]  [draw opacity=0] (8.93,-4.29) -- (0,0) -- (8.93,4.29) -- cycle    ;
\draw    (518.5,233) -- (518.98,145) ;
\draw [shift={(519,142)}, rotate = 90.31] [fill={rgb, 255:red, 0; green, 0; blue, 0 }  ][line width=0.08]  [draw opacity=0] (8.93,-4.29) -- (0,0) -- (8.93,4.29) -- cycle    ;
\draw    (221.5,240) -- (317.22,201.13) ;
\draw [shift={(320,200)}, rotate = 157.9] [fill={rgb, 255:red, 0; green, 0; blue, 0 }  ][line width=0.08]  [draw opacity=0] (8.93,-4.29) -- (0,0) -- (8.93,4.29) -- cycle    ;
\draw    (498,247) -- (364.84,200.98) ;
\draw [shift={(362,200)}, rotate = 19.06] [fill={rgb, 255:red, 0; green, 0; blue, 0 }  ][line width=0.08]  [draw opacity=0] (8.93,-4.29) -- (0,0) -- (8.93,4.29) -- cycle    ;
\draw    (69.5,162) -- (317.23,58.16) ;
\draw [shift={(320,57)}, rotate = 157.26] [fill={rgb, 255:red, 0; green, 0; blue, 0 }  ][line width=0.08]  [draw opacity=0] (8.93,-4.29) -- (0,0) -- (8.93,4.29) -- cycle    ;
\draw [color={rgb, 255:red, 0; green, 0; blue, 0 }  ,draw opacity=1 ]   (223,59) .. controls (317.53,99.8) and (385.32,105.94) .. (493.37,58.72) ;
\draw [shift={(495,58)}, rotate = 156.23] [fill={rgb, 255:red, 0; green, 0; blue, 0 }  ,fill opacity=1 ][line width=0.08]  [draw opacity=0] (9.82,-4.72) -- (0,0) -- (9.82,4.72) -- cycle    ;
\draw    (496,112) -- (371.75,58.19) ;
\draw [shift={(369,57)}, rotate = 23.42] [fill={rgb, 255:red, 0; green, 0; blue, 0 }  ][line width=0.08]  [draw opacity=0] (8.93,-4.29) -- (0,0) -- (8.93,4.29) -- cycle    ;
\draw    (69.5,162) -- (164.94,61.18) ;
\draw [shift={(167,59)}, rotate = 133.43] [fill={rgb, 255:red, 0; green, 0; blue, 0 }  ][line width=0.08]  [draw opacity=0] (8.93,-4.29) -- (0,0) -- (8.93,4.29) -- cycle    ;
\draw    (342.5,183) -- (342.97,141) ;
\draw [shift={(343,138)}, rotate = 90.64] [fill={rgb, 255:red, 0; green, 0; blue, 0 }  ][line width=0.08]  [draw opacity=0] (8.93,-4.29) -- (0,0) -- (8.93,4.29) -- cycle    ;

\draw (175.04,231.3) node [anchor=north west][inner sep=0.75pt]  [rotate=-0.26]  {$\mathsf{BPP}_{0}$};
\draw (127.04,287.3) node [anchor=north west][inner sep=0.75pt]  [rotate=-0.26]  {$\mathsf{P}_{0}^{\mathsf{RP}}$};
\draw (47.04,344.3) node [anchor=north west][inner sep=0.75pt]  [rotate=-0.26]  {$\mathsf{P}_{0}^{\mathtt{EQ}}$};
\draw (45.04,168.3) node [anchor=north west][inner sep=0.75pt]  [rotate=-0.26]  {$\mathsf{UPP}_{0}$};
\draw (174.04,32.3) node [anchor=north west][inner sep=0.75pt]  [rotate=-0.26]  {$\mathsf{Rect}_{0}$};
\draw (326.48,346.41) node [anchor=north west][inner sep=0.75pt]  [rotate=-0.26]  {$\mathsf{P}^{\mathtt{EQ}}$};
\draw (323.98,189.41) node [anchor=north west][inner sep=0.75pt]  [rotate=-0.26]  {$\mathsf{BPP}$};
\draw (330.04,115.3) node [anchor=north west][inner sep=0.75pt]  [rotate=-0.26]  {$\mathsf{PP}$};
\draw (327.04,33.3) node [anchor=north west][inner sep=0.75pt]  [rotate=-0.26]  {$\mathsf{UPP}$};
\draw (502.04,35.3) node [anchor=north west][inner sep=0.75pt]  [rotate=-0.26]  {$\mathsf{Rect}$};
\draw (504.04,238.3) node [anchor=north west][inner sep=0.75pt]  [rotate=-0.26]  {$\mathsf{P}^{\mathsf{RP}}$};
\draw (504.04,117.3) node [anchor=north west][inner sep=0.75pt]  [rotate=-0.26]  {$\mathsf{P}^{\mathsf{NP}}$};

\end{tikzpicture}

\caption{$\mathsf{A}\rightarrow \mathsf{B}$ indicates $\mathsf{A}\subseteq \mathsf{B}$.}
    \label{fig:enter-label}
\end{figure}

\paragraph{Acknowledgements.} We wish to thank Tsun-Ming Cheung, Mika G\"o\"os, Lianna Hambardzumyan, Nathan Harms, 
 Shachar Lovett, and Alexander Razborov for their helpful comments on a draft of this article. We thank Anthony Ostuni for the erratum in \cref{sec:analyticrank}.

\bibliographystyle{alpha}
\bibliography{mybib.bib}

\end{document}